\newtheorem{claim}{Claim}[section]
\newtheorem{lemma}[claim]{Lemma}
\newtheorem{theorem}{Theorem}
\newtheorem{proposition}[claim]{Proposition}
\newtheorem{corollary}[claim]{Corollary}
\newtheorem{definition}[claim]{Definition}
\theoremstyle{definition}
\newtheorem{remark}[claim]{Remark}
\def\<{\langle}
\def\>{\rangle}
\def\eps{{\varepsilon}}
\def\id{{\rm I}}
\def\sT{{\sf T}}
\def\proj{{\cal P}}
\def\qproj{{\cal Q}}
\def\P{{\mathbb P}}
\def\prob{{\mathbb P}}
\def\E{{\mathbb E}} %expectation
\def\one{{\sf 1}}
\def\reals{\mathbb{R}}
\def\bbG{\mathbb{G}}
\def\bbV{\mathbb{V}}
\def\normal{{\sf N}}
\def\cC{{\cal C}}
\def\cS{{\cal S}}
\def\ualpha{\underline{\alpha}}
\def\Tr{{\sf {Tr}}}
\def\range{{\sf {range}}}
\def\de{{\rm d}}
\def\cG{\mathcal{G}}
\def\ind{\mathbb{I}}
\newcommand\norm[1]{\left\lVert{#1}\right\rVert}
\newcommand\abs[1]{\left\lvert{#1}\right\rvert}
\def\bsl{\backslash}
\newcommand\myeqref[1]{{Eq.\,\eqref{#1}}}
\def\sC{{\sf Q}}
\def\mle{{\; \preceq \;}}
\def\mge{{\; \succeq \;}}
\def\bG{{\bf G}}
\def\tN{{H}}
\def\tL{{\widetilde{L}}}
\def\tJ{{\widetilde{J}}}
\def\barW{{ \overline{W}}}
\def\sspan{{\sf span}}
\def\ls{{\;\lesssim\;}}
\def\gs{{\;\gtrsim\;}}
\def\barn{\bar{n}}
\def\nCl{{\binom{[n]}{\le 2}}}
\def\nCe{{\binom{[n]}{2}}}
\def\bG{{\mathbf G}}
\DeclareMathAlphabet{\mathpzc}{OT1}{pzc}{m}{it}
\def\frakL{{\mathfrak{L}}}
\def\val{{\sf Val}}
\def\cC{\mathcal{C}}
\author{Yash~Deshpande\thanks{Department of Electrical Engineering,
    Stanford University} \;\; and \;\;Andrea~Montanari\thanks{Department of
    Electrical Engineering
and Department of Statistics, Stanford University}}
\title{Improved Sum-of-Squares Lower Bounds for\\ Hidden Clique and
  Hidden Submatrix Problems}
\begin{document}
\maketitle

\begin{abstract}
Given a large data matrix $A\in\reals^{n\times n}$, we consider the problem of determining
whether its entries are i.i.d. with some known marginal distribution
$A_{ij}\sim P_0$,
or instead $A$ contains a principal submatrix $A_{\sC,\sC}$ whose entries
have marginal distribution $A_{ij}\sim P_1\neq P_0$. As a special
case, the hidden (or planted) clique problem requires to find a planted
clique in an otherwise uniformly random graph.
 
Assuming unbounded computational resources, this hypothesis testing
problem is statistically solvable  provided $|\sC|\ge C \log n$
for a suitable constant $C$. However, despite substantial effort, no
polynomial time algorithm is known that succeeds with high probability
when $|\sC| = o(\sqrt{n})$.  Recently Meka  and Wigderson 
\cite{meka2013association}, proposed a method to establish lower bounds
within the Sum of Squares (SOS)
semidefinite hierarchy.  

Here we consider the degree-$4$ SOS relaxation, and study the
construction of \cite{meka2013association} to prove that SOS
fails unless $k\ge C\, n^{1/3}/\log n$. 
An argument presented by Barak
implies that this lower bound cannot be substantially  improved unless  the witness
construction is changed in the proof. Our proof uses the moments method
to bound the spectrum of a certain random association scheme, i.e. 
a symmetric random matrix whose rows and columns are indexed by the
edges of an Erd\"os-Renyi random graph.
\end{abstract}

\tableofcontents

\section{Introduction}

Characterizing the computational complexity of statistical estimation
and statistical learning problems is an outstanding challenge. On one
hand,  a large part of research in this area focuses on the analysis
of specific polynomial-time algorithms, thus establishing upper bounds on the problem
complexity. 
On the other, information-theoretic techniques are used to derive fundamental
limits beyond which no algorithm can solve the statistical problem
under study.
While in some cases algorithmic and information-theoretic bounds
match, in many other examples 
a large gap remains in which the problem is solvable assuming
unbounded resources but simple algorithms fail.
The hidden clique and hidden submatrix problems are prototypical
examples  of this category.

In the hidden submatrix problem, we are given a symmetric data matrix $A\in
\reals^{n\times n}$ and two probability distributions $P_0$ and $P_1$
on the real line, with $\E_{P_0}\{X\} = 0$ and $\E_{P_1}\{X\} = \mu >0$.
We want to distinguish between two hypotheses (we set by convention
$A_{ii}=0$ for all $i\in [n]=\{1,2,\dots,n\}$):
\begin{description}
\item[Hypothesis $H_0$:] The entries of $A$ above the diagonal $(A_{ij})_{i<j}$ are i.i.d. 
  random variables with the same marginal  law $A_{ij}\sim P_0$.
\item[Hypothesis $H_1$:] Given a (hidden) subset $\sC\subseteq [n]$
the entries  $(A_{ij})_{i<j}$ are independent with 
\begin{align}
  A_{ij} \sim \begin{cases}
    P_1 &\text{ if } \{i, j\}\subset \sC,\\
    P_0 &\text{ otherwise.}
  \end{cases}
\end{align}
Further,  $\sC$ is a uniformly random subset conditional on its size,
that is fixed $|\sC|=k$.
\end{description}
Of interest is also the estimation version of this problem, whereby the 
special subset $\sC$ is known to exist, and an algorithm is sought
that identifies $\sC$ with high probability. 

This model encapsulates the basic computational challenges underlying 
a number of problems in which we need to estimate a matrix that is
both sparse and low-rank. Such problems arise across genomics, signal
processing, social network analysis, and machine learning
 \cite{shabalin2009finding,johnstone2009consistency,oymak2012simultaneously}.

The hidden clique (or `planted clique') problem \cite{jerrum1992large}  is a special case of
the above setting, and has attracted considerable interest within
theoretical computer science. Let $\delta_{x}$ denote
the Dirac delta distribution at the point $x\in \reals$. The hidden clique problem
corresponds to the distributions 
\begin{align}
P_1 = \delta_{+1}\, ,\;\;\;\; P_0 =
\frac{1}{2}\,\delta_{+1}+\frac{1}{2}\, \delta_{-1}\, . 
\end{align}
In this case, the  data matrix $A$  can be interpreted as the
adjacency matrix of a graph $G$ over $n$ vertices (whereby $A_{ij}=+1$
encodes presence of edge $\{i, j\}$ in $G$, and $A_{ij} = -1$
its absence). Under hypothesis $H_1$, the set $\sC$ induces a clique in the (otherwise)
random graph $G$.
For the rest of this introduction, we shall focus on the hidden clique
problem, referring to Section \ref{sec:Main} for a formal statement of
our general results.

The largest clique in a uniformly random graph has size
$2\log_2 n + o(\log n)$, with high probability \cite{grimmett1975colouring}.
Thus, allowing for exhaustive search, the hidden clique problem can be solved
when $k \ge (2+\eps)\log_2 n$. 
On the other hand, despite significant efforts \cite{alon1998finding, ames2011nuclear, dekel2011finding,feige2010finding, 
deshpande2014finding}, no polynomial time algorithm
is known to work when $k = o(\sqrt{n})$.
As mentioned above, this is a prototypical case for which a large gap
exists between  performances of well-understood polynomial-time
algorithms, and the ultimate information-theoretic (or statistical)
limits. This remark motivated an ongoing quest for computational lower bounds. 

Finding the maximum clique in a graph is a classical NP-hard problem \cite{karp1972reducibility}.
Even  a very rough
approximation to its size is hard to find
\cite{hastad1996clique,khot2001improved}. In particular, it is hard to
detect the presence of a clique of size $n^{1-\eps}$ in a graph with
$n$ vertices.

Unfortunately, worst-case reductions do not imply computational lower
bounds for distributions of random instances dictated by natural
statistical models. Over the last two years there have been
fascinating advances in crafting careful reductions that preserve the instances distribution
in specific cases
\cite{berthet2013complexity,ma2013computational,chen2014statistical,hajek2014computational,cai2015computational}.
This line of work typically establishes that several detection problems
(sparse PCA, hidden submatrix, hidden community) are at least as hard
as the hidden clique problem with $k = o(\sqrt{n})$. 
This approach has two limitations:
\begin{itemize}
\item[$(i)$] It yields  conditional statements relying on the unproven
  assumption that the hidden clique problem is hard. In absence of any
  `completeness' result, this is a strong assumption that calls for further scrutiny.
\item[$(ii)$] Reductions among instance distributions are somewhat
  fragile with respect changes in the distribution. For instance, it
  is not known whether the hidden submatrix problem with Gaussian
  distributions $P_0=\normal(0,1)$ and
  $P_1= \normal(\mu,1)$ is at least as hard as the hidden
  clique problem, although a superficial look might suggest that they
  are very similar.
\end{itemize}

A complementary line of attack consists in proving unconditional lower
bounds for broad classes of algorithms.
 In an early contribution, Jerrum \cite{jerrum1992large}
 established such a lower bound for a class of Markov Chain Monte Carlo
 methods. Feldman et al. 
\cite{feldman2012statistical} considered a query-based formulation of
the problem  and proved a similar result for `statistical
algorithms.' 
Closer to the present paper is the work of Feige and Krauthgamer
\cite{feige2000finding},
who analyzed the Lov\'asz-Schrijver semidefinite programming (SDP)
hierarchy.
Remarkably, these authors proved that $r$ rounds of this hierarchy
(with complexity $n^{O(r)}$) fail to detect the hidden clique unless
$k\gtrsim\sqrt{n}/2^r$. (Here and below we write $f(n,r,\dots)\gtrsim g(n,r,\dots)$ if there exists a constant
$C$ such that $f(n,r,\dots) \ge C\, g(n,r,\dots)$.)

While this failure of the  Lov\'asz-Schrijver hierarchy provides
insightful evidence towards the hardness of the hidden-clique problem, 
an even stronger indication could be obtained by establishing an
analogous result for the Sum of Squares (SOS) hierarchy
\cite{shor1987class,lasserre2001global,parrilo2003semidefinite}.
This SDP hierarchy unifies most convex relaxations developed for this
and similar problems. Its close connection with the unique games
conjecture has led to the idea that SOS might indeed be an
`optimal' algorithm for a broad class of problems \cite{barak2014sum}.
Finally, many of the low-rank estimation problems mentioned above
include naturally quadratic constraints, that are most naturally
expressed within the SOS hierarchy.

The SOS hierarchy is formulated in terms of polynomial optimization
problems. The level of a relaxation in the hierarchy corresponds to the largest
degree $d$ of any monomial whose value is  explicitly treated as a
decision variable. 
Meka and Wigderson \cite{meka2013association} proposed a construction
of a sequence of  feasible solutions, or witnesses (one for each degree $d$), that can be used to prove
lower bounds for the hidden clique problem within the SOS hierarchy. 
The key technical step consisted in proving that a certain moment
matrix is positive semidefinite: unfortunately this part of their proof
contained a fatal flaw. 

In the present paper we undertake the more modest task of analyzing
the Meka-Wigderson witness for the level  $d=4$ of the SOS hierarchy. 
This is the first level at which the SOS hierarchy differs
substantially from the baseline spectral algorithm of Alon,
Krivelevich and Sudakov \cite{alon1998finding}, or from the
Lov\'asz-Schrijver hierarchy.
We prove that this relaxation fails unless 
\begin{align}
  k \gtrsim \frac{n^{1/3}}{\log n} .\label{eq:FirstResult}
\end{align}
Notice that  the natural guess would be that the SOS hierarchy
fails (for any bounded $d$) whenever $k = o(\sqrt{n})$.  While our
result falls short of establishing this, an argument presented in
\cite{BarakLectureNotes} shows that this is a limitation of the
Meka-Wigderson construction. In other words, by refining our analysis
it is impossible to improve the bound (\ref{eq:FirstResult})  except
--possibly-- by removing the logarithmic factor.

Apart from the lower bound on the hidden clique problem, our analysis
provides two additional sets of results:
\begin{itemize}
\item We apply a similar witness construction to the hidden submatrix
problem with entries distributions $P_0=\normal(0,1)$,
$P_1=\normal(\mu,1)$. 
We define a polynomial-time computable statistical test  that is based
on a degree-$4$ SOS relaxation of a nearly optimal combinatorial test.
We show that this fails unless $k \gtrsim \mu^{-1}n^{1/3}/\log n$. 
\item As mentioned above, the main technical contribution consists in
  proving that a certain random matrix is (with high probability)
  positive semidefinite. Abstractly, the random matrix in question is
  function of an underlying (Erd\"os-Renyi) random graph $G$ over $n$
  vertices. The matrix has rows/columns indexed by subsets of size at
  most $d/2=2$, and elements depending by the subgraphs of $G$ induced by
  those subsets. We shall loosely refer to this type of random matrix
  as to a \emph{random association scheme}.

 In order to  prove that this witness is positive semidefinite, we
 decompose the linear space on which it acts into irreducible
 representation of the group of permutations over $n$ objects. We then
 use the moment method to characterize each submatrix defined by this
 decomposition, and paste together the results to obtain our final
 condition for positivity. 

We believe that both the matrix definition
 and the proof technique are so natural that they are likely to be
 useful in related problems. 
\item As an illustration of the last point, our analysis covers the
  case of  Erd\"os-Renyi graphs with sublinear average degree (namely,
  with average degree of order $n^{1-a}$, $a<1/12$). In particular, it
  is easy to derive sum-of-squares lower bounds for finding cliques in
  such graphs.
\end{itemize}

The rest of the paper is organized as follows. In Section
\ref{sec:Main} we state our main technical result, which concerns the
spectrum of random association schemes. We then show that it implies
lower bounds for the hidden clique and hidden submatrix
problem. Section \ref{sec:Strategy} presents a brief outline of the
proof. 
Finally, Section \ref{sec:Proof} presents the proof of our main
technical result.

While this paper was being written, we became aware through 
\cite{BarakLectureNotes} that --in still unpublished work-- Meka,
Potechin and Wigderson proved that the degree-$d$ SOS relaxation is
unsuccessful unless $k\gtrsim n^{-1/d}$. It would be interesting to
compare the proof techniques.

\section{Main results}
\label{sec:Main}

In this section we present our results. Subsection \ref{sec:Positivity}
introduces a feasible random association scheme that is a slight
generalization of the witness developed in \cite{meka2013association}
(for the degree $d=4$ SOS). We state conditions implying that this
matrix is positive semidefinite with high probability. These
conditions are in fact obtained by specializing a more general result
stated in Proposition  \ref{prop:psddecomp}.
We then derive implications for hidden cliques and hidden submatrices.

\subsection{Positivity of the Meka-Wigderson witness}
\label{sec:Positivity}

We will denote by $\bbG(n, p)$ the undirected Erd\"os-Renyi random
graph model on $n$ vertices, with edge probability $p$.
A graph $G = (V, E)\sim \bbG(n,p)$ has  vertex set $V=[n]\equiv\{1,2,\dots,n\}$, and edges set $E$
defined by letting, for each
$i < j \in [n]$, $\{i, j\}\in E$ independently 
with probability $p$. 

The random association scheme $M=M(G,\ualpha)$ can be thought as a generalization of
the adjacency matrix of $G$, depending on the
graph $G$ and  parameters $\ualpha = (\alpha_1, \alpha_2, \alpha_3, \alpha_4)
\in \reals^{4}$. In order to define the matrix $M$ we first require to
set up some notation. For an integer $r$, we let $\binom{[n]}{r}$ denote the set
of all subsets of $[n]$ of size \emph{exactly} $r$, and $\binom{[n]}{\le r}$ 
denote the set of all subsets of size \emph{at most} $r$. We also let $\emptyset$
denote the empty set.

We shall often identify the collections of subsets of size one, $\binom{[n]}{1} = \{ \{i\}:\, i\in [n]\}$
with $[n]$. Also, we identify $\nCe$ with the set 
of ordered pairs $\{(i, j): i, j\in [n], i<j\}$. If $A = \{i,j\}$ with
$i<j$ we call $i$ ($j$) the head (respectively, tail) of $A$ denoted by $h(A)$ (respectively,
$t(A)$). 

Given the graph $G$ and a set $A\subseteq [n]$, we let $G_A$ denote the subgraph
of $G$ induced by $A$. We define the indicator $\cG_A$ 
\begin{align}
  \cG_{A} &= \begin{cases}
    1 &\text{ if } G_A \text{ is a clique},\\
    0 &\text{ otherwise.}
  \end{cases}
\end{align}
For convenience of notation we let 
$\cG_{ij} \equiv \cG_{\{i, j\}}$ and $g_A = \cG_{A} - \E\{\cG_A\} $ be the centered
versions of the variables $\cG_{ij}$.  We also set
$g_{ii} \equiv 0$. 

We can now define the matrix $M = M(G, \ualpha) \in \reals^{\nCl\times\nCl}$ as
follows. For any pair of sets $A, B \in \nCl$ we have:
\begin{align}
  M_{A, B} &= \alpha_{\abs{A\cup B}} \cG_{A\cup B}\, ,
\end{align}
with $\alpha_0  \equiv 1$. 
\begin{theorem} \label{thm:main}
  Suppose $\ualpha$, $p$ satisfy:
  \begin{align}
    \alpha_1 = \kappa \,,\;\;\;\;
    \alpha_2 = 2\frac{\kappa^2}{p} \, ,\;\;\;\;
    \alpha_3 = \frac{\kappa^3}{p^3}\, ,\;\;\;\;
    \alpha_4 = 8\frac{\kappa^4}{p^6} \, ,\;\;\;\;
    p \ge c (\kappa\log n)^{1/4} n^{1/6},
  \end{align}
  for some  $\kappa \in [\log n/n,  n^{-2/3}/\log n]$ and $c$ a large enough absolute constant.  
  If $G \sim \bbG(n, p)$ is a random graph with edge probability $p$ then, for every $n$ large enough, 
\begin{align}
\prob\big\{M(G, \ualpha)\succeq 0\big\}\ge 1 -\frac{1}{n}\, .
\end{align}
\end{theorem}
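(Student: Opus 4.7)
The plan is to split $M = \E[M] + N$ where $N := M - \E[M]$, and exploit the fact that even though $M$ itself depends on $G$, the \emph{law} of $M$ is invariant under the $S_n$-action on $\nCl$, so $\E[M]$ is $S_n$-equivariant. I would begin by decomposing the ambient space $\reals^{\nCl}\cong \reals \oplus \reals^{[n]} \oplus \reals^{\nCe}$ into isotypic components of the symmetric group: a three-dimensional trivial piece (one copy at each of the sizes $|A|\in\{0,1,2\}$), a multiplicity-two copy of the standard $(n-1,1)$ representation (from $\binom{[n]}{1}$ and $\binom{[n]}{2}$), and a single copy of the Young $(n-2,2)$ representation, which has dimension $n(n-3)/2$. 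By Schur's lemma $\E[M]$ restricted to each isotypic piece acts as a matrix of size equal to the multiplicity, tensored with an identity on the irreducible factor. Reading off the blocks from $\E[M_{A,B}] = \alpha_{|A\cup B|} p^{\binom{|A\cup B|}{2}}$ together with the standard Johnson-scheme eigenvalue table, one finds for instance that on the $(n-2,2)$ component $\E[M]$ acts as the scalar $\alpha_2 p - 2\alpha_3 p^3 + \alpha_4 p^6 = 2\kappa^2 - 2\kappa^3 + 8\kappa^4 = \Theta(\kappa^2)$, and a direct $2\times 2$ and $3\times 3$ computation shows that the standard and trivial blocks are both strictly PSD with minimum eigenvalues of order $\kappa^2$. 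The specific constants $2$ and $8$ appearing in $\alpha_2 = 2\kappa^2/p$ and $\alpha_4 = 8\kappa^4/p^6$ (as opposed to the naive choice $\kappa^r/p^{\binom{r}{2}}$) are calibrated precisely to secure these positivity margins on the trivial and standard blocks.

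The main work is controlling $N$ blockwise. For each pair of isotypic projectors $(\Pi_\lambda, \Pi_\mu)$ I would bound $\|\Pi_\lambda N \Pi_\mu\|$ by the moment method: for an even integer $q = \Theta(\log n)$,
\begin{align}
  \prob\bigl(\|\Pi_\lambda N \Pi_\mu\| \ge t\bigr) \le t^{-2q}\, \E\tr\bigl((\Pi_\lambda N \Pi_\mu N^\top)^{q}\bigr),
\end{align}
and expand the trace as a sum over closed walks of length $2q$ on $\nCl$. Each walk $A_0 \to A_1 \to \cdots \to A_{2q}=A_0$ contributes $\prod_i \alpha_{|A_i\cup A_{i+1}|} \cdot \E\bigl[\prod_i g_{A_i\cup A_{i+1}}\bigr]$, and since $g_S$ is a centered product of independent edge-indicators, the edge-wise factorization of the expectation shows that only walks whose underlying edge-multiset has each edge covered at least twice can contribute. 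Enumerating such walks by combinatorial type (number of distinct vertices, distinct edges, edge multiplicities) and balancing vertex counts against the per-edge moments should yield a per-block spectral estimate.

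The hardest step will be the moment-method enumeration on the $(n-2,2)$ block. Here the dimension is $\Theta(n^2)$, the eigenvalue floor on the block is only $O(\kappa^2)$, and all three entry regimes ($|A\cup B| \in \{2,3,4\}$) couple non-trivially in the walks; getting a tight bound requires tracking the joint combinatorics of the walk on $\nCl$ and the induced edge-multigraph on $[n]$. The quantitative condition $p \gtrsim (\kappa\log n)^{1/4} n^{1/6}$ in the hypothesis is then exactly what is needed for this block-level spectral norm to match the $\Theta(\kappa^2)$ floor after applying Markov's inequality with $q = \Theta(\log n)$. Finally, I would assemble the pieces by writing $x^\top M x = \sum_\lambda x_\lambda^\top \E[M] x_\lambda + \sum_{\lambda,\mu} x_\lambda^\top N x_\mu$ in the isotypic decomposition and using the block norms (together with a union bound over the $O(1)$ pairs of isotypic types) to conclude $M \succeq 0$ with probability $\ge 1 - 1/n$.
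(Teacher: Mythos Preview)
Your high-level architecture (isotypic decomposition of $\reals^{\nCl}$ under $S_n$, blockwise moment method on the deviation) matches the paper's in spirit, but there is a genuine gap that would cause your argument to stall at the weaker threshold $\kappa \ll n^{-3/4}$ (equivalently $k \ll n^{1/4}$) rather than the claimed $\kappa \ll n^{-2/3}$.

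The gap is in your treatment of the $(n-2,2)$ block. A generic moment-method enumeration of the kind you describe (``closed walks \ldots\ each edge covered at least twice'') applied to $\Pi_{(n-2,2)}\,(M-\E M)\,\Pi_{(n-2,2)}$ will not beat $\alpha_4\, n^{3/2}$. The reason is that the deviation contains ``low-degree'' pieces such as $(A,B)\mapsto \alpha_4 p^3 g_{h(A)h(B)}$ (a single centered edge indicator): viewed as an $\binom{n}{2}\times\binom{n}{2}$ matrix this is essentially $g\otimes \one_n\one_n^{\sT}$ and has operator norm of order $n^{3/2}$, not $n$. Since the $(n-2,2)$ eigenvalue floor is $\Theta(\kappa^2)$, matching against $\alpha_4 n^{3/2}\approx \kappa^4 n^{3/2}/p^6$ only gives $\kappa \lesssim n^{-3/4}$. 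The paper overcomes this not by a sharper walk count but by an \emph{algebraic} observation: after expanding $\cG_{A\cup B}$ into monomials in the centered variables $g_{ij}$, the dangerous low-degree terms (the class-$1$ pieces and the ``parallel'' class-$2$ pieces, denoted $\tJ_{1,\nu}$ and $\tJ_{2,2},\tJ_{2,3},\tJ_{2,4},\tJ_{2,5}$ in the paper) have range contained in $\bbV_0\oplus\bbV_1$, so $\proj_2$ annihilates them identically. Only after removing these pieces does the moment method on the survivors give the needed $\|\proj_2(\,\cdot\,)\proj_2\|\lesssim \alpha_4\, n$. Your proposal neither isolates this monomial decomposition nor notes the cancellation; inserting the dense projector $\Pi_{(n-2,2)}$ directly into the trace and then ``enumerating walks by combinatorial type'' does not recover it, because the projectors destroy the walk structure.

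A second, related point: you attribute the hypothesis $p \ge c(\kappa\log n)^{1/4} n^{1/6}$ to the $(n-2,2)$ diagonal block matching its $\Theta(\kappa^2)$ floor. In fact, once the cancellation above is used, that diagonal block is \emph{not} the bottleneck. The binding constraint is the $2\times 2$ minor coupling the standard and $(n-2,2)$ isotypic components, schematically
\[
\begin{pmatrix} n\kappa^3 & -\,n^{3/2}\kappa^4 \\ -\,n^{3/2}\kappa^4 & \kappa^2 \end{pmatrix} \succeq 0,
\]
whose determinant condition $n\kappa^5 \gg n^3\kappa^8$ is what forces $\kappa \ll n^{-2/3}$ and, after tracking the $p$-dependence through all terms, yields the stated condition on $p$. (The paper organizes this via two nested Schur complements---first eliminating the $\emptyset$ row/column, then the singleton block---rather than working with the full $3\times 3$ and $2\times 2$ isotypic matrices directly, but that reorganization is not where your proposal goes wrong.)
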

The proof of this theorem can be found in Section \ref{sec:Proof}. As
mentioned above, a more general set of conditions that imply $M(G,
\ualpha)\succeq 0$ with high probability is given in Proposition
\ref{prop:psddecomp}.
The proof of Theorem \ref{thm:main} consists in checking that the
conditions of Proposition \ref{prop:psddecomp} hold and deriving the consequences.

\subsection{A Sum of Squares lower bound for Hidden Clique}

We denote by $\bbG(n,p,k)$ hidden clique model, i.e. the distribution
over graphs $G=(V,E)$, with vertex set $V=[n]$, a subset $\sC\subseteq
[n]$ of $k$ uniformly random vertices forming a clique, and every other edge present
independently with probability $p$.

The SOS relaxation of degree $d=4$ for the maximum clique problem 
\cite{tulsiani2009csp,BarakLectureNotes} is a semidefinite program,
whose decision variable is a matrix $X\in \reals^{\nCl\times\nCl}$:
\begin{align}
    \text{ maximize }  &\sum_{i\in [n]} X_{\{i\},\{i\}}\, , 
    \label{eq:sosclique}\\
\text{ subject to: } & X\succeq 0,\;\;\;\;X_{S_1,S_2} \in [0, 1]\, ,\nonumber\\
  & X_{S_1,S_2}= 0 \quad\text{ when }
    S_1\cup S_2\text{ is not a clique in } G \, ,\nonumber \\
    &X_{S_1,S_2}= X_{S_3,S_4}\quad \text{for all }
    S_{1}\cup S_{2} = S_{3}\cup S_{4} \, ,\nonumber \\
    &X_{\emptyset,\emptyset} = 1. \nonumber
  \end{align}
Denote by $\val(G;d=4)$ the value of this optimization problem for
graph $G$ (which is obviously an upper bound on the size of the maximum
clique in $G$). We can then try to detect the clique (i.e. distinguish
hypothesis $H_1$ and $H_0$ defined in the introduction), by using the
test statistics
\begin{align}
T(G) = \begin{cases}
0 & \mbox{ if $\val(G;4)\le c_* k$,}\\
1 & \mbox{ if $\val(G;4)> c_* k$.}
\end{cases}\label{eq:CliqueTestSOS}
\end{align}
with $c_*$ a numerical constant. The rationale for this test is as
follows: if we replace $\val(G;4)$ by the size of the largest clique,
then the above test is essentially optimal, i.e. detects the clique
with high probability as soon as $k\gtrsim \log n$ (with $c_*=1$). 

We then have the following immediate consequence of Theorem \ref{thm:main}.
\begin{corollary}\label{cor:clique}
  Suppose $G\sim \bbG(n, 1/2)$. Then, with probability at least
  $1-n^{-1}$, the degree-$4$ SOS relaxation has value
\begin{align}
  \val(G;4)\gtrsim \frac{n^{1/3}}{\log n}\, .
\end{align}
\end{corollary}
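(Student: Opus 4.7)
The plan is to produce an explicit feasible solution to the SOS program (\ref{eq:sosclique}) whose objective is $\gtrsim n^{1/3}/\log n$, and to invoke Theorem~\ref{thm:main} for the only non-trivial constraint, namely positive semidefiniteness. Concretely, taking $p = 1/2$ (matching $\bbG(n,1/2)$) and $\kappa = C_0/(n^{2/3}\log n)$ for a small enough absolute constant $C_0 > 0$, I would define $\ualpha$ by the formulas in Theorem~\ref{thm:main} and set $X := M(G,\ualpha)$. A direct computation gives $(\kappa \log n)^{1/4} n^{1/6} = C_0^{1/4}$, so for $C_0$ chosen so that $C_0^{1/4}\le 1/(2c)$ the hypothesis $p \geq c(\kappa \log n)^{1/4} n^{1/6}$ is met; the range condition $\kappa\in[\log n/n,\,n^{-2/3}/\log n]$ is also satisfied for $n$ large. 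Hence Theorem~\ref{thm:main} applies and $X \succeq 0$ holds with probability at least $1 - 1/n$.

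Next I would check the remaining SOS constraints. Because $M_{A,B} = \alpha_{|A\cup B|}\cG_{A\cup B}$ depends only on the union $A\cup B$, the consistency constraint $X_{S_1,S_2} = X_{S_3,S_4}$ whenever $S_1 \cup S_2 = S_3 \cup S_4$ is built into the definition; likewise the support constraint $X_{S_1,S_2}=0$ when $S_1 \cup S_2$ fails to be a clique in $G$ is automatic from $\cG_{S_1\cup S_2}=0$. The normalization $X_{\emptyset,\emptyset}=\alpha_0\cG_{\emptyset}=1$ holds by the convention $\alpha_0 = 1$ and the fact that the empty graph is vacuously a clique, and for $C_0$ small each $\alpha_i$ lies in $[0,1]$, so $X_{S_1,S_2} = \alpha_{|S_1\cup S_2|}\cG_{S_1\cup S_2} \in [0,1]$ as required.

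Finally, a single vertex is always a clique, so $\cG_{\{i\}}=1$ and $X_{\{i\},\{i\}}=\alpha_1=\kappa$ for each $i\in[n]$. Therefore, on the event of Theorem~\ref{thm:main},
\begin{equation*}
\val(G;4) \;\ge\; \sum_{i\in[n]} X_{\{i\},\{i\}} \;=\; n\kappa \;=\; \frac{C_0\,n^{1/3}}{\log n}\,,
\end{equation*}
which is the desired bound. There is no serious obstacle specific to this corollary: it is essentially a bookkeeping reduction to Theorem~\ref{thm:main}, where the genuine difficulty lies — controlling the spectrum of the random association scheme $M(G,\ualpha)$ and proving PSD-ness with high probability via the representation-theoretic decomposition and moment-method calculations outlined in the introduction.
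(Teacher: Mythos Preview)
Your proposal is correct and follows essentially the same approach as the paper: set $p=1/2$, choose $\kappa = C_0 n^{-2/3}/\log n$ for a small constant, invoke Theorem~\ref{thm:main} to get $M(G,\ualpha)\succeq 0$ with high probability, verify feasibility of $M$ for the SOS program, and read off the objective value $n\kappa \gtrsim n^{1/3}/\log n$. You supply more detail than the paper in checking the feasibility constraints (the paper simply says ``it is easy to check''), but the argument is otherwise identical.
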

\begin{proof}
  Consider $M(\ualpha, G)$ from Theorem \ref{thm:main} (with
  $p = 1/2$). For $M(\ualpha, G)$ to be positive semidefinite with high
  probability, we  set $\kappa = c_0\, n^{-2/3}/\log n$ for some absolute
  constant $c_0$. It is easy to check that $M(\ualpha,G)$ is a
  feasible point for the optimization problem  (\ref{eq:sosclique}).
Recalling that $M_{\{i\},\{i\}} = \alpha_1=\kappa$, we conclude that 
the  objective function at this point is  $n\kappa = c_0n^{1/3}/\log
n$, and the claim follows. 
\end{proof}
We are now in position to derive a formal lower bound on the test (\ref{eq:CliqueTestSOS}).
\begin{theorem}\label{thm:clique}
  The degree-$4$  Sum-of-Squares test for the maximum clique problem,
  defined in Eq.~(\ref{eq:CliqueTestSOS}),
  fails to distinguish between $G\sim \bbG(n, k, 1/2)$ and $G\sim \bbG(n,
  1/2)$ with high probability  if $k \lesssim  n^{1/3}/\log n$. 

In particular, $T(G)=1$ with high probability both for $G\sim \bbG(n, k,1/2)$, and for $G\sim \bbG(n, 1/2)$.
\end{theorem}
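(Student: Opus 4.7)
The plan is to derive both halves of the statement directly from Corollary~\ref{cor:clique}: the null case is essentially a one-line consequence, while the planted case requires only a simple coupling together with monotonicity of $\val(\cdot;4)$ under edge additions.

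First, under $G\sim\bbG(n,1/2)$, Corollary~\ref{cor:clique} guarantees that $\val(G;4)\ge C_0\,n^{1/3}/\log n$ with probability at least $1-1/n$, for some absolute $C_0>0$. I would interpret the hypothesis $k\lesssim n^{1/3}/\log n$ in the theorem as $k\le C_1\,n^{1/3}/\log n$ with $C_1< C_0/c_*$; then on the good event $c_* k<\val(G;4)$, giving $T(G)=1$ with high probability.

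For the planted case $G\sim\bbG(n,k,1/2)$, the key observation is that the feasible set of the semidefinite program \eqref{eq:sosclique} is monotone under edge additions: the only graph-dependent constraints take the form ``$X_{S_1,S_2}=0$ when $S_1\cup S_2$ is not a clique in $G$'', and adding edges can only remove such constraints. Hence $\val(\cdot;4)$ is nondecreasing with respect to edge inclusion. Using this, I would couple $G_0\sim\bbG(n,1/2)$ and $G_1\sim\bbG(n,k,1/2)$ as follows: sample $G_0$ together with an independent uniformly random $k$-subset $\sC\subseteq[n]$, and define $G_1$ to be $G_0$ with every edge inside $\sC$ added. A direct check of marginals shows $G_1\sim\bbG(n,k,1/2)$, and by construction $G_1\supseteq G_0$; so $\val(G_1;4)\ge\val(G_0;4)$. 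Applying Corollary~\ref{cor:clique} to $G_0$ then yields $T(G_1)=1$ with high probability by the same threshold calculation as in the null case.

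Combining the two cases, the test outputs $1$ with probability $1-o(1)$ under both $\bbG(n,1/2)$ and $\bbG(n,k,1/2)$, so no decision rule based on thresholding $\val(G;4)$ at $c_* k$ can distinguish these two distributions. There is no substantive obstacle beyond Corollary~\ref{cor:clique} itself: both bullets of the theorem follow by inspection once the corollary and the monotonicity of $\val(\cdot;4)$ are in place, and the only care required is in matching the implicit constant in ``$k\lesssim n^{1/3}/\log n$'' to the ratio $C_0/c_*$.
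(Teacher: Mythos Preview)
Your proof is correct, and the null case is handled identically to the paper. For the planted case, however, you take a different route: you argue via edge-monotonicity of $\val(\,\cdot\,;4)$ and a coupling $G_1\supseteq G_0$ with $G_0\sim\bbG(n,1/2)$, whereas the paper instead restricts to the induced subgraph $G_{\sC^c}$ on the complement of the clique vertex set, observes that $G_{\sC^c}\sim\bbG(n-k,1/2)$, and uses the inequality $\val(G;4)\ge \val(G_{\sC^c};4)$ (obtained by zeroing out all variables indexed by sets touching $\sC$). Both arguments are short and sound; yours is arguably cleaner in that it applies Corollary~\ref{cor:clique} at the original size $n$ rather than $n-k$, while the paper's vertex-restriction argument avoids the need for a coupling and works directly on the planted graph itself.
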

\begin{proof}[Proof of Theorem \ref{thm:clique}]
Assume $k \le c_1 n^{1/3}/\log n$ for $c_1$ a sufficiently small constant. 
For $G\sim \bbG(n, 1/2)$, Corollary \ref{cor:clique} immediately
implies that $\val(G;4)\ge c_*k$, with high probability.

For $G\sim \bbG(n, k , 1/2)$, we obviously have $\val(G;4)\ge k$
(because SOS gives a relaxation). To obtain a larger lower bound,
recall that $\sC\subseteq [n]$ indicates the vertices in the clique. 
  The subgraph $G_{\sC^c}$ induced by the set of vertices $\sC^c=[n]\bsl\sC$
  is distributed as $\bbG(n-k, 1/2)$. Further, we obviously have
\begin{align}
\val(G;4)\ge \val(G_{\sC^c};4)\,.
\end{align}
Indeed we can always set to $0$ variables indexed by sets $A\subseteq
[n]$ with $A\not\subseteq \sC^c$.  
Hence,  applying again Corollary \ref{cor:clique}, we deduce that, with probability
  $1-(n-k)^{-1}$, $\val(G;4) \ge C(n-k)^{1/3}/\log (n-k)$, which is
  larger than $c_*k$. Hence $T(G)=1$ with high probability.
\end{proof}

\subsection{A Sum of Squares lower bound for Hidden Submatrix }

As mentioned in the introduction, in the hidden submatrix problem we 
are given  a matrix $A\in\reals^{n\times n}$, which is generated
according with either hypothesis $H_0$ or hypothesis $H_1$ defined there. 
To avoid unnecessary technical complications, we shall consider distributions 
$P_0 = \normal(0, 1)$ (for all the entries in $A$ under $H_0$) and
$P_1 = \normal(\mu, 1)$
(for the entries $A_{ij}$, $i,j\in\sC$ under $H_1$) . 

In order to motivate our definition of an  SOS-based statistical test,  
we begin by introducing a nearly-optimal  combinatorial test, call
it $T_{\rm comb}$. This test essentially look for a principal
submatrix of $A$ of dimension $k$, with average value larger than
$\mu/2$. Formally
\begin{align}
T_{\rm comb}(A) \equiv \begin{cases}
1 & \mbox{ if $\exists x\in\{0,1\}^n$ such that $\sum_{i\in [n]}x_i\le
  k$, and }\\
& \mbox{\;\;\;\; and $\sum_{i,j\in[n],i<j}  A_{ij}x_ix_j\ge \frac{1}{2}\binom{k}{2}\mu$,}\\
0& \mbox{ otherwise.}
\end{cases}
\end{align}
A straightforward union-bound calculation shows that $T_{\rm comb}(\,\cdot\,)$
succeeds with high probability provided $k\gtrsim
\mu^{-2}\log n$.

As in the previous section, the degree-$4$ SOS relaxation of the set
of binary vectors $x\in \{0,1\}^n$
consists in the following convex set of matrices 
\begin{align}
\cC_4(n)\equiv\Big\{X\in \reals^{\nCl\times \nCl}:\;\;\;&\, X\succeq
                                                          0,\;\;\;\;X_{S_1,S_2}
                                                          \in [0, 1]\,,\;\;\;
                                                          \; X_{\emptyset,\emptyset} = 1,\nonumber\\
    &X_{S_1,S_2}= X_{S_3,S_4}\quad \text{for all }
    S_{1}\cup S_{2} = S_{3}\cup S_{4} \Big\}\, .
\end{align}
This suggests the following relaxation of the test $T_{\rm
  comb}(\,\cdot\,)$:
\begin{align}
T(A) = \begin{cases}
1 & \mbox{ if there exists $X\in \cC_4(n)$ such that
 $\sum_{i\in[n]}X_{\{i\},\{i\}}\le k$, and}\\
& \mbox{\;\;\;\;\;$\sum_{i,j\in[n],i<j}  A_{ij}X_{\{i\},\{j\}} \ge c_*\mu k^2$,}\\
0 & \mbox{otherwise}\, .
\end{cases}\label{eq:SubMatrixTest}
\end{align}

We begin by stating a corollary of Theorem \ref{thm:main}.
\begin{corollary}\label{cor:hiddensub}
  Assume $A$ is distributed according to hypothesis  $H_0$,
  i.e. $A_{ij} \sim \normal(0, 1)$ for all $i, j\in [n]$. 
Then, with probability at least  $1-2n^{-1}$, there exists $X\in \cC_4(n)$ such that
\begin{align}
 \sum_{i\in[n]}X_{\{i\},\{i\}}\lesssim \frac{ n^{1/3}}{\log n}\, ,\;\;\;\;  \sum_{i,j\in[n],i<j}
  A_{ij}X_{\{i\},\{j\}} \gtrsim \frac{ n^{2/3}}{(\log n)^2}\, .\label{eq:ValueX}
\end{align}
\end{corollary}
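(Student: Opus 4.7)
The plan is to reduce to Theorem \ref{thm:main} by extracting an Erd\"os-Renyi graph from $A$ via sign-thresholding, using the associated Meka-Wigderson matrix $M(G,\ualpha)$ as the SOS witness $X$, and finally verifying the two numerical bounds in (\ref{eq:ValueX}).

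First, define a graph $G=(V,E)$ on $V=[n]$ by declaring $\{i,j\}\in E$ iff $A_{ij}\ge 0$. Under $H_0$ the $A_{ij}$ are i.i.d.\ $\normal(0,1)$, so the edge indicators are i.i.d.\ Bernoulli$(1/2)$, i.e.\ $G\sim \bbG(n,1/2)$. Set $p=1/2$, choose $\kappa = c_0 n^{-2/3}/\log n$ for a small absolute constant $c_0$, and take $(\alpha_1,\alpha_2,\alpha_3,\alpha_4)$ as in Theorem \ref{thm:main}. A direct check shows $p\ge c(\kappa\log n)^{1/4}n^{1/6}$ (both sides are $\Theta(1)$), so Theorem \ref{thm:main} applies and yields $M(G,\ualpha)\succeq 0$ with probability at least $1-1/n$. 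Next I would verify that $X\equiv M(G,\ualpha)\in\cC_4(n)$: positivity is given; $X_{\emptyset,\emptyset}=\alpha_0\cG_\emptyset=1$; the consistency relation $X_{S_1,S_2}=X_{S_3,S_4}$ whenever $S_1\cup S_2=S_3\cup S_4$ is immediate from the definition $M_{A,B}=\alpha_{|A\cup B|}\cG_{A\cup B}$; finally, $X_{S_1,S_2}\in[0,1]$ follows from $\cG_{A\cup B}\in\{0,1\}$ together with the elementary inequalities $\alpha_j\le 1$ for $j=1,\dots,4$, which hold because $\kappa\ll 1$ and $p$ is a constant.

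For the diagonal bound I would observe $M_{\{i\},\{i\}}=\alpha_1\cG_{\{i\}}=\kappa$ deterministically, since a single vertex always induces a clique. Hence $\sum_i X_{\{i\},\{i\}}=n\kappa=c_0 n^{1/3}/\log n$, giving the first inequality in (\ref{eq:ValueX}). For the objective,
\begin{align*}
\sum_{i<j}A_{ij}X_{\{i\},\{j\}} \;=\; \alpha_2\sum_{i<j}A_{ij}\,\one\{A_{ij}\ge 0\}\, .
\end{align*}
A one-line Gaussian computation gives $\E[A\,\one\{A\ge 0\}]=1/\sqrt{2\pi}$ and $\Var(A\,\one\{A\ge 0\})\le \E[A^2]=1$, so the sum over $i<j$ has mean $\binom{n}{2}/\sqrt{2\pi}=\Theta(n^2)$ and variance $\le \binom{n}{2}$. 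Chebyshev's inequality then implies that this sum is at least $c_1 n^2$ with probability at least $1-1/n$. Combining with $\alpha_2=2\kappa^2/p=\Theta(n^{-4/3}/(\log n)^2)$ yields the desired $\sum_{i<j}A_{ij}X_{\{i\},\{j\}}\gtrsim n^{2/3}/(\log n)^2$.

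A union bound over the two events (PSD-ness of $M$ and concentration of the linear statistic) gives the probability $\ge 1-2/n$ asserted in the corollary. The only step that requires any real work is Theorem \ref{thm:main} itself, which is assumed; the remaining arithmetic is mechanical. The mildest subtlety is that $G$ and the linear statistic $\sum A_{ij}\cG_{ij}$ are not independent, but this does not affect the argument because both conclusions (PSD-ness and concentration) are obtained as high-probability events and combined by a union bound on the joint probability space of $A$.
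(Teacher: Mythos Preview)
Your proof is correct and follows essentially the same approach as the paper: threshold the Gaussian entries to obtain an Erd\"os--Renyi graph, take the Meka--Wigderson matrix $M(G,\ualpha)$ as the witness, invoke Theorem~\ref{thm:main} for positive semidefiniteness, and verify the two bounds by direct computation plus a simple concentration estimate for $\sum_{i<j}A_{ij}\cG_{ij}$. The only cosmetic difference is that the paper thresholds at a generic constant level $\lambda$ (so $p=\Phi(-\lambda)$) rather than at $0$; this has no bearing on the argument or the resulting bounds.
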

\begin{proof}
Fix $\lambda$ a sufficiently large constant and let $G$
  be  graph with adjacency matrix $\cG$ given by $\cG_{ij} =
  \ind(A_{ij} \ge \lambda)$. Note that this is an Erd\"os-Renyi random
  graph $G\sim \bbG(n,p)$ with edge probability $p = \Phi(-\lambda)$. 
(Throughout  this proof, we let $\phi(z) \equiv e^{-z^2/2}/\sqrt{2\pi}$
denote the Gaussian density,  and  $\Phi(z)\equiv
\int_{-\infty}^z\phi(t) \, \de t$ the Gaussian distribution function.)

 We choose $X = M(G, \ualpha)$ a random association scheme, where $\ualpha$ is set according
  to Theorem \ref{thm:main}, with 
\begin{align}
\kappa = \frac{c_2}{n^{2/3}\log n}\, , 
\end{align}
with $c$ a suitably small constant.
This ensures that  the conditions
  of Theorem \ref{thm:main} are satisfied, whence $X\in\cC_4(n)$ with
  high probability.
Further, by definition
\begin{align}
\sum_{i\in[n]}X_{\{i\},\{i\}} = n\kappa = \frac{c_2\, n^{1/3}}{\log n}\, .
\end{align}

 It remains to check that the second inequality in
  (\ref{eq:ValueX}) hold. We have
\begin{align}
 \sum_{i,j\in[n],i<j}  A_{ij}X_{\{i\},\{j\}}  =
    \frac{2\kappa^2}{p}\sum_{i,j\in[n], i<j} A_{ij} \cG_{ij} \, .
\end{align}
Note that
\begin{align}
\E\Big\{\sum_{i,j\in[n], i<j} A_{ij} \cG_{ij}\Big\} =
  \binom{n}{2}\, \E\big\{A_{12}\,\ind(A_{12}\ge \lambda)\big\} =
  \binom{n}{2} \, \phi(\lambda)
\, .
\end{align}
Note that the random variables $(A_{ij} \cG_{ij})_{i<j}$ are
independent and subgaussian.  By a standard concentration-of-measure
argument we have, with probability at least $1-n^{-2}$, for a suitably
small constant $c'$,
$\sum_{ i<j} A_{ij} \cG_{ij}\ge c' n^2\phi(\lambda)$ and hence
\begin{align}
\sum_{i,j\in[n],i<j}  A_{ij}X_{\{i\},\{j\}}  \gtrsim \kappa^2
  n^2\gtrsim  \frac{n^{2/3}}{(\log n)^2}\, .
\end{align}
\end{proof}

\begin{theorem}
  Consider the Hidden Submatrix problem with entries' distributions
 $P_0 = \normal(0, 1)$, and $P_1 = \normal(\mu, 1)$.

  Then, the degree-$4$  Sum-of-Squares test 
  defined in Eq.~(\ref{eq:SubMatrixTest}),
  fails to distinguish between hypotheses $H_0$ and $H_1$  if $k \lesssim \mu^{-1}n^{1/3}/\log n$. 
In particular, $T(A)=1$ with high probability both under $H_0$ and
under $H_1$. 
\end{theorem}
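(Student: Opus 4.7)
The plan is to exhibit, for every $k$ in the stated range, a (random) feasible matrix $X\in\cC_4(n)$ that forces the test~\eqref{eq:SubMatrixTest} to return $T(A)=1$ with high probability under each of $H_0$ and $H_1$. Since the test has binary output, this suffices to establish the failure-to-distinguish claim.

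Under $H_1$, I would take the natural rank-one lift of the indicator of the planted set. Writing $\bar v_S=\ind(S\subseteq\sC)$ for $S\in\nCl$ and $X=\bar v\bar v^T$, i.e.\ $X_{S_1,S_2}=\ind(S_1\cup S_2\subseteq\sC)$, membership in $\cC_4(n)$ is immediate: PSD of rank one, $X_{\emptyset,\emptyset}=1$, entries in $\{0,1\}$, and equal-union consistency. The first test constraint is saturated, $\sum_i X_{\{i\},\{i\}}=|\sC|=k$, and the bilinear objective reduces to $\sum_{i<j\in\sC}A_{ij}$, which under $H_1$ is Gaussian with mean $\binom{k}{2}\mu$ and variance $\binom{k}{2}$. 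A routine tail bound then gives $\sum_{i<j\in\sC}A_{ij}\ge c_*\mu k^2$ with high probability whenever $\mu k$ is not too small, which holds comfortably in the regime of interest. Hence $T(A)=1$ under $H_1$.

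Under $H_0$ I would start from Corollary~\ref{cor:hiddensub}, which supplies a random $X_0\in\cC_4(n)$ with
\begin{align}
\sum_i X_{0,\{i\},\{i\}} \le C_1\,\frac{n^{1/3}}{\log n},\qquad \sum_{i<j}A_{ij}\,X_{0,\{i\},\{j\}}\ge C_2\,\frac{n^{2/3}}{(\log n)^2}
\end{align}
with high probability. To match the witness to the scale~$k$ appearing in the test, I would scale $X_0$ by a parameter $\lambda\in(0,1]$ while preserving $X_{\emptyset,\emptyset}=1$ via the convex combination
\begin{align}
X \;=\; \lambda\,X_0 + (1-\lambda)\,E,
\end{align}
where $E$ has a $1$ in the $(\emptyset,\emptyset)$ entry and $0$ elsewhere (so $E\succeq 0$). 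The check $X\in\cC_4(n)$ is routine: PSD as a convex combination; $X_{\emptyset,\emptyset}=\lambda+(1-\lambda)=1$; off-diagonal entries lie in $[0,\lambda]\subseteq[0,1]$; and the equal-union consistency constraints are preserved since $E$ is nonzero only when both sets are empty. The key point is that $E$ contributes zero both to the diagonal sum over singletons and to the bilinear form over distinct pairs, so each of these quantities is multiplied by $\lambda$.

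Finally, I would pick $\lambda\in(0,1]$ to saturate the first test constraint, $\lambda\sum_i X_{0,\{i\},\{i\}}=k$. The bilinear sum then becomes at least a constant multiple of $k\,n^{1/3}/\log n$, and demanding that this exceed $c_*\mu k^2$ is exactly the condition $k\lesssim \mu^{-1}n^{1/3}/\log n$ stated in the theorem. Hence $T(A)=1$ under $H_0$ as well. The real technical work is already absorbed by Theorem~\ref{thm:main} and Corollary~\ref{cor:hiddensub}; what remains is the feasibility check for the convex combination (routine) and the arithmetic tuning of~$\lambda$. The only genuinely new move is the scaling trick, which is what promotes the clique-type bound of Corollary~\ref{cor:hiddensub} into the $\mu$-dependent submatrix bound claimed here.
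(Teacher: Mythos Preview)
Your $H_0$ argument is correct and matches the paper's, and in fact your convex-combination device $X=\lambda X_0+(1-\lambda)E$ is more careful than the paper, which simply asserts that $sX_0\in\cC_4(n)$ for $s\in[0,1]$ without addressing the constraint $X_{\emptyset,\emptyset}=1$.

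Your $H_1$ argument, however, has a genuine gap. With $X$ the rank-one lift of $\ind_\sC$, the objective $\sum_{i<j\in\sC}A_{ij}$ has mean $\binom{k}{2}\mu$ and standard deviation $\sqrt{\binom{k}{2}}$, so the event $\sum_{i<j\in\sC}A_{ij}\ge c_*\mu k^2$ holds with high probability only when $(1/2-c_*)\mu k\to\infty$. The theorem's hypothesis $k\lesssim \mu^{-1}n^{1/3}/\log n$ is an \emph{upper} bound on $\mu k$, not a lower one, so your appeal to ``the regime of interest'' is not justified by the statement you are proving. The paper avoids this entirely by a monotonicity reduction: under $H_1$ one can write $A=\mu\,\one_\sC\one_\sC^{\sT}+\widetilde A$ with $\widetilde A$ distributed as under $H_0$; since every $X\in\cC_4(n)$ has $X_{\{i\},\{j\}}\in[0,1]$, the map $A\mapsto\sum_{i<j}A_{ij}X_{\{i\},\{j\}}$ is entrywise nondecreasing, so $T(\widetilde A)=1$ implies $T(A)=1$. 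This reduces $H_1$ to the $H_0$ case already handled, and works uniformly in $\mu$ and $k$ without any lower bound on $\mu k$. Replacing your $H_1$ step with this monotonicity argument closes the gap.
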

\begin{proof}
First consider $A$ distributed according to hypothesis $H_0$.
Note that, if $X_0\in\cC_4(n)$ and $s\in [0,1]$ is a scaling factor,
then $s\, X_0\in\cC_4$. Therefore (by choosing $s = ckn^{-1/3}\log n$
for a suitable constant $c$) Corollary \ref{cor:hiddensub} implies
that  with high probability there exists $X\in \cC_4(n)$ such that
\begin{align}
 \sum_{i\in[n]}X_{\{i\},\{i\}}\le k\, ,\;\;\;\;  \sum_{i,j\in[n],i<j}
  A_{ij}X_{\{i\},\{j\}}\gtrsim \frac{ k\, n^{1/3}}{\log n}\, .
\end{align}
Therefore, for  $\mu\, k\le c\, n^{1/3}/\log n$ with $c$ a
sufficiently small constant, we have $\sum_{i<j}
A_{ij}X_{\{i\},\{j\}}\ge c_*\mu\, k^2$ and therefore $T(A) =1$ with
high probability.

Consider next $A$ distributed according to hypothesis $H_1$.
Note that $A = \mu\, \one_{\sC}\one_{\sC}^{\sT} + \widetilde{A}$,
where $\one_{\sC}$ is the indicator vector of set $\sC$, and
$\widetilde{A}$ is distributed according
to $H_0$. Since $\sum_{i<j}A_{ij}X_{\{i\},\{j\}}$ is increasing in $A$, we also have that
$T(\widetilde{A}) = 1$ implies $T(A)=1$. 
As shown above, for $\mu\, k\le c\, n^{1/3}/\log n$, we have
$T(\widetilde{A}) =1$ with high probability, and hence $T(A)=1$.
\end{proof}

\section{Further definitions and proof strategy}
\label{sec:Strategy}

In order to prove $M(G,\ualpha)\succeq 0$, we will actually study a new matrix $N(G, \ualpha) \in \reals^{\nCl
\times\nCl}$ defined as follows:
\begin{align} \label{eq:tMdef}
  N_{A, B} &= \alpha_{\abs{A\cup B}} \prod_{i\in A\bsl B, j\in B\bsl
             A} \cG_{ij}.
\end{align}
Notice that $M_{A, B} = N_{A, B}\cG_A\cG_B$, i.e. $M$ is obtained from
$N$ by zeroing  columns (rows) indexed by sets $A, B$ that do not 
 induce cliques in $G$. Thus,  $N\mge 0$ implies $M\mge 0$. 

We also define the matrix $\tN \in \reals^{\left(\binom{[n]}{1} \cup
    \nCe \right)\times \left( \binom{[n]}{1} \cup \nCe\right)}$ 
that is the Schur complement of $N$ with respect to entry
$N_{\emptyset,\emptyset} = 1$. Formally:
\begin{align}
  \tN_{A, B} &=  N_{A, B} - \alpha_{\abs{A}}\alpha_{\abs{B}}\, ,
\end{align}
where, as before, we define $\alpha_0 = 1$.
Furthermore we denote by $\tN_{a,b}$, for $a , b \in \{1, 2\}$, the
restriction of $\tN$ to rows indexed by $\binom{[n]}{a}$ and columns
indexed by $\binom{[n]}{b}$.  (This abuse of notation will not be a
source of confusion in what follows, since we will always use
explicit values in $\{1,2\}$ for the subscripts $a,b$. )

Since $\tN$ is the Schur complement of $N$, $\tN\succeq 0$ implies
$N\succeq 0$ and hence $M\succeq 0$. The next section is devoted to
prove $\tN\succeq 0$: here we sketch the main ingredients.

Technically, we control the spectrum of $\tN$ by first 
computing eigenvalues and eigenspaces of its expectation $\E\tN$
and then controlling the random part $\tN-\E\tN$ by the moment method,
i.e. computing moments of the form
$\E\Tr\{(\tN-\E\tN)^{2m}\}$. The key challenge is that the simple
triangular inequality 
$\lambda_{\rm min}(\tN)\succeq \lambda_{\rm min}(\E\tN)-
\|\tN-\E\tN\|_2$ is too weak for proving the desired result. We
instead decompose  $\tN$ in its blocks $\tN_{1,1}$, $\tN_{1,2}$,
$\tN_{2,2}$ and prove the inequalities stated in Proposition
\ref{prop:psddecomp},  cf. Eqs.~(\ref{eq:11PSD}) to
(\ref{eq:22PSD}). Briefly, these allow us to conclude that:
\begin{align}
  \tN_{1, 1} &\mge 0. \\
  \tN_{2, 2} &\mge \tN_{1, 2}^\sT \tN_{1, 1}^{-1}\tN_{1,2}, \label{eq:22schur}
\end{align}
which are the Schur complement conditions guaranteeing $\tN\mge 0$. 
While characterizing $\tN_{1,1}$  is relatively easy (indeed this
block is
essentially the adjacency matrix of $G$), the most challenging part of the proof
consists in showing a sufficient condition for \myeqref{eq:22schur}
(see \myeqref{eq:22PSD} below). In order to prove this
bound, we need to decompose $\tN_{2,2}$ and $\tN_{1,2}$ along the
eigenspaces of $\E \tN_{2,2}$, and carefully control each of the
corresponding sub-blocks.

In the rest of this section we demonstrate the essentials of our
strategy to show the weaker assertion $\tN_{2, 2} \mge 0$. 
We will assume that $p$ is order one, for concreteness 
$p = 1/2$ which corresponds to the hidden clique problem. 
It suffices to show that
\begin{align}
\E\tN_{2, 2} \mge \E\tN_{2, 2} -\tN_{2, 2}. \label{eq:22PSDweak}
\end{align}
The expected value $\E\tN_{2, 2}$ has 3 distinct eigenspaces
$\bbV_0, \bbV_1, \bbV_2$ that form an orthogonal decomposition of
$\reals^{\nCe}$. Crucially, these spaces admit a simple
description as follows:
\begin{align}
  \bbV_0 &\equiv \{v \in \reals^{\nCe}: \exists u\in\reals \text{ s.t. }
            v_{\{i, j\}} = u \mbox{ for all } i<j\} \, ,\\
  \bbV_1 &\equiv \{v \in \reals^{\nCe}: \exists u\in\reals^{n},
           \text{ s.t. } \<\one_n, u\> = 0 \text{ and } 
\quad v_{\{i, j\}} = u_i + u_j \mbox{ for all } i<j\}\, ,\\
  \bbV_{2} &\equiv (\bbV_{0}\oplus\bbV_{1})^{\perp}.
\end{align}
If $\proj_a$ is the orthogonal projector onto $\bbV_a$ we have that
$\E\tN_{2, 2} = \lambda_0 \proj_0 + \lambda_1\proj_1 + \lambda_2\proj_2$
where $\lambda_0 \approx n^2\kappa^4, \lambda_1 \approx n\kappa^3$ and
$\lambda_2\approx \kappa^2$ (see Proposition \ref{prop:EM22eig} for a 
formal statement). 

Now, consider the entry indexed by $\{i, j\}, \{k, \ell\}\in \nCe$:
\begin{align}
  (\tN_{2, 2})_{\{i, j\}, \{k, \ell\}} &= -\alpha_2^2 + \alpha_4
  \cG_{ik}\cG_{i\ell}\cG_{jk}\cG_{j\ell} \\
  &= -\alpha_2^2 + \alpha_4 (p+g_{ik})(p + g_{i\ell})(p + g_{jk})(p+g_{j\ell})\\
  &= -\alpha_2^2 + \alpha_4 p^4 + \alpha_4 p^3 (g_{ik} + g_{i\ell} +
  g_{jk} + g_{j\ell}) \nonumber\\ &\quad+ \alpha_4 p^2 (g_{ik}g_{i\ell} + g_{ik}g_{jk}
  g_{jk}g_{j\ell} + g_{i\ell}g_{j\ell} + g_{ik}g_{j\ell} + g_{i\ell}g_{jk})
  \nonumber\\
 &\quad + \alpha_4 p (g_{ik}g_{i\ell}g_{jk} + g_{ik}g_{jk}g_{j\ell} +
  g_{ik}g_{i\ell}g_{j\ell} + g_{i\ell}g_{jk}g_{j\ell}) + \alpha_4
  g_{ij}g_{i\ell}g_{jk}g_{j\ell}. \label{eq:N22approxdecomp}
\end{align}
The decomposition \myeqref{eq:N22approxdecomp} holds only when $\{i, j\}$ and $\{k, \ell\}$ are disjoint. 
Since
the number of pairs $\{i, j\}, \{k, \ell\}$ that intersect are at most
$n^3 \ll n^4$, it is natural to conjecture that these pairs are
negligible, and in this outline we shall indeed assume that this is
true (the complete proof deals with these pairs as well). The random portion
$\E\tN_{2, 2} - \tN_{2, 2}$ involves the last 15 terms of the above
decomposition. Each term is indexed by a pair $(\eta, \nu)$ where $1\le \eta\le
4$ denotes the number of $g_{ij}$ variables in the term and $1\le \nu \le
\binom{4}{\eta}$ the exact choice of $\eta$ (out of 4) variables used. In 
accordance with notation used in the proof, we let $\tJ_{\eta, \nu}$ denote
the matrix with $\{i, j\}, \{k ,\ell\}$ entry is the $(\eta, \nu)$ entry in the
decomposition \myeqref{eq:N22approxdecomp}. See Table \ref{tab:ribbondef} and
\myeqref{eq:tJdef} for a formal definition of the matrices $\tJ_{\eta, \nu}$. 
Hence we obtain (the $\approx$ below is due to the  intersecting pairs, which we have ignored):
\begin{align}
\tN_{2, 2} - \E\tN_{2, 2} \approx \sum_{\eta\le 4}\sum_{\nu\le
    \binom{4}{\eta}} \tJ_{\eta, \nu}\, .
\end{align}
We are therefore left with the task of proving
\begin{align}
   \E\tN_{2, 2} &\mge Q \equiv
    -\sum_{\eta}\sum_{\nu} \tJ_{\eta, \nu}. \label{eq:22PSDweak2}
\end{align}
Viewed in the decomposition given by $\bbV_0, \bbV_1, \bbV_2$, \myeqref{eq:22PSDweak2}
is satisfied if:
\begin{align}
  \begin{pmatrix}
    \lambda_0 &0 & 0\\
    0 &\lambda_1 &0 \\
    0 &0 &\lambda_2
  \end{pmatrix} \mge
  \begin{pmatrix}
    \norm{\proj_0 Q \proj_0}_2 & \norm{\proj_0 Q \proj_1}_2& \norm{\proj_0 Q \proj_2}_2 \\ 
    \norm{\proj_1 Q \proj_0}_2 & \norm{\proj_1 Q \proj_1}_2& \norm{\proj_1 Q \proj_2}_2\\
    \norm{\proj_2 Q \proj_0}_2 & \norm{\proj_2 Q \proj_1}_2& \norm{\proj_2 Q \proj_2}_2\\
  \end{pmatrix} \label{eq:22PSDproj}
\end{align}
The  bulk of the proof is devoted to developing operator norm bounds for the
matrices $\proj_a \tJ_{\eta, \nu}\proj_b$ that hold   with high
probability. We then bound $\proj_a Q\proj_b$ using triangle inequality
\begin{align}
  \norm{\proj_a Q\proj_b}_2 &\le \sum_{\eta, \nu} \big\|\proj_a\tJ_{\eta, \nu}\proj_b\big\|_2.\label{eq:triangle}
\end{align}

The matrices $\tJ_{4, 1}, \tJ_{3, \nu}, \tJ_{2, 1}, \tJ_{2, 6}$ turn out to have an approximate
``Wigner''-like behavior, in the following sense.  
Note that these are symmetric matrices of size $\binom{n}{2}\approx n^2/2$ 
with random zero-mean entries bounded by $\alpha_4$. If their
entries were \emph{independent}, they would have operator norms of order 
$\alpha_4 \sqrt{n^2/2} \approx \kappa^4 n$ \cite{furedi1981eigenvalues}.
Although the entries are actually not independent, the conclusion still holds 
for $\tJ_{4, 1}, \tJ_{3, \nu}, \tJ_{2, 1}, \tJ_{2, 6}$ and they have operator norms
of order $\kappa^4 n$. Hence 
$\|\proj_a \tJ_{\eta, \nu}\proj_b\|_2 \le \|\tJ_{\eta, \nu}\|_2 \approx \kappa^4 n$
for these cases. 

We are now left with the cases $(\tJ_{1, \nu})_{1\le \nu\le 4}$ and $(\tJ_{2, \nu})_{2\le \nu\le 5}$. These
require more care, since their typical norms are significantly larger
than $n$. For instance consider $\tJ_{1, \nu}$ where
\begin{align}
  (\tJ_{1, \nu})_{\{i, j\}, \{k, \ell\}} &= g_{ik}.
\end{align}
Viewed as a matrix in $\reals^{n^2\times n^2}$, $\tJ_{1, \nu}$ corresponds to 
the matrix $\alpha_4 g\otimes (\one_n\one_n)^\sT$ where $\otimes$ denotes the standard
Kronecker product and $g \in \reals^{n\times n}$ is the matrix with $(i, j)$ entry
being $g_{ij}$. By standard results on Wigner random matrices \cite{furedi1981eigenvalues},  $\norm{g}_2 \lesssim \sqrt{n}$ with
high probability. Hence:
\begin{align}
  \norm{g \otimes \one_n\one_n^\sT}_2 = \norm{g}_2
  \norm{\one_n\one_n^\sT}_2 \lesssim n^{3/2},
\end{align}
with high probability. This suggests that $\|\tJ_{1, \nu}\|_2 \lesssim \alpha_4 n^{3/2} \approx \kappa^4 n^{3/2}$ with
high probability. This turns out to be the correct order for all the matrices
$\tJ_{1, \nu}$ and $\tJ_{2, \nu}$ under consideration. 

This heuristic calculation shows the need to be careful with these terms. Indeed, a 
naive application 
of this results yields that $\norm{\proj_a Q\proj_b}_2 \ls \kappa^4 n^{3/2}$. Recalling
\myeqref{eq:22PSDproj}, this imposes that $\lambda_2 \gg \kappa^4 n^{3/2}$. Since 
we have $\lambda_2 \approx \kappa^2$, we obtain the condition $\kappa \ll n^{-3/4}$. 
The parameter $\kappa$ turns out to be related to the size of the
planted clique through $k \approx n\kappa$.  Hence this argument
can only prove that the SOS hierarchy fails to detect hidden cliques
of size   $k \ll n^{1/4}$. 

In order to improve over this,  and establish Theorem \ref{thm:main}
we prove that matrices  $\tJ_{1, \nu}$
and $\tJ_{2, \nu}$ satisfy certain spectral properties with respect to the subspaces $\bbV_0, \bbV_1, \bbV_2$. 
For instance consider the sum $\tJ_{2, 3}+\tJ_{2, 5}$. For  any $v\in \reals^{\nCe}$
  \begin{align}
    (\tJ_{2, 3}v + \tJ_{2, 5}v)_{\{i, j\}} &= \sum_{k<\ell} p^2(g_{ik}g_{i\ell} + g_{jk}g_{j\ell})v_{\{k, \ell\}}\\
    &= u_i + u_j,
  \end{align}
  where we let $u_i \equiv = \sum_{k<\ell}p^2(g_{ik}g_{i\ell})v_{\{k, \ell\}}$. It follows that
  $(\tJ_{2, 3}v + \tJ_{2, 5})v \in \bbV_0\oplus\bbV_1$ hence $\proj_2 (\tJ_{2, 3} + \tJ_{2, 5}) = 0$.
  By taking transposes we obtain that $(\tJ_{2, 2} + \tJ_{2, 4})\proj_2 = 0$. 
  In a similar fashion we obtain that $\proj_2 (\sum_{\nu}\tJ_{1, \nu}) = (\sum_{\nu}\tJ_{1, \nu})\proj_2 = 0$. 
  See Lemmas \ref{lem:normbndeta1}, \ref{lem:normbndeta2} for formal statements and proofs. 

  Using these observations and \myeqref{eq:triangle} we obtain that $\norm{\proj_2 Q\proj_2} \ls \kappa^4 n$,
  while for any other pair $(a, b)\in \{0, 1, 2\}^2$ we have that $\norm{\proj_a Q\proj_b} \ls \kappa^4 n^{3/2}$. 
  As noted before, since $\lambda_0 \approx n^2 \kappa^4$, $\lambda_1 \approx n\kappa^3$ and $\lambda_1 \approx \kappa^2$
  whence the condition in \myeqref{eq:22PSDproj} reduces to:
  \begin{align}
  \begin{pmatrix}
    n^2\kappa^4 &0 & 0\\
    0 & n\kappa^3  &0 \\
    0 &0 & \kappa^2
  \end{pmatrix} - \kappa^4
  \begin{pmatrix}
     n^{3/2} &  n^{3/2} &  n^{3/2}\\
     n^{3/2} &  n^{3/2} &  n^{3/2}\\
     n^{3/2} &  n^{3/2} &  n
  \end{pmatrix} &\mge 0.
  \end{align}
The $2, 2$ entry of this matrix inequality yields that $\kappa^2 - \kappa^4 n \gg 0$ or
$\kappa \ll n^{-1/2}$. Considering the $(1, 1)$ entry yields a similar condition. The
key condition is that corresponding to the minor indexed by rows (and columns) $1, 2$:
\begin{align}
  \begin{pmatrix}
    n\kappa^3  & -n^{3/2} \kappa^4\\
    - n^{3/2}\kappa^4 &\kappa^2
  \end{pmatrix}&\mge 0.
\end{align}
This requires that $n\kappa^5 \gg n^3 \kappa^8$ or, equivalently $\kappa\ll n^{-2/3}$. 
Translating this to clique size $k = n\kappa$, we obtain the condition  $k \ll n^{1/3}$. This calculation thus
demonstrates  the origin of the threshold of $n^{1/3}$ beyond which the Meka-Wigderson
witness fails to be positive semidefinite. The counterexample of \cite{barak2014sum}
shows that our estimates are fairly tight (indeed, up to a logarithmic factor).

\section{Proofs}
\label{sec:Proof}

\subsection{Definitions and notations}

Throughout the proof we denote the identity matrix in $m$ dimensions
by $\id_m$, and the all-ones vector by  $\one_m$. 
We let $\qproj_n = \one_n\one_n^\sT/n$ be the projector onto the all
  ones vector $\one_n$, and $\qproj_n^\perp = \id_n - \qproj_n$ its
  orthogonal complement.

The indicator function of property $A$ is denoted by $\ind(A)$. 
The set of first $m$ integers is denoted by $[m] = \{1,2,\dots,m\}$.

As mentioned above, we write $f(n,r,\dots)\gtrsim g(n,r,\dots)$ if \emph{there exists a constant}
$C$ such that $f(n,r,\dots) \ge C\, g(n,r,\dots)$. 
Similarly we write 
$f(n, r, \dots) \gg g(n, r, \dots)$ if, \emph{for any constant} $C$, we have 
$f(n,r,\dots) \ge C\, g(n,r,\dots)$ for all $n$ large enough.
These conditions are always understood to hold uniformly with respect
to the  extra arguments $r,\dots$, provided these  belong to a range
depending on $n$, that will be 
clear from the context. 

We finally use the
shorthand  $\barn \equiv n\log n$.

\subsection{Main technical result and proof of Theorem \ref{thm:main}}

The key proposition is the following which controls the matrices
$\tN_{a,b}$. A set of conditions for the parameters $\ualpha$ is
stated in terms of two  matrices $\barW , W\in \reals^{3\times
  3}$. Below we will develop approximations to these matrices, under the parameter
values of Theorem \ref{thm:main}. This allows to  check easily the
conditions  of Proposition \ref{prop:psddecomp}.
\begin{proposition} \label{prop:psddecomp}
  Consider the symmetric matrices $\barW , W\in \reals^{3\times 3}$, where
  $\barW$ is diagonal, and given by:
  \begin{align}
  \barW_{00} &= \alpha_2 + 2(n-2)\alpha_3 p +
  \frac{(n-2)(n-3)}{2}\alpha_4 p^4 - \frac{n(n-1)}{2}\alpha_2^2\, ,\label{eq:barW00}\\
  \barW_{11} &= \alpha_2 + (n-4)\alpha_3 p - (n-3)\alpha_4 p^4\, , \label{eq:barW11}\\
  \barW_{22} &= \alpha_2 -2\alpha_3 p + \alpha_4 p^4\, , \label{eq:barW22}
  \end{align}
 and $W$ is defined  by: 
  \begin{align}
W_{00} &= C\alpha_3 \barn^{1/2} + C\alpha_4\barn^{3/2} 
+ \frac{C(\alpha_3\barn)^2}{\alpha_1} + \frac{\left(n^{3/2}\alpha_3 p^2 +
2\sqrt{n}\alpha_2 + C\alpha_3 \barn\right)^2}{n(\alpha_2 p -
  \alpha_1^2)}\, ,\label{eq:W00}\\
  W_{01} &= C\alpha_3\barn^{1/2} + C\alpha_4 \barn^{3/2} +\frac{C}{\alpha_1}(\alpha_3 \barn)(C\alpha_3\barn + \sqrt{n}\alpha_2) \nonumber \\
  &\quad+ \frac{1}{n(\alpha_2 p - \alpha_1^2)} (n^{3/2}\alpha_3 p^2 +
  2\sqrt{n}\alpha_2 + C\alpha_3 \barn)(3\alpha_3
  \barn)  \, ,\\
  W_{02} &= C\alpha_3 \barn^{1/2} + C\alpha_4
  \barn^{3/2} + \frac{C(\alpha_3\barn)^2}{\alpha_1} \nonumber\\
  &+ \frac{C}{n(\alpha_2 p - \alpha_1^2)} \left(n^{3/2}\alpha_3 p^2 + 2\sqrt{n}\alpha_2
  + C\alpha_3 \barn\right) \left( \alpha_3 \barn \right)\, , \\
  W_{11} &= C\alpha_3\barn^{1/2} + C\alpha_4 \barn^{3/2} + \frac{2}{\alpha_1} \left( C\alpha_3 \barn  + \sqrt{n}\alpha_2 \right)^2 +
  \frac{C(\alpha_3\barn)^2}{n(\alpha_2 p - \alpha_1^2)}\, , \\
  W_{12} &= C\alpha_3\barn^{1/2} +C \alpha_4 \barn^{3/2}  + \frac{C}{\alpha_1}(\alpha_3\barn)( C\alpha_3
  \barn  + \sqrt{n}\alpha_2) + \frac{C(\alpha_3\barn)^2}{n(\alpha_2 p -
           \alpha_1^2)}\, , \\
 W_{22} &= C\alpha_3 \barn^{1/2} + C\alpha_4 \barn +
  \frac{C(\alpha_3\barn)^2}{\alpha_1} + \frac{C(\alpha_3
  \barn)^2}{n(\alpha_2 p - \alpha_1^2)}\, .\label{eq:W22}
 \end{align}
  Assume the following
  conditions hold for a suitable constant $C$:
  \begin{align}
    \alpha_1 &\ge 2 \alpha_2 p + 2\alpha_2 \barn^{1/2},  \\
    \alpha_2 p^2  &\ge \alpha_1^2\,   ,\label{eq:Assumption2BigPropo}\\
    \barW  &\mge W\, . 
  \end{align}
  Then with probability exceeding 
 $1 - n^{-1}$ 
  all of the following  are true:
  \begin{align}
    \tN_{11} &\mge 0 \, ,\label{eq:11PSD}\\
    \tN_{11}^{-1} &\mle \frac{1}{n(\alpha_2 p - \alpha_1^2)}\qproj_n +
                    \frac{2}{\alpha_1}\qproj_n^\perp \, ,\label{eq:11Bound}\\
    \tN_{22} &\mge \frac{2}{\alpha_1}\tN_{12}^\sT \qproj_n^\perp \tN_{12} +
    \frac{1}{n(\alpha_2 p  - \alpha_1^2)} \tN_{12}^\sT \qproj_n \tN_{12}.  \label{eq:22PSD}
  \end{align}
\end{proposition}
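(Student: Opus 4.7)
The plan is to prove the three conclusions \eqref{eq:11PSD}, \eqref{eq:11Bound}, \eqref{eq:22PSD} separately, exploiting the block structure of $\tN$ and the spectral decomposition of each block's expectation. In every case the scheme is the same: first compute $\E\tN_{ab}$ by a case analysis on the overlap $|A\cap B|$ of the sets indexing rows and columns (using $\E\cG_{A\cup B}=p^{\binom{|A\cup B|}{2}}$), then control the fluctuation $\tN_{ab}-\E\tN_{ab}$ by decomposing it into the monomial pieces $\tJ_{\eta,\nu}$ introduced in Section \ref{sec:Strategy} and bounding each by the moment method, i.e.\ estimating $\E\Tr((\tJ_{\eta,\nu})^{2m})$ with $m$ of order $\log n$ and reading off the dominant cycle contribution. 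The $\barn^{1/2}$ and $\barn^{3/2}$ factors that decorate every entry of $W$ are precisely these moment-method bounds.

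For the $1\times 1$ block, one checks directly that $\tN_{11}=(\alpha_1-\alpha_2 p)\id_n+(\alpha_2 p-\alpha_1^2)\one_n\one_n^\sT+\alpha_2(\cG-\E\cG)$, so its invariant subspaces are $\sspan(\one_n)$ and $\one_n^\perp$, with expected eigenvalues $n(\alpha_2 p-\alpha_1^2)+(\alpha_1-\alpha_2 p)$ and $\alpha_1-\alpha_2 p$ respectively. A standard Wigner-type operator-norm estimate gives $\|\alpha_2(\cG-\E\cG)\|_2\ls\alpha_2\barn^{1/2}$ with high probability, which by the hypothesis $\alpha_1\ge 2\alpha_2 p+2\alpha_2\barn^{1/2}$ is dominated by $\alpha_1-\alpha_2 p$; together with \eqref{eq:Assumption2BigPropo} this yields \eqref{eq:11PSD}, and inverting separately on the two invariant subspaces gives the operator inequality \eqref{eq:11Bound}.

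The bulk of the work goes into \eqref{eq:22PSD}. I would first verify that $\bbV_0,\bbV_1,\bbV_2$ are the invariant subspaces of $\E\tN_{22}$ with eigenvalues $\barW_{00},\barW_{11},\barW_{22}$, by a combinatorial evaluation of the expected row sums and ``partial row sums'' of $\alpha_{|A\cup B|}\E\cG_{A\cup B}$. The right-hand side $R=\frac{2}{\alpha_1}\tN_{12}^\sT\qproj_n^\perp\tN_{12}+\frac{1}{n(\alpha_2 p-\alpha_1^2)}\tN_{12}^\sT\qproj_n\tN_{12}$ is exactly the quadratic form that arises when one substitutes the bound \eqref{eq:11Bound} into the Schur-complement inequality $\tN_{22}\succeq \tN_{12}^\sT\tN_{11}^{-1}\tN_{12}$, so proving \eqref{eq:22PSD} amounts to showing $\E\tN_{22}\succeq R+(\tN_{22}-\E\tN_{22})$. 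The plan is to verify this block-by-block in the $3\times 3$ form induced by $\proj_0,\proj_1,\proj_2$: the left-hand side is diagonal with entries $\barW_{aa}$, while each off-diagonal or diagonal block of the right-hand side is to be bounded in operator norm by the corresponding $W_{ab}$, which is exactly the hypothesis $\barW\succeq W$.

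The genuinely hard step, and the one that drives the $n^{1/3}$ threshold, is controlling $\|\proj_a(\tN_{22}-\E\tN_{22})\proj_b\|_2$ and its $\tN_{12}$-analogues with the correct sharpness. I would split $\tN_{22}-\E\tN_{22}=\sum_{\eta,\nu}\tJ_{\eta,\nu}$ up to negligible contributions from pairs with $\{i,j\}\cap\{k,\ell\}\neq\emptyset$, and establish two types of estimate: (i) global operator-norm bounds of order $\alpha_4\barn^{3/2}$ for the ``heavy'' low-$\eta$ pieces $\tJ_{1,\nu}$ and $\tJ_{2,\nu}$, and of order $\alpha_4\barn$ or smaller for $\tJ_{3,\nu}$ and $\tJ_{4,1}$, proved via the moment method; and (ii) the crucial cancellation identity that $\sum_\nu\tJ_{1,\nu}$ and $\tJ_{2,3}+\tJ_{2,5}$ map $\reals^{\nCe}$ into $\bbV_0\oplus\bbV_1$ (with transposed analogues), so that after pre- and post-multiplication by $\proj_2$ only the genuinely Wigner-like pieces of order $\alpha_4\barn$ survive in the $(2,2)$ block. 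A parallel decomposition for $\tN_{12}$, using that $\E\tN_{12}$ is supported on $\bbV_0\oplus\bbV_1$, handles the $R$ terms. Combining these projected bounds reduces \eqref{eq:22PSD} exactly to $\barW\succeq W$. The main obstacle is proving the cancellations in (ii) quantitatively: without them the $(2,2)$ minor of $\barW-W$ would demand $\kappa\ll n^{-3/4}$ rather than the true threshold $\kappa\ll n^{-2/3}$, which is what translates into the clique-size bound $k\gtrsim n^{1/3}/\log n$.
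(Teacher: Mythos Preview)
Your proposal is correct and follows essentially the same approach as the paper: the decomposition of $\tN_{22}-\E\tN_{22}$ into the $\tJ_{\eta,\nu}$ pieces, the moment-method bounds on each, the exact cancellation identities $\proj_2(\sum_\nu\tJ_{1,\nu})=0$ and $(\tJ_{2,2}+\tJ_{2,4})\proj_2=\proj_2(\tJ_{2,3}+\tJ_{2,5})=0$, and the reduction of \eqref{eq:22PSD} to the $3\times 3$ condition $\barW\succeq W$ are precisely the paper's argument. One small slip to correct when you execute: $\tN$ is built from $N$, not $M$, so the entries involve $\prod_{i\in A\setminus B,\,j\in B\setminus A}\cG_{ij}$ rather than $\cG_{A\cup B}$, and the relevant expectation is $\alpha_{|A\cup B|}\,p^{|A\setminus B|\cdot|B\setminus A|}$ (e.g.\ $\alpha_4 p^4$, not $\alpha_4 p^6$, when $|A\cap B|=0$), which is what produces the stated $\barW_{aa}$.
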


The next two lemmas develop simplified expressions for  matrices $\barW$, $W$ under the parameter
choices of Theorem \ref{thm:main}. 
\begin{lemma} 
    \label{lem:barWsimple}
    Setting $(\ualpha, p)$ as in Theorem \ref{thm:main}, there exists
    $\delta_n = \delta_n(\kappa, p)$ with $\delta_n(\kappa, p)\to 0$
    as $n\to \infty$, such that
    \begin{align}
      \abs{\barW_{00}  - \frac{2n^2\kappa^4}{p^2}} &\le \delta_n
                                                     \barW_{00}\, , \\
      \abs{\barW_{11} - \frac{n\kappa^3}{p^2}} &\le \delta_n
                                                 \barW_{11}\, , \\
      \abs{\barW_{22} - \frac{2\kappa^2}{p}}  &\le \delta_n
                                                \barW_{22}\, .
    \end{align}
\end{lemma}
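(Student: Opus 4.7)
The plan is to substitute the parameter values $\alpha_1 = \kappa$, $\alpha_2 = 2\kappa^2/p$, $\alpha_3 = \kappa^3/p^3$, $\alpha_4 = 8\kappa^4/p^6$ from Theorem~\ref{thm:main} into the defining formulas (\ref{eq:barW00})--(\ref{eq:barW22}), then in each case identify the claimed leading term and estimate the remainder. Every summand becomes a monomial in $n$, $\kappa$, and $p$, and the computation reduces to verifying that the listed asymptotic term dominates all the others under the hypotheses $\kappa \in [\log n/n,\, n^{-2/3}/\log n]$ and $p \ge c(\kappa \log n)^{1/4} n^{1/6}$.

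For $\barW_{00}$, substitution turns (\ref{eq:barW00}) into
\[
\barW_{00} = \frac{2\kappa^2}{p} + \frac{2(n-2)\kappa^3}{p^2} + \frac{4(n-2)(n-3)\kappa^4}{p^2} - \frac{2n(n-1)\kappa^4}{p^2}.
\]
The two $\kappa^4/p^2$ contributions combine to $(2n^2 - 18n + 24)\kappa^4/p^2$, giving the leading order $2n^2\kappa^4/p^2$ up to a $(1 - O(1/n))$ factor, while the $\alpha_2$ and $2(n-2)\alpha_3 p$ pieces have relative sizes $O(p/(n\kappa)^2)$ and $O(1/(n\kappa))$ respectively. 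Both are $o(1)$ since $n\kappa \ge \log n \to \infty$ and $p \le 1$. The same pattern handles (\ref{eq:barW11}) and (\ref{eq:barW22}): $\barW_{11}$ reduces to $2\kappa^2/p + (n-4)\kappa^3/p^2 - 8(n-3)\kappa^4/p^2$ with leading term $n\kappa^3/p^2$ and relative corrections of order $p/(n\kappa)$ and $\kappa$, and $\barW_{22}$ reduces to $2\kappa^2/p - 2\kappa^3/p^2 + 8\kappa^4/p^2$ with leading term $2\kappa^2/p$ and relative corrections of order $\kappa/p$ and $\kappa^2/p$. Setting $\delta_n$ to be the maximum of these relative errors then closes the argument.

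The only step that uses the sharp form of the hypothesis on $p$ is bounding $\kappa/p$ in the $\barW_{22}$ estimate. Using $p \ge c(\kappa\log n)^{1/4}n^{1/6}$ together with $\kappa \le n^{-2/3}/\log n$ gives
\[
\frac{\kappa}{p} \le \frac{\kappa^{3/4}}{c(\log n)^{1/4} n^{1/6}} \ls \frac{1}{n^{2/3}\log n},
\]
which tends to zero. All other correction terms are controlled much more loosely using only $n\kappa \to \infty$ and $\kappa \to 0$. The calculation is essentially bookkeeping, so I do not anticipate any conceptual obstacle; the only care needed is in tracking where the specific lower bound on $p$ is actually required (namely, in the single estimate displayed above), since the other corrections admit much weaker sufficient conditions.
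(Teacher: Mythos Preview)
Your proposal is correct and follows essentially the same approach as the paper: substitute the parameter values into the definitions of $\barW_{00},\barW_{11},\barW_{22}$, identify the dominant term in each, and check that the remaining summands are of lower order using $n\kappa\to\infty$, $\kappa\to 0$, and the lower bound on $p$. Your write-up is in fact more explicit than the paper's, which merely asserts which terms dominate and leaves the ratio checks to the reader.
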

  \begin{lemma} 
    \label{lem:Wsimple}
    Setting $(\ualpha, p)$ as in Theorem \ref{thm:main}, there exists
    $\delta_n = \delta_n(\kappa, p)$ with $\delta_n(\kappa,p) \to
    0$ as $n\to \infty$, such that,    for some absolute constant $C$,
    \begin{align}
      \abs{W_{00} - \frac{n^2\kappa^4}{p^2} }  &\le \delta_n W_{00}\,
      ,\\
      \abs{W_{11} - C\frac{\kappa^4 \barn^{3/2}}{p^6}  } &\le
      \delta_n W_{11} \, ,\\
      \abs{W_{22} - C\frac{\kappa^3\sqrt{\barn}}{p^3} - C\frac{\kappa^5 \barn^2}{p^6} } &\le
      \delta_n W_{22}\, ,\label{eq:W22Approx}
  \end{align}
and, for every $a\ne b\in\{0, 1, 2\}$,
    \begin{align}
     \abs{W_{ab} - C
      \frac{\kappa^4\barn^{3/2}}{p^6} } &\le \delta_n W_{ab},
    \end{align}

  \end{lemma}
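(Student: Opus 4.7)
The plan is straightforward substitution followed by a term-by-term size comparison. Under the parameter choices of Theorem \ref{thm:main} we have $\alpha_1 = \kappa$, $\alpha_2 = 2\kappa^2/p$, $\alpha_3 = \kappa^3/p^3$, $\alpha_4 = 8\kappa^4/p^6$, and in particular $\alpha_2 p - \alpha_1^2 = \kappa^2$. First I would plug these expressions into the definitions (\ref{eq:W00})--(\ref{eq:W22}) so that each of the three or four summands defining $W_{ab}$ becomes an explicit monomial in $\kappa, n, \log n$ and $p$. The rest of the proof is then a finite sequence of asymptotic comparisons powered by the elementary inequalities $n\kappa \ge \log n$, $\kappa\barn \ge (\log n)^2$, and $p^2 \ge c^2(\kappa\log n)^{1/2} n^{1/3}$, all of which follow from the hypotheses $\kappa \in [\log n/n,\, n^{-2/3}/\log n]$ and $p \ge c(\kappa\log n)^{1/4} n^{1/6}$.

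For $W_{00}$ the dominant contribution comes from the Schur-complement fraction. Inside the numerator $n^{3/2}\alpha_3 p^2 + 2\sqrt{n}\alpha_2 + C\alpha_3\barn$ one verifies that $n^{3/2}\alpha_3 p^2 = n^{3/2}\kappa^3/p$ beats $2\sqrt{n}\alpha_2$ by a factor $n\kappa \gg 1$ and beats $C\alpha_3\barn$ by a factor of order $p^2\sqrt{n}/\log n \gg 1$, the latter using the lower bound on $p$. Squaring and dividing by $n(\alpha_2 p - \alpha_1^2) = n\kappa^2$ gives exactly $n^2\kappa^4/p^2$, and the three remaining summands $C\alpha_3\barn^{1/2}$, $C\alpha_4\barn^{3/2}$ and $C(\alpha_3\barn)^2/\alpha_1$ are all $o(1)$ times $n^2\kappa^4/p^2$ by direct ratios using $p \le 1$ and $\kappa \le n^{-2/3}/\log n$. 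The off-diagonal entries $W_{01}, W_{02}, W_{12}$ are handled identically and each reduces to the common dominant contribution $C\alpha_4\barn^{3/2} = 8C\kappa^4\barn^{3/2}/p^6$.

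For $W_{11}$ the dominant term is again $C\alpha_4\barn^{3/2}/p^6$, and the only nontrivial summand is $2(C\alpha_3\barn + \sqrt{n}\alpha_2)^2/\alpha_1$; its two inner terms are compared using the lower bound on $p$, and expanding the square and dividing by $\alpha_1 = \kappa$ shows this summand is of smaller order than $\kappa^4\barn^{3/2}/p^6$ under the parameter constraints. For $W_{22}$ the approximation retains \emph{two} summands, $C\alpha_3\sqrt{\barn}/p^3$ and $C(\alpha_3\barn)^2/\alpha_1 = C\kappa^5\barn^2/p^6$; the two remaining summands $C\alpha_4\barn$ and $C(\alpha_3\barn)^2/(n(\alpha_2 p - \alpha_1^2))$ are dominated by $C(\alpha_3\barn)^2/\alpha_1$ by factors $\kappa\barn \gg 1$ and $n\kappa \gg 1$ respectively.

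The only mildly delicate step is the case analysis inside the parenthesized factors of $W_{00}, W_{01}, W_{02}$ and $W_{11}$, since the relative order of $\sqrt{n}\alpha_2$, $C\alpha_3\barn$, and $n^{3/2}\alpha_3 p^2$ depends non-monotonically on $\kappa$ and $p$. Once the lower bound on $p$ is invoked in the appropriate place, however, each comparison reduces to a simple power-of-$n$ inequality, and taking $\delta_n(\kappa, p)$ to be the maximum of the finitely many vanishing ratios produced by this analysis yields the claimed approximations.
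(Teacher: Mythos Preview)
Your approach---direct substitution of the Theorem~\ref{thm:main} parameter values followed by term-by-term ratio comparison---is exactly the paper's approach, and your treatment of $W_{00}$, the off-diagonal entries, and $W_{22}$ tracks the paper's proof closely (for $W_{22}$ the paper additionally observes that $\alpha_3\barn^{1/2}$ already dominates $\alpha_4\barn$, but your argument via the third summand is equally valid).

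There is, however, a genuine gap in your handling of $W_{11}$. You assert that $2(C\alpha_3\barn + \sqrt{n}\alpha_2)^2/\alpha_1$ is of smaller order than $\kappa^4\barn^{3/2}/p^6$. But the contribution from the $\sqrt{n}\alpha_2$ part alone is
\[
\frac{2(\sqrt{n}\alpha_2)^2}{\alpha_1} \;=\; \frac{8n\kappa^3}{p^2},
\]
and its ratio to $\kappa^4\barn^{3/2}/p^6$ equals $p^4/\bigl(\kappa\sqrt{n}(\log n)^{3/2}\bigr)$. Plugging in the lower bound $p^4 \ge c^4\kappa(\log n)\,n^{2/3}$ gives a ratio of order $n^{1/6}/\sqrt{\log n}$, which \emph{diverges}. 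So this summand is larger, not smaller, than the term you claim is dominant; your comparison fails throughout the admissible range of $(\kappa,p)$. Note that $8n\kappa^3/p^2$ is in fact a constant multiple of $\barW_{11}$ itself, so this is not a cosmetic issue. The paper's own proof sidesteps this by rewriting the parenthesized factor in $W_{11}$ as $(C\alpha_3\barn + \sqrt{n}\alpha_3 p^2 + 2\alpha_2)$ rather than $(C\alpha_3\barn + \sqrt{n}\alpha_2)$; with that expression $\alpha_3\barn$ dominates inside the square (since $\alpha_3\barn/\alpha_2 \gtrsim (\log n)^2$), and the argument you sketch then goes through. This discrepancy between the formula stated in Proposition~\ref{prop:psddecomp} and the one used in the proof appears to be a typo in the paper, but you need to resolve it explicitly rather than assert a bound that does not hold for the formula as written.
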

With Proposition \ref{prop:psddecomp} and the auxiliary Lemmas \ref{lem:Wsimple},
\ref{lem:barWsimple} in hand, the proof of Theorem
\ref{thm:main} is straightforward. 
\begin{proof}[Proof of Theorem \ref{thm:main}]
As noted in Section \ref{sec:Strategy} it suffices to prove that
$\tN\succeq 0$.
  By taking the Schur complement with respect to $\tN_{11}$, we
  obtain
that $\tN\succeq 0$  if and only if
  \begin{align}
\tN_{11} \mge 0\;\;\;\;\;
\mbox{and}\;\;\;\;\;\;\; 
\tN_{22} \mge
      \tN_{12}^\sT\tN_{11}^{-1}\tN_{12}\, .\label{eq:ConditionstN}
\end{align}
  Suppose that the  conditions of Proposition \ref{prop:psddecomp} are
  verified under the  values of $\ualpha, p$ specified as in Theorem
  \ref{thm:main}. Then we have $\tN_{11}\succeq 0$ by
  Eq.~(\ref{eq:11PSD}). Further by Eqs.~(\ref{eq:11Bound}) and
  (\ref{eq:22PSD}), we have
  \begin{align}
    \tN_{22} &\mge \tN_{12}^\sT\left( \frac{2}{\alpha_1} \qproj_n^\perp +
    \frac{1}{n(\alpha_2 p  - \alpha_1^2)} \qproj_n \right)  \tN_{12} \\
    &\mge \tN_{12}^\sT\tN_{11}^{-1}\tN_{12}\, ,
  \end{align}
which yields the desired (\ref{eq:ConditionstN}).

  We are now left to verify  the conditions of Proposition \ref{prop:psddecomp}. 
  To begin, we verify that
  $\alpha_1 \gs 2\alpha_2 p + 2\alpha_2 \barn^{1/2}$.
  This condition is satisfied if:
  \begin{align}
    p &\gg \kappa\barn^{1/2}\, . 
  \end{align}
  For this, it suffices that
  \begin{align}
    (\kappa\log n)^{1/4}n^{1/6} &\gg \kappa \barn^{1/2}\, .\\
    \text{ or } \kappa&\ls n^{-4/9}(\log  n)^{-1/3}.
  \end{align}
  Since $\kappa \le n^{-2/3}$, this is true. 

  The condition $\alpha_2 p - \alpha_1^2 \ge 0$ holds since
  $\alpha_2 p - \alpha_1^2 = 2\kappa^2  - \kappa^2 = \kappa^2 >0$.

  It remains to check that $\barW \mge W$. By Sylvester's criterion, 
  we need to verify that:
  \begin{align}
    \barW_{00} - W_{00}& >0\, , \label{eq:sylv1}\\
    \begin{vmatrix}
      \barW_{00} - W_{00} & -W_{01}\\
      -W_{01} &\barW_{11} - W_{11}
    \end{vmatrix} & > 0 \, ,\label{eq:sylv2}\\
    \begin{vmatrix}
      \barW_{00} - W_{00} & -W_{01} & -W_{02}\\
      -W_{01} &\barW_{11} - W_{11} & -W_{12} \\
      -W_{02} &-W_{12} &\barW_{22} - W_{22}
    \end{vmatrix} & > 0\, . \label{eq:sylv3}
  \end{align}
It suffices to check the above values using the simplifications provided
by Lemmas \ref{lem:barWsimple} and \ref{lem:Wsimple} respectively as
follows. Throughout, we will assume that $n$ is large enough, and write
$\delta_n$ for a generic sequence such that $\delta_n\to 0$ uniformly
over $\kappa \in [\log n/n, c^{-4} n^{-2/3}/\log n]$, $p\in [c (\kappa\log n)^{1/4} n^{1/6},1]$.

For \myeqref{eq:sylv1}, using Lemmas \ref{lem:barWsimple}
and \ref{lem:Wsimple} we have that:
\begin{align}
  \barW_{00} - W_{00} &\ge \frac{n^2 \kappa^4}{2p^2} ,
\end{align} 
Hence , $\barW_{00} - W_{00} \ge {n^2 \kappa^4}/{2p^2} > 0$
for large enough $n$. 

For \myeqref{eq:sylv2} to hold we need:
\begin{align}
  (\barW_{00} - W_{00})(\barW_{11} - W_{11}) - W_{01}^2 > 0.
\end{align}
By Lemmas \ref{lem:barWsimple} and \ref{lem:Wsimple} we have:
\begin{align}
  \barW_{11} - W_{11} &\ge \frac{n\kappa^3}{p^2}(1-\delta_n) -
  \frac{C\kappa^4\barn^{3/2}}{p^6}(1+\delta_n).
\end{align}
The ratio of the two terms above is (up to a constant) given by
$ p^4 /(\kappa n^{1/2}(\log n)^{3/2})\to \infty$, hence for $n$ large enough we
have $\barW_{11} - W_{11} \ge n\kappa^3/2p^2$. Thus 
Eq.~(\ref{eq:sylv2}) holds if
\begin{align}
  \left( \frac{n^2 \kappa^4}{p^2} \right)\left( \frac{n\kappa^3}{p^2} \right)
  &\gg \left(  \frac{\kappa^4 \barn^{3/2}}{p^{6}} \right)^2\\
  \text{ or } p^8 &\gg \kappa (\log n)^3.
\end{align}
However as we set $ p \gs  (\kappa\log n)^{1/4}n^{1/6}$, this is satisfied
for $n$ large. Indeed this implies that:
\begin{align}
    \begin{vmatrix}
      \barW_{00} - W_{00} & -W_{01}\\
      -W_{01} &\barW_{11} - W_{11}
    \end{vmatrix} & \ge \frac{n^3 \kappa^7}{2p^4}\, .
\end{align}

Consider now \myeqref{eq:sylv3}. Expanding the determinant along the third
column
\begin{align}
  (\barW_{22} - W_{22}) \begin{vmatrix}
      \barW_{00} - W_{00} & -W_{01}\\
      -W_{01} &\barW_{11} - W_{11}\\
  \end{vmatrix} 
  +W_{12} \begin{vmatrix}
      \barW_{00} - W_{00} & -W_{01}\\
      -W_{02} &-W_{12}
  \end{vmatrix}
  - W_{02}\begin{vmatrix}
      -W_{01} &\barW_{11} - W_{11}\\
      -W_{02} &-W_{12}
  \end{vmatrix} > 0 \, .
\end{align}
We start by noting that, for all $n$ large enough, 
\begin{align}
  \barW_{22} - W_{22} &\ge \frac{3\kappa^2}{2p}. 
\end{align}
Indeed, by Lemma \ref{lem:barWsimple} and \ref{lem:Wsimple}, to prove
this claim it is sufficient to show that
\begin{align}
  \frac{\kappa^2}{p} &\ge  C \left(\frac{\kappa^5 \barn^2}{p^6}  +
    \frac{\kappa^3 \barn^{1/2}}{p^3}\right), 
\end{align}
for a large enough constant $C$
or:
\begin{align}
 p &\ge C \max\big(  n^{2/5}\kappa^{3/5} (\log n)^{2/5},
 \kappa^{1/2}(n\log n)^{1/4}\big)\,
\end{align}
This is satisfied when we choose $p \ge c(\kappa\log n)^{1/4} n^{1/6}$ when
we choose $c$ a large enough constant.
Along with the argument for the second condition
above, this implies that:
\begin{align}
  (\barW_{22} - W_{22}) \begin{vmatrix}
      \barW_{00} - W_{00} & -W_{01}\\
      -W_{01} &\barW_{11} - W_{11}
    \end{vmatrix} &\ge \frac{n^3 \kappa^9}{2p^5},\label{eq:FirstDet}
\end{align}
for large enough $n$. 

We now consider the second term. 
Let $w \equiv C\kappa^4\barn^{3/2}/p^{6}$. Then by Lemmas \ref{lem:barWsimple} and \ref{lem:Wsimple}, 
 for all $n$ large enough:
\begin{align}
0\le   -W_{12} \begin{vmatrix}
      \barW_{00} - W_{00} & -W_{01}\\
      -W_{02} &-W_{12}
    \end{vmatrix} &\le \frac{3}{2}w^2\left( \frac{n^2 \kappa^4}{p^2} +
    w\right)\\
    &\le \frac{2n^2 \kappa^4 w^2}{p^2},\label{eq:SecondDet}
\end{align}
as $n^2 \kappa^4/ p^2 > 2w$ whenever $p \ge (\log n)^{3/8}n^{-1/8}$. As we have
$p \ge n^{-1/12}$ this is satisfied. 

Similarly, for the third term 
\begin{align}
   0\le W_{02}\begin{vmatrix}
      -W_{01} &\barW_{11} - W_{11}\\
      -W_{02} &-W_{12}
    \end{vmatrix} &\le \frac{3w^2}{2}\left( w + \frac{n\kappa^3}{p^2}
    \right). 
\end{align}
The second term in the parentheses above dominates when $p \ge \kappa^{1/4}(\log
n)^{3/8}n^{1/8}$ which holds as we keep $p \ge c (\kappa\log n)^{1/4}n^{1/6}$. 
Hence:
\begin{align}
   W_{02}\begin{vmatrix}
      -W_{01} &\barW_{11} - W_{11}\\
      -W_{02} &-W_{12}
    \end{vmatrix} &\le \frac{2n\kappa^3 w^2}{p^2}.\label{eq:ThirdDet}
\end{align}
Thus, using Eqs.~(\ref{eq:FirstDet}),  (\ref{eq:SecondDet}),
(\ref{eq:ThirdDet}),
we conclude that \myeqref{eq:sylv3} holds if
\begin{align}
  \frac{n^3 \kappa^9}{2 p^5} &\ge \frac{2n^2 \kappa^4 w^2}{p^2}  +
  \frac{2n\kappa^3 w^2}{p^2} \\
  &= \frac{2(1+n\kappa)n\kappa^3 w^2}{p^2}.
\end{align}
For this, it suffices that:
\begin{align}
  \frac{n^3 \kappa^9}{p^5} &\gs \frac{n^2\kappa^4 w^2}{p^2}, 
\end{align}
or, equivalently,  $p^{9} \ge c_1 n^2\kappa^3(\log n)^3$ for an 
appropriate $c_1$ large enough. 
This holds under the stated condition  $p\ge c(\kappa\log n)^{1/4} n^{1/6}$ 
provided $c$ is large enough.
This completes the proof
of Theorem \ref{thm:main}.
\end{proof}
The proofs of Lemma \ref{lem:Wsimple} and \ref{lem:barWsimple}
follow by a simple calculation and are given in Section
\ref{subsec:Wlemmas}. 

 Our key technical result is Proposition
\ref{prop:psddecomp}. 
Its proof is organized as follows. We analyze
the expectation matrices $\E\{\tN_{22}\}$, $\E\{\tN_{12}\}$ in
Section \ref{subsec:expectation}. 
We then control the random components 
$\tN_{11}-\E\{\tN_{11}\}$ in Section \ref{subsec:deviationM11},
$\tN_{12}-\E\{\tN_{12}\}$ in Section \ref{subsec:deviationM12}, and
$\tN_{22}-\E\{\tN_{22}\}$ in Section \ref{subsec:deviationM22}.
The application of the moment method to these deviations
requires the definition of various specific graph primitives, 
which we isolate in Section \ref{subsec:prelims} for easy reference. 
Finally, we combine the results to establish Proposition 
\ref{prop:psddecomp} in Section \ref{subsec:proofpsddecomp}.

\subsection{Proofs of Lemmas \ref{lem:Wsimple} and
  \ref{lem:barWsimple}}

\label{subsec:Wlemmas}
\begin{proof}[Proof of Lemma \ref{lem:Wsimple}]
  Recall that $W_{00}$ is defined as:
  \begin{align}
W_{00} &= \alpha_3 \barn^{1/2} + C\alpha_4\barn^{3/2} 
+ \frac{C(\alpha_3\barn)^2}{\alpha_1} +\frac{\left(n\sqrt{n}\alpha_3 p^2 +
2\sqrt{n}\alpha_2 + 3\alpha_3 \barn\right)^2}{n(\alpha_2 p -
  \alpha_1^2)}.
  \end{align}

  Firstly, since $p \ge c(\kappa\log n)^{1/4}n^{1/6}$, and $n\kappa \ge \log n$, 
  we have that $p\ge n^{-1/12}$ asymptotically. Hence:
  \begin{align}
    \frac{n\sqrt{n}\alpha_3 p^2}{\alpha_3 \barn} &=
    \frac{p^2\sqrt{n}}{\log n}\to \infty.
  \end{align}
  Similarly:
  \begin{align}
    \frac{n\sqrt{n}\alpha_3 p^2 }{\sqrt{n} \alpha_2} &= \frac{n\kappa}{2} \to
    \infty.
  \end{align}
  Also:
  \begin{align}
    \frac{\alpha_4 \barn^{3/2}}{n^3 \alpha_3^2 p^4/n(\alpha_2
    p - \alpha_1^2)} &\ls \frac{\kappa^4 \barn^2}{(n^3\kappa^6/n\kappa^2p^2)} \\
      &= \frac{\log^2 n}{ n p^4} \le \frac{\log^2 n}{n^{7/8}} \to 0.\\
      \frac{(\alpha_3\barn)^2/\alpha_1)}{n^3\alpha_3^2
      p^2/n(\alpha_2 p - \alpha_1^2)} &\ls \frac{\kappa\log^2 n}{p^4}\le
      \frac{\kappa\log ^2 n}{\sqrt{n}} \to 0. \\
      \frac{\alpha_3 \barn^{1/2}}{ \alpha_4 \barn^{3/2}} &\le \frac{\kappa^3 /p^3}{ \kappa^4 \barn/p^3} =
	\frac{p^3}{\kappa \barn} \to 0.  
      \end{align}
  Hence the term $(n\sqrt{n}\alpha_3 p^2)^2/n(\alpha_2 p - \alpha_1)$ is
  dominant in $W_{00}$ and the first claim of the lemma follows. 

  For $W_{01}$ we have the equation:
  \begin{align}
  W_{01} &= \alpha_3\barn^{1/2} + C \alpha_4\barn^{3/2} +
  \frac{C}{\alpha_1}(\alpha_3 \barn)(\alpha_3(\barn + \sqrt{n}\alpha_2)) \nonumber \\
  &\quad+ \frac{1}{n(\alpha_2 p - \alpha_1^2)} (n\sqrt{n}\alpha_3 p^2 +
  2\sqrt{n}\alpha_2 + \alpha_3 \barn)(\alpha_3
  \barn).
  \end{align}
  It suffices to check that $C\alpha_4 \barn^{3/2}$ 
  is the dominant term. By the
  argument in $W_{00}$ we already have that the first term is negligible. 
  Further since, $\alpha_3 \barn/\sqrt{n}\alpha_2 =  \kappa\sqrt{n}\log n/
  p^2 = (\kappa\log n)^{1/2} n^{1/6} \to 0 $, to prove that the third
  term is negligible, it suffices that
  \begin{align}
    \frac{(\alpha_3 \barn)(\sqrt{n}\alpha_2)}{\alpha_1 \alpha_4 \barn^{3/2}} &\le \frac{\kappa^5 p^{-3}}{\kappa^5 p^{-6} \sqrt{\log n}} =
      \frac{p^2}{\sqrt{\log n}} \to 0.
  \end{align}
  By the estimates in $W_{00}$ the fourth term is negligible if:
  \begin{align}
    \frac{(n\sqrt{n}\alpha_3 p^2 ) (\alpha_3 \barn)}{n(\alpha_2 p - \alpha_1^2)
      \alpha_4 \barn^{3/2}} &\to 0 \\
      \text{i.e. }  \frac{n^{5/2}\log n p^-4 \kappa^6}{n^{5/2}\log
      n^{3/2} p^{-6}\kappa^6} &= \frac{p^2}{\sqrt{\log n}} \to 0. 
  \end{align}
  This implies the claim for $W_{01}$. The calculation for $W_{02}$ 
  and $W_{12}$ is similar. 

  We now consider $W_{11}$ given by:
  \begin{align}
  W_{11} &= \alpha_3\barn^{1/2} + C \alpha_4 \barn^{3/2} \nonumber \\
  &+ \frac{C}{\alpha_1} \left( C\alpha_3 \barn  + \sqrt{n}\alpha_3 p^2
  + 2\alpha_2 \right)^2 + \frac{C(\alpha_3\barn)^2}{n(\alpha_2 p
  - \alpha_1^2)}.
  \end{align}
  As in $W_{00}$, the first term is negligible. For the third term, first we
  note that $\alpha_3 \barn /\alpha_2 = (\kappa\log n) n/ p^2 \ge \log ^2
  n\to \infty$. Hence to prove that the third term is negligible, it suffices
  that:
  \begin{align}
    \frac{(\alpha_3 \barn)^2}{\alpha_1 \alpha_4 \barn^{3/2}} &\le
    {\kappa \sqrt{\barn}} \to 0.
  \end{align}
  The final term in $W_{11}$ is negligible by the same argument, since
  $n(\alpha_2 p - \alpha_1^2) = n\kappa^2 \ge \alpha_1$.

  $W_{22}$ is given by:
  \begin{align}
 W_{22} &= \alpha_3 \barn^{1/2} + C\alpha_4 \barn \nonumber \\
 &\quad+
  \frac{C(\alpha_3\barn)^2}{\alpha_1} + \frac{C(\alpha_3
  \barn)^2}{n(\alpha_2 p - \alpha_1^2)}.
  \end{align}
Since $n(\alpha_2 p - \alpha_1^2) = n\kappa^2 \ge \alpha_1$ it is easy to see
that the third term dominates the fourth above. To see that the first dominates
the second, it suffices that their ratio diverge i.e.
\begin{align}
  \frac{\alpha_3 \barn^{1/2}}{\alpha_4 \barn} &= \frac{p^3}{\kappa
    \sqrt{\barn}} \\
    &\ge \frac{p^3}{\kappa \log n \sqrt{n}} \\ 
    &= c^3 (\kappa\log n)^{1/8} n^{1/4} \to\infty,
\end{align}
as $\kappa \ge 1/n$. Thus we have that the first and third
terms dominate the contribution for $W_{22}$. 
This completes the proof of the lemma. 
\end{proof}

\begin{proof}[Proof of Lemma \ref{lem:barWsimple}]
  $\barW_{00}$ is given by:
  \begin{align}
    \barW_{00} &= \alpha_2 + 2(n-2)\alpha_3 p +
  \frac{(n-2)(n-3)}{2}\alpha_4 p^4 - \frac{n(n-1)}{2}\alpha_2^2.
  \end{align}
  It is straightforward to check that the third and fourth terms
  dominates the sum above i.e.:
  \begin{align}
    \frac{\barW_{00}}{\frac{(n-2)(n-3)}{2}\alpha_4 p^4 - \frac{n(n-1)}{2}\alpha_2^2} \to 1.
  \end{align}
  Further we have:
  \begin{align}
    \frac{(n-2)(n-3)}{2}\alpha_4 p^4 - \frac{n(n-1)}{2}\alpha_2^2 &= (1+ \delta_n) \frac{2n^2 \kappa^4}{p^2},
  \end{align}
  for some $\delta_n\to 0$. The claim for $\barW_{00}$ then follows. 

  The claims for $\barW_{11}$ and $\barW_{22}$ follow in the same fashion as above
  where we instead use the following, adjusting $\delta_n$ appropriately:
  \begin{align}
    \frac{\barW_{11}}{n\alpha_3 p} &= \frac{\alpha_2 + (n-4)\alpha_3 p - (n-3)\alpha_4 p^4 }{n \alpha_3 p}
    \to 1 \\
    \frac{\barW_{22}}{\alpha_2} &= \frac{ \alpha_2 -2\alpha_3 p + \alpha_4 p^4 }{\alpha_2}\to 1.
  \end{align}
\end{proof}

\subsection{Graph definitions and moment method} \label{subsec:prelims}

In this section we define some family of  graphs that will be useful in the moment
calculations of Sections \ref{subsec:deviationM11},  \ref{subsec:deviationM22} 
and \ref{subsec:deviationM12}. We then state and prove a moment method
lemma,
that will be our basic tool for controlling the norm of random matrices.

\begin{definition}
  A \emph{cycle} of length $m$ is a graph $D=(V, E)$ with vertices
  $V = \{v_1, \dots v_m\}$ and edges $E= \{\{v_{i}, v_{i+1}\}:\;  i\in [m]\}$
where addition is taken modulo $m$.
\end{definition}

\begin{definition}
  A \emph{couple} is an ordered pair of vertices $(u, v)$ where we
  refer to the first vertex in the couple as the \emph{head} and the second as the
  \emph{tail}. 
\end{definition}

\begin{definition}
  A \emph{bridge} of length $2m$ is a graph $B=(V, E)$ with vertex set
  $V = \{u_{i}, v_{i}, w_{i} :\; i\in [m]\}$, and edges
  $E=\{\{u_i, v_i \}, \{u_i, w_i\}, \{u_{i+1}, v_i\}, \{u_{i+1},
  w_i\}:\, i\in [m]\}$
  where addition above is modulo $m$. We regard $(v_i, w_i)$ for $i\in
  [m]$
  as couples in the bridge. 
\end{definition}

\begin{definition}
 A \emph{ribbon} of length $m$ is a graph $R=(V, E)$
  with vertex set $V =\{u_1 \dots u_{m}, \allowbreak v_1 \dots v_m\}$ and 
  edge set $E = \{\{u_{i}, u_{i+1}\}, \{u_{i}, v_{i+1}\}, 
  \{v_{i}, u_{i+1}\}, \{v_{i}, v_{i+1}\}: i\in [m]\}$  where addition
  is modulo $m$. Further we call the subgraph induced by the 4-tuple $(u_i, v_i, u_{i+1}, v_{i+1})$
  a \emph{face} of the ribbon and we call the ordered pairs $(u_i, v_i)$,
  $i\in [m]$ couples of ribbon. 
\end{definition}

Each face of the ribbon has $4$ edges, hence there are $\binom{4}{\eta}$ ways
to remove $4-\eta$ edges from the face. We define a ribbons of class $\eta$, type $\nu$
and length $2m$ as follows.

\begin{definition}
For $1\le \eta\le 4 $ and $1\le \nu \le \binom{4}{\eta}$, we define a
ribbon of length $2m$, class $\eta$ and type $\nu$ to be the graph
obtained from a ribbon of length $2m$ by keeping $\eta$ edges in each face of 
the ribbon, so that the following happens. The subgraphs induced by the tuples $(u_{2i-1}, v_{2i-1}, u_{2i}, v_{2i})$ and
$(u_{2i+1}, v_{2i+1}, u_{2i}, v_{2i})$ for $i\ge 1$ are faces of class $\eta$
and type $\nu$ as shown in Table \ref{tab:ribbondef}. 

For brevity, we write $(\eta, \nu)$-ribbon to denote a ribbon of class
$\eta$ and type $\nu$. 
\end{definition}

\begin{definition}
  A  $(\eta, \nu)$-star ribbon  $S=(V, E)$ of length $2m$ is a graph formed from a $(\eta, \nu)$-ribbon
  $R(V', E')$ of length
  $2m$ by the following process. For each face $(u_i, v_i, u_{i+1}, v_{i+1})$ we
  identify \emph{either} the vertex pair $(u_i, u_{i+1})$ or the pair $(v_{i},
  v_{i+1})$ and delete the self loop formed, if any, from the edge set. Note here
  that the choice of the pair
  identified can differ across faces of $R$. 

We let $\cS_{\eta, \nu}^{m}$ denote
  this collection of $(\eta, \nu)$-star ribbons.
\end{definition}

\begin{definition}
  A labeled graph is a pair $(F=(V, E), \ell)$ where $F$ is a graph and
  $\ell :V\to [n]$ maps
  the vertices of the graph to labels in $[n]$. We define a \emph{valid labeling}
  to be one that satisfies the following conditions:
  \begin{enumerate}
    \item Every couple of vertices $(u, v)$ in the graph satisfies $\ell(u) <
      \ell(v)$.
    \item For every edge $e = \{v_1, v_2\}\in E$, $\ell(v_1) \ne \ell(v_2)$. 
  \end{enumerate}
  A labeling of $F$ is called \emph{contributing} if, in addition to being
  valid, the following happens. For every edge $e=\{u, v\}\in E$,
  there exists an edge
  $e'= \{u', v'\}\ne e$ such that $\{\ell(u), \ell(v)\} = \{\ell(u'),
  \ell(v')\}$. In other words, a
  labeling is contributing if it is valid and has the property that every
  labeled edge occurs \emph{at least twice} in $F$.  
\end{definition}

\begin{remark}
  Suppose $F$ is one of the graphs defined above and $C$ is a face of $F$. 
  We write, with slight abuse of notation, $C\subseteq F$ to denote ``a face $C$
  of the graph $F$''. Furthermore, to lighten notation, we will often write
  $e\in F$  for an edge $e$ in the graph $F$. 
\end{remark}

\begin{definition}
  Let $\frakL(F)$ denote the set of valid labelings of a graph $F=(V, E)$ and
  $\frakL_2(F)$ denote the set of contributing labelings. Further, we define
  \begin{align}
    v_*(F) &= \max_{\ell\in \frakL_2(F)} \range(\ell)
  \end{align}
  where $\range(\ell) = \{i\in [n]: i= \ell(u), u \text{ is a vertex in }F \}$. 
\end{definition}

\begin{table}[h]
  \centering
  \begin{tabular}{lccc}
    \hline Figure & Ribbon class($\eta$) & Ribbon type($\nu$) & Typical norm \\
    \hline
    \parbox[c]{0.5em}{\includegraphics[width=1.3cm]{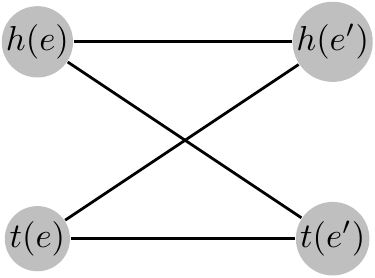}} & 4 & 1 & $\barn$\\
\hline
    \parbox[c]{0.5em}{\includegraphics[width=1.3cm]{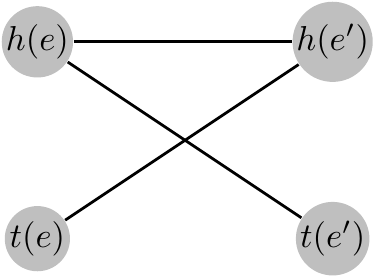}} & 3 & 1& $\barn$\\
    \parbox[c]{0.5em}{\includegraphics[width=1.3cm]{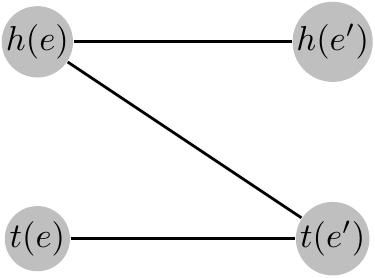}} & 3 & 2& $\barn$\\
    \parbox[c]{0.5em}{\includegraphics[width=1.3cm]{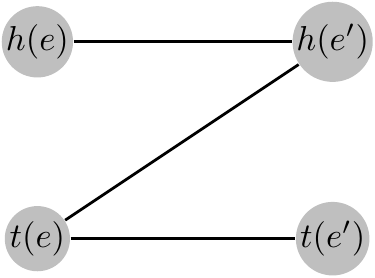}} & 3 & 3& $\barn$\\
    \parbox[c]{0.5em}{\includegraphics[width=1.3cm]{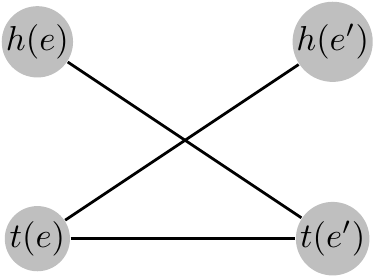}} & 3 & 4& $\barn$\\
\hline
    \parbox[c]{0.5em}{\includegraphics[width=1.3cm]{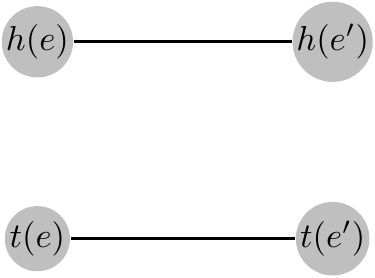}} & 2 & 1& $\barn$\\
    \parbox[c]{0.5em}{\includegraphics[width=1.3cm]{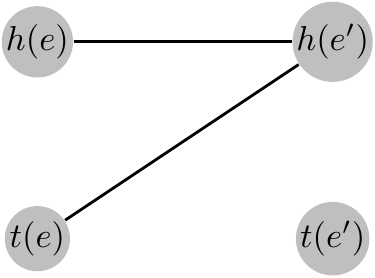}} & 2 & 2& $\barn^{3/2}$\\
    \parbox[c]{0.5em}{\includegraphics[width=1.3cm]{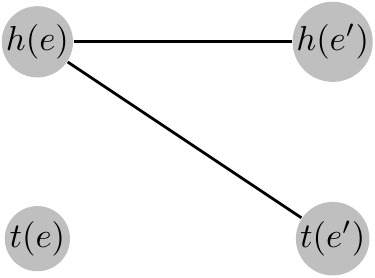}} & 2 & 3&$\barn^{3/2}$\\
    \parbox[c]{0.5em}{\includegraphics[width=1.3cm]{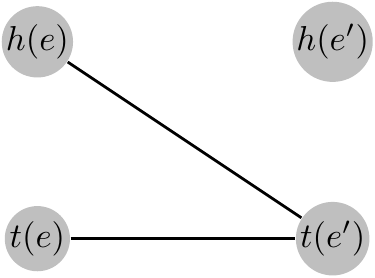}} & 2 & 4& $\barn^{3/2}$\\
    \parbox[c]{0.5em}{\includegraphics[width=1.3cm]{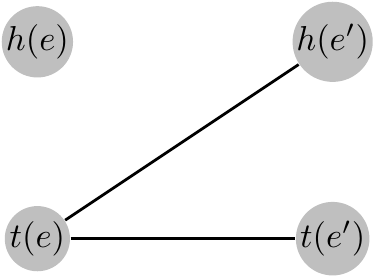}} & 2 & 5& $\barn^{3/2}$\\
    \parbox[c]{0.5em}{\includegraphics[width=1.3cm]{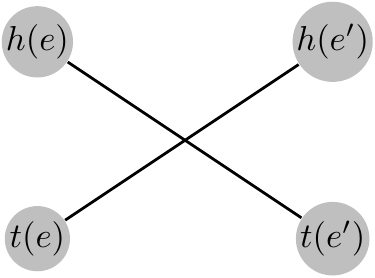}} & 2 & 6& $\barn$\\
\hline
    \parbox[c]{0.5em}{\includegraphics[width=1.3cm]{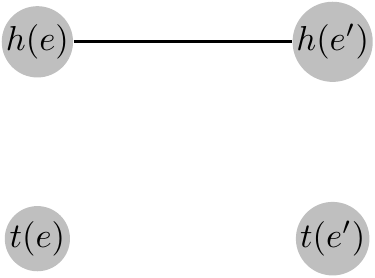}} & 1 & 1& $\barn^{3/2}$\\
    \parbox[c]{0.5em}{\includegraphics[width=1.3cm]{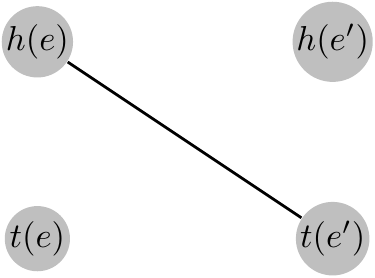}} & 1 & 2& $\barn^{3/2}$\\
    \parbox[c]{0.5em}{\includegraphics[width=1.3cm]{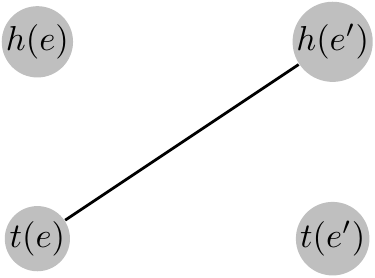}} & 1 & 3& $\barn^{3/2}$\\
    \parbox[c]{0.5em}{\includegraphics[width=1.3cm]{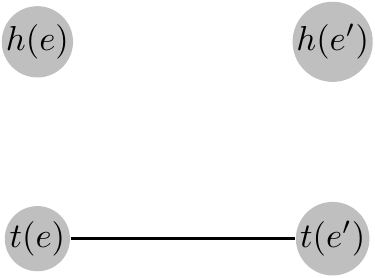}} & 1 & 4& $\barn^{3/2}$ \\
    \hline
  \end{tabular}
  \caption{Definition of the different ribbon classes and types.}
  \label{tab:ribbondef}
\end{table}

The following is a simple and general moment method lemma.
\begin{lemma}\label{lem:trpow}
  Given a matrix $X \in \reals^{m'\times n'}$, suppose
  that there exist constants $c_1$, $c_2, c_3, c_4 , c_5\ge 0$ satisfying $c_2
  \ge c_4$ and for any integer $r>0$:
  \begin{align}
    \E\Tr\{(X^{\sT}X)^{r}\} &\le \binom{n}{c_1 r + c_2} (c_5)^{2r} (c_1r +
    c_2)^{c_3 r + c_4}.
  \end{align}
  Then, for every $n$ large enough, with probability exceeding $1 -
  n^{-(\Gamma -c_2 )/2}$ we have that 
  \begin{align}
  \norm{X}_2 &\le c_4\sqrt{\exp(c_1\Gamma)n^{c_1}(\log n)^{c_3 - c_1}}.  
  \end{align}
\end{lemma}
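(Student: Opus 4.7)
The plan is to use the standard moment method: control $\norm{X}_2$ via high-moment trace bounds together with Markov's inequality.

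First, since $X^{\sT}X$ is positive semidefinite with eigenvalues $\lambda_1, \dots, \lambda_{n'} \ge 0$, we have $\norm{X}_2^{2r} = \lambda_{\max}^r \le \sum_i \lambda_i^r = \Tr\{(X^{\sT}X)^r\}$. Thus, for every integer $r \ge 1$ and every $t > 0$, Markov's inequality yields
\begin{align*}
  \P\bigl(\norm{X}_2 \ge t\bigr) \le \frac{\E\Tr\{(X^{\sT}X)^r\}}{t^{2r}} \le \frac{\binom{n}{c_1 r + c_2}\, c_5^{2r}\, (c_1 r + c_2)^{c_3 r + c_4}}{t^{2r}}.
\end{align*}

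Next, I would bound the binomial coefficient using the standard estimate $\binom{n}{k} \le (en/k)^k$, which for $k = c_1 r + c_2$ gives
\begin{align*}
  \binom{n}{c_1 r + c_2}(c_1 r + c_2)^{c_3 r + c_4} \le (en)^{c_1 r + c_2}\, (c_1 r + c_2)^{(c_3 - c_1) r + (c_4 - c_2)}.
\end{align*}
The hypothesis $c_2 \ge c_4$ ensures that the factor $(c_1 r + c_2)^{c_4 - c_2}$ is at most $1$, preventing superfluous logarithmic losses. Setting $t$ equal to the target value so that $t^{2r} = c_5^{2r}\, e^{c_1 \Gamma r}\, n^{c_1 r}\, (\log n)^{(c_3 - c_1) r}$, the $c_5^{2r}$ factors cancel and the $n^{c_1 r}$ in the denominator absorbs most of the $n^{c_1 r + c_2}$ coming from the binomial.

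The crucial step is to pick $r$ of order $(\log n)/c_1$, which balances $(c_1 r + c_2)^{(c_3 - c_1) r}$ against $(\log n)^{(c_3 - c_1) r}$ in the denominator. After cancellation the ratio reduces to roughly $n^{c_2}\, e^{c_1 r (1 - \Gamma)}$, and substituting $r \approx (\log n)/c_1$ produces decay at least as fast as $n^{-(\Gamma - c_2)/2}$ (in fact somewhat faster, leaving room to absorb multiplicative constants). The main obstacle is purely bookkeeping: one must choose an integer $r$ close to $(\log n)/c_1$, carefully track the constants produced by Stirling's bound and by the $e$-factors, and match them to the advertised probability exponent. No probabilistic ingredient beyond the moment hypothesis is required; the lemma is essentially a clean statement of what the moment method yields once the combinatorial moment bound is given in the precise form stated.
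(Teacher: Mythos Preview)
Your proposal is correct and follows essentially the same approach as the paper: Markov's inequality on $\Tr\{(X^{\sT}X)^r\}$, the bound $\binom{n}{k}\le (en/k)^k$, the use of $c_2\ge c_4$ to kill the stray factor $(c_1r+c_2)^{c_4-c_2}$, and the choice $r\approx(\log n)/c_1$ to balance the remaining terms. The paper carries out exactly these steps with $r=\lceil(\log n - c_2)/c_1\rceil$ and the same target value of $t$ (with $c_5$ in place of the typo $c_4$ in the statement).
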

\begin{proof}
  By rescaling $X$ we can assume that $c_5=1$. 
  Since $\Tr\{(X^\sT X)^{2r}\} = \sum_{i}(\sigma_i(X))^{2r}$ where $\sigma_i(X)$
  are the singular values of $X$ ordered $\sigma_1 (X)\ge \sigma_2 (X) \dots
  \sigma_N(X)$, we have that:
  \begin{align}
    \norm{X}^{2r}_2 &=\sigma_1(X)^{2r} \le \Tr\{(X^\sT X)^{r}\}.
  \end{align}
  Then, by Markov inequality  and the given assumption:
  \begin{align}
    \P\left\{ \norm{X}_2 \ge t \right\}&\le \P\left\{ \Tr\{(X^\sT X)^{2r}\} \ge 
    t^{2r} \right\} \\
    &\le t^{-2r}\E\Tr\left\{(X^\sT X)^{2r}\right\} \\
    &\le \binom{n}{c_1r +c_2} (c_1 r + c_2)^{c_3 r+ c_4}.
  \end{align}
  Using $\binom{n}{k} \le (ne/k)^k$ we have:
  \begin{align}
    \P\left\{ \norm{X}_2 \ge t \right\} &\le t^{-2r} (ne)^{c_1 r + c_2} (c_1r +
    c_2)^{(c_3-c_1)r +c_4 - c_2}\\
    &= \exp\left\{ (c_1 r + c_2) (\log n + 1) + ((c_3-c_1)r +c_4 -c_2) \log (c_1 r + c_2) - 2r \log t  \right\}.
  \end{align}
  Setting $r = \lceil(\log n - c_2)/c_1\rceil$ and using $c_2 \ge c_4$ we obtain the bound:
    \begin{align}
      \P\left\{ \norm{X}_2 \ge t \right\} &\le \exp\left\{ \log n (\log n +1 ) +  (c_3/c_1 -1)(\log n)\log\log n  - (\log n -c_2) \log (t^{2/c_1}) \right\} \\
      &\le \exp\left\{ \log n \log \left( ne (\log n)^{c_3/c_1 - 1} \right) -
      (\log n - c_2) \log (t^{2/c_1}) \right\}.
    \end{align}
    We can now set $t = \left\{\exp(\Gamma) n (\log n)^{c_3/c_1 -
    1}\right\}^{c_1/2}$ whereupon the bound on the right hand side is at most $n^{-(\Gamma - c_2)/2}$ 
    for every $n$ large enough. This yields the claim of the lemma.
\end{proof}

The next lemma specialized the previous one to the type of random matrices we
will be interested in.
\begin{lemma} \label{lem:trexp}
  For a matrix $X\in \reals^{m'\times n'}$, suppose there exists a sequence
  of graphs $G_X(r)$ with vertex, edge sets $V_r, E_r$ respectively, a set $\frakL(G_X(r))$ of labelings
  $\ell:V_r\to[n]$ and a constant $\beta>0$ such that:
  \begin{align}\label{eq:basictraceexp}
    \Tr\left\{(X^\sT X)^r\right\} &= \beta^{2r}\sum_{\ell\in \frakL(G_X(r))}
    \prod_{e \in G_X(r)} g_{\ell(e)}, 
  \end{align}
  where, for $e=\{u,v\}$, $\ell(e) = \{\ell(u), \ell(v)\}$. 
  Let $\frakL_2(G_X(r))\subseteq \frakL(G_X(r))$ denote the subset of
  contributing labelings (i.e. the set of labelings
  $\ell\in\frakL(G_X(r))$ such that every labeled edge in $G_X(r)$ is repeated \emph{at least
  twice}). Further define $v(r)$ and
  $v^*(r)$ by:
  \begin{align}
    v(r) &\equiv \abs{V_r}\, ,\\ 
    v_*(r) &\equiv v_*(G_X(r)).
  \end{align}
  Then
  \begin{align}
    \E\Tr\left\{ (X^\sT X)^{r} \right\} &\le \beta^{2r}\abs{\frakL_2(G_X(r))} \\
    &\le \binom{n}{v_*(r)} \beta^{2r} v_*(r)^{v(r)}.
  \end{align}
\end{lemma}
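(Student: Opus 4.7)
The plan is to bound the expectation in \eqref{eq:basictraceexp} by a direct edge-independence argument, then count contributing labelings crudely.

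First I would take expectations inside the sum in \eqref{eq:basictraceexp}, yielding
\begin{align}
\E\Tr\{(X^\sT X)^r\} = \beta^{2r} \sum_{\ell\in\frakL(G_X(r))} \E\Big\{\prod_{e\in G_X(r)} g_{\ell(e)}\Big\}.
\end{align}
The variables $g_{ij} = \cG_{ij} - \E\cG_{ij}$ are centered, mutually independent over distinct unordered pairs $\{i,j\}\subset [n]$, and satisfy $|g_{ij}|\le 1$ since $\cG_{ij}\in\{0,1\}$ with $\E\cG_{ij}=p\in[0,1]$. Hence, for any labeling $\ell$, we can group the factors $g_{\ell(e)}$ by the multiset of labeled pairs $\{\ell(u),\ell(v)\}$. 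If some labeled pair $\{a,b\}$ appears exactly once among the edges of $G_X(r)$, independence factors out a single $\E g_{ab}=0$, killing the whole product. Thus only \emph{contributing} labelings $\ell\in\frakL_2(G_X(r))$ can contribute a nonzero term.

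Second, for any contributing labeling, since each $|g_{\ell(e)}|\le 1$ the product is bounded in absolute value by $1$, so $|\E\prod_e g_{\ell(e)}|\le 1$. Summing,
\begin{align}
\E\Tr\{(X^\sT X)^r\} \;\le\; \beta^{2r}\, |\frakL_2(G_X(r))|,
\end{align}
which is the first claimed inequality.

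For the second inequality I would bound $|\frakL_2(G_X(r))|$ by a two-step counting argument. By the definition of $v_*(r)$, every contributing labeling $\ell$ uses at most $v_*(r)$ distinct labels from $[n]$. Hence each $\ell\in\frakL_2(G_X(r))$ is determined by (a) a choice of its label range $\range(\ell)\subseteq[n]$, of size at most $v_*(r)$, which can be enlarged to exactly $v_*(r)$ elements and thus specified in at most $\binom{n}{v_*(r)}$ ways; and (b) an assignment of each of the $v(r)=|V_r|$ vertices of $G_X(r)$ to one of these at most $v_*(r)$ labels, of which there are at most $v_*(r)^{v(r)}$. Multiplying the two gives $|\frakL_2(G_X(r))|\le \binom{n}{v_*(r)} v_*(r)^{v(r)}$, completing the proof. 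No real obstacle arises here; the content of the lemma is just packaging the standard moment-method bookkeeping (pairing of edges to kill zero-mean terms, then crude vertex counting) in the notation set up in Section~\ref{subsec:prelims}, so the work has effectively been front-loaded into the definitions of $\frakL_2$ and $v_*$.
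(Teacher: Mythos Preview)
Your proof is correct and follows essentially the same approach as the paper: take expectations term-by-term, use independence and centering of the $g_{ij}$ to kill all non-contributing labelings, bound each surviving term by $1$, and then count contributing labelings by first choosing the at most $v_*(r)$ distinct labels and then assigning them to the $v(r)$ vertices. The only cosmetic difference is that the paper rescales to $\beta=1$ at the outset, whereas you carry the $\beta^{2r}$ factor throughout.
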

\begin{proof}
  By rescaling $X$ it suffices to show the case $\beta = 1$.
  Taking expectations on either side of \myeqref{eq:basictraceexp} we have
  that:
  \begin{align}
    \E\Tr\left\{ (X^\sT X)^{r} \right\} &= \sum_{\ell\in \frakL(G_X(r))}
    \E\left\{\prod_{e \in G_X(r)} g_{\ell(e)} \right\}. 
  \end{align}
  The variables $g_{\ell(e)}$ are centered and  independent and bounded 
  by 1. Hence the only terms that do not vanish in the summation above
  correspond to labelings $\ell$ wherein every labeled edge occurs at least twice, 
  i.e. precisely when  $\ell\in \frakL_2(G_X(r))$. By the boundedness of
  $g_{\ell(e)}$, the contribution of each non-vanishing term is at most 1, hence
  \begin{align}
    \E\Tr\left\{ (X^\sT X)^{r} \right\} &\le \abs{\frakL_2(G_X(r))}.
  \end{align}
  It now remains to prove that $\abs{\frakL_2(G_X(r)} \le \binom{n}{v_*(r)} v(r)^{v(r)}$.
  By definition, $\ell$ can map the vertices in $V_r$ to at most $v_*(r)$
  distinct labels. There are at most $\binom{n}{v_*(r)}$ distinct ways to pick
  these labels in $[n]$, and at most $v_*(r)^{v(r)}$ ways to assign the
  $v_*(r)$ labels to $v(r)$ vertices, yielding the required bound.
\end{proof}
\begin{lemma}\label{lem:gentrbnd}
  Consider the setting of Lemma \ref{lem:trexp}. If we additionally have
  \begin{align}
    v_*(r) &\le c_1 r + c_2 \\
    v(r) &= c_3 r + c_4,
  \end{align}
  where $c_3 \le 2c_1$ then $\norm{X}_2 \ls \beta\barn^{c_1/2}$ with probability
  at least $1-n^{-5}$.
\end{lemma}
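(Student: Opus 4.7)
The proof is a direct combination of the two preceding moment-method lemmas, essentially a bookkeeping exercise.

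The plan is to start from the bound of Lemma \ref{lem:trexp} and then feed it into Lemma \ref{lem:trpow}. Concretely, Lemma \ref{lem:trexp} gives
\begin{align}
\E\Tr\{(X^\sT X)^r\} \;\le\; \binom{n}{v_*(r)}\, \beta^{2r}\, v_*(r)^{\,v(r)}.
\end{align}
Substituting the hypotheses $v_*(r) \le c_1 r + c_2$ and $v(r) = c_3 r + c_4$, and using the monotonicity of $k \mapsto \binom{n}{k}$ for $k \le n/2$ (which holds for all $r$ of interest, since we only apply the bound for $r \asymp \log n$ while $n$ is large), we obtain
\begin{align}
\E\Tr\{(X^\sT X)^r\} \;\le\; \binom{n}{c_1 r + c_2}\, \beta^{2r}\, (c_1 r + c_2)^{c_3 r + c_4}.
\end{align}

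This is precisely the form of the hypothesis in Lemma \ref{lem:trpow}, with the same constants $c_1, c_2, c_3, c_4$ and with the role of $c_5$ played by $\beta$. If the technical condition $c_2 \ge c_4$ happens not to hold, we simply enlarge $c_2$ (and correspondingly shift the $c_2$ in $v_*(r) \le c_1 r + c_2$), which only weakens the bound above by a constant factor that is absorbed into the $\ls$.

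Next I would invoke Lemma \ref{lem:trpow} with the choice $\Gamma = c_2 + 10$, which produces probability of failure at most $n^{-(\Gamma - c_2)/2} = n^{-5}$, and yields
\begin{align}
\norm{X}_2 \;\ls\; \beta\sqrt{\,n^{c_1}\,(\log n)^{c_3 - c_1}\,}
\end{align}
(constants absorbed into $\ls$). Finally, the assumption $c_3 \le 2c_1$ gives $(\log n)^{c_3 - c_1} \le (\log n)^{c_1}$, so
\begin{align}
\sqrt{n^{c_1}(\log n)^{c_3-c_1}} \;\le\; \sqrt{n^{c_1}(\log n)^{c_1}} \;=\; \barn^{\,c_1/2},
\end{align}
which gives $\norm{X}_2 \ls \beta\, \barn^{c_1/2}$ with probability at least $1 - n^{-5}$, as claimed. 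There is no real obstacle here beyond verifying that the two lemmas' constants can be aligned; the only subtlety is the role of the condition $c_3 \le 2c_1$, which is exactly what converts the $(\log n)^{c_3 - c_1}$ factor produced by Lemma \ref{lem:trpow} into a clean power of $\barn = n\log n$.
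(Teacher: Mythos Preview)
Your proposal is correct and follows exactly the approach the paper takes: the paper's entire proof reads ``The proof follows by combining Lemmas \ref{lem:trexp} and \ref{lem:trpow},'' and you have simply spelled out that combination, including the role of the hypothesis $c_3 \le 2c_1$ in absorbing the $(\log n)^{c_3-c_1}$ factor into $\barn^{c_1/2}$.
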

\begin{proof}
  The proof follows by combining Lemmas \ref{lem:trexp} and \ref{lem:trpow}.
\end{proof}

\subsection{The expected values $\E\{\tN_{22}\}$, $\E\{\tN_{12}\}$}
\label{subsec:expectation}

In  this section we characterize the eigenstructure of the expectations
$\E\{\tN_{22}\}$, $\E\{\tN_{12}\}$.
These can be viewed as linear operators on $\reals^{\nCe}$ that are
invariant under the action of permutations\footnote{A permutation
  $\sigma:[n]\to [n]$ acts on $\reals^{\nCe}$ by permuting the indices
in $\nCe$ in the obvious way, namely $\sigma(\{i,j\}) =\{\sigma(i),\sigma(j)\}$.} on $\reals^{\nCe}$.
By Schur's  Lemma \cite{serre1977linear}, their eigenspace decomposition corresponds
to the decomposition $\reals^{\nCe}$ in irreducible representations of
the group of permutations.
This is given by $\reals^{\nCe}= \bbV_0 \oplus \bbV_1 \oplus \bbV_2$,
where
\begin{align}
  \bbV_0 &\equiv \{v \in \reals^{\nCe}: \exists u\in\reals \text{s.t.}
            v_{\{i, j\}} = u \mbox{ for all } i<j\} \\
  \bbV_1 &\equiv \{v \in \reals^{\nCe}: \exists u\in\reals^{n},
           \text{s.t.} \<\one_n, u\> = 0 \text{ and } 
\quad v_{\{i, j\}} = u_i + u_j \mbox{ for all } i<j\}\\
  \bbV_{2} &\equiv (\bbV_{0}\oplus\bbV_{1})^{\perp}.
\end{align}
An alternative approach to  defining the spaces $\bbV_a$ is 
to let $\bbV_0 = \sspan(v_0), \bbV_1 =
\sspan(v_1^i, i=1\dots n), \bbV_2 = \sspan(v_2^{ij}, 1\le i<j\le n)$,
where
\begin{align}
  (v_0)_A &= \sqrt\frac{2}{n(n-1)} \\
  (v_1^i)_A &= \begin{cases}
    \sqrt{\frac{n-2}{n(n-1)}} &\text{if }  A=\{i, \cdot\} \\
    -\frac{2}{\sqrt{n(n-1)(n-2)}} &\text{otherwise.}
  \end{cases} \\
  (v_2^{ij})_A &= \begin{cases}
    \sqrt{\frac{n-3}{n-1}} &\text{if } A= \{i, j\} \\
    -\frac{1}{n-2}\sqrt{\frac{n-3}{n-1}} &\text{if } A = \{i, \cdot\} \text{ or } \{j, \cdot\} \\
    \frac{1}{\binom{n-2}{2}}\sqrt{\frac{n-3}{n-1}} &\text{ otherwise.}
  \end{cases}
\end{align}
Notice that $\dim(\bbV_0) = 1$, $\dim(\bbV_1) = n-1$, $\dim(\bbV_2) = n(n-3)/2$, 
and that $\{v_1^i\}_{i\in [n]}$,  $\{v_1^{i,j}\}_{i,j\in [n]}$ are
overcomplete sets.
For $a\in \{0,1,2\}$, we denote by $V_a$ the matrix whose rows
are given by this overcomplete basis of $\bbV_a$

It is straightforward to check that the two definitions of the
orthogonal  decomposition 
$\reals^{\nCe}=\bbV_0 \oplus \bbV_1 \oplus \bbV_2$ given above coincide.
We let $\proj_a\in \reals^{\nCe\times\nCe}$ denote
the orthogonal projector on the space $\bbV_a$.

The following proposition
gives the eigenstructure of $\E\{\tN_{22}\}$.
\begin{proposition}\label{prop:EM22eig}
 The matrix  $\E\{\tN_{22}\}$ has the following spectral
 decomposition
\begin{align}
\E\{\tN_{22}\} = \lambda_0 \proj_0 + \lambda_1 \proj_1 +
  \lambda\proj_2\, ,
\end{align}
where
\begin{align}
  \lambda_0 &= \alpha_2 + 2(n-2)\alpha_3 p +
  \frac{(n-2)(n-3)}{2}\alpha_4 p^4 - \frac{n(n-1)}{2}\alpha_2^2\, ,
  \\
  \lambda_1 &= \alpha_2 + (n-4)\alpha_3 p - (n-3)\alpha_4 p^4 \, ,\\
  \lambda_2 &= \alpha_2 -2\alpha_3 p + \alpha_4 p^4.
\end{align}
\end{proposition}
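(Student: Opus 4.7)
The plan is to exploit $S_n$-symmetry to reduce the problem to verifying three scalar identities, one per irreducible component of $\reals^{\nCe}$.

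First, I would observe that $(\E\{\tN_{22}\})_{A,B}$ depends only on the overlap $|A\cap B|\in\{0,1,2\}$. Since $\tN_{A,B}=\alpha_{|A\cup B|}\prod_{i\in A\setminus B,\,j\in B\setminus A}\cG_{ij}-\alpha_2^2$ and distinct edges of the random graph are independent, a one-line computation gives
\begin{align*}
(\E\{\tN_{22}\})_{A,B}=\begin{cases}\alpha_2-\alpha_2^2 & \text{if } |A\cap B|=2,\\ \alpha_3 p-\alpha_2^2 & \text{if } |A\cap B|=1,\\ \alpha_4 p^4-\alpha_2^2 & \text{if } |A\cap B|=0.\end{cases}
\end{align*}
Letting $A_t\in\reals^{\nCe\times\nCe}$ denote the $0/1$ indicator matrix of overlap $t$, one has $\E\{\tN_{22}\}=(\alpha_2-\alpha_2^2)A_2+(\alpha_3 p-\alpha_2^2)A_1+(\alpha_4 p^4-\alpha_2^2)A_0$, an element of the Bose--Mesner algebra of the Johnson scheme $J(n,2)$.

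Next, I would invoke Schur's lemma. The orthogonal decomposition $\reals^{\nCe}=\bbV_0\oplus\bbV_1\oplus\bbV_2$ written in the text coincides with the canonical decomposition of $\reals^{\nCe}$ into pairwise inequivalent irreducible $S_n$-representations (of dimensions $1$, $n-1$, $n(n-3)/2$). Each $A_t$ is $S_n$-invariant, hence so is $\E\{\tN_{22}\}$, which by Schur's lemma must act as a scalar $\lambda_a$ on $\bbV_a$; this already yields the form $\E\{\tN_{22}\}=\lambda_0\proj_0+\lambda_1\proj_1+\lambda_2\proj_2$.

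The last step is to read off $\lambda_0,\lambda_1,\lambda_2$ from the action of $\E\{\tN_{22}\}$ on one test vector per subspace. For $\bbV_0$, the all-ones vector delivers $\lambda_0$ as a common row sum, which breaks up as $1$ diagonal, $2(n-2)$ overlap-$1$, and $\binom{n-2}{2}$ overlap-$0$ terms, giving the announced expression. For $\bbV_1$, I would take $v_{\{i,j\}}=u_i+u_j$ with $\sum_i u_i=0$, fix $A=\{a,b\}$, and evaluate $(\E\{\tN_{22}\}v)_A$ by summing the overlap-$1$ and overlap-$0$ contributions using the identity $\sum_{k\notin A}u_k=-(u_a+u_b)$; these sums collapse to $(n-4)(u_a+u_b)$ and $-(n-3)(u_a+u_b)$ respectively, the $\alpha_2^2$ pieces cancel (coefficient $-1-(n-4)+(n-3)=0$), and $\lambda_1$ emerges. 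For $\bbV_2$ I would either run the same bookkeeping on, e.g., $v_2^{ij}$ from the overcomplete basis, or shortcut it by using the trace relations $\Tr A_1=\Tr A_0=0$ together with the already-computed eigenvalues of $A_0,A_1$ on $\bbV_0,\bbV_1$ to solve for their eigenvalues on $\bbV_2$; either route produces $\lambda_2=\alpha_2-2\alpha_3 p+\alpha_4 p^4$. The only real obstacle is combinatorial accounting in the $\bbV_1$ and $\bbV_2$ test computations; the symmetry argument does the conceptual work and avoids diagonalizing the full $\binom{n}{2}\times\binom{n}{2}$ matrix by hand.
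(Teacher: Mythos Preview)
Your proposal is correct and follows essentially the same approach as the paper: both rely on $S_n$-invariance (the paper invokes Schur's lemma in the paragraph preceding the proposition) to conclude that $\bbV_0,\bbV_1,\bbV_2$ are eigenspaces, and then both extract the eigenvalues by evaluating on a test vector in each subspace (the paper writes $\lambda_\ell=\langle v_\ell^A,\E\{\tN_{22}\}v_\ell^A\rangle$, leaving the arithmetic implicit). Your explicit Bose--Mesner/Johnson-scheme framing is precisely what the paper alludes to in the remark immediately after the proposition; so the two routes coincide, yours being the more detailed write-up.
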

\begin{proof}
  It is straightforward to verify that the vectors $v_\ell^A$ defined
  above are eigenvectors of $\E\{\tN_{22}\}$. The eigenvalues are
  then given by $\lambda_\ell = \<v_{\ell}^A, \E\{\tN_{22}\}v_{\ell}^A\>$
  for an arbitrary choice of $A = \{i\}$ or $\{i, j\}$. 
\end{proof}
\begin{remark}
  The above eigenvalues can also be computed using  
  \cite{meka2013association} which relies on the theory of
  association schemes. We preferred to present a direct and
  self-contained derivation.
\end{remark}

We now have a similar proposition for $\E\{\tN_{12}\}\in
\reals^{\binom{[n]}{1}\times \nCe}$. More precisely, we decompose 
$\reals^{\binom{[n]}{1}}$ in $\sspan(\one_m)$ and its orthogonal complement,
and $\reals^{\nCe}=\bbV_0 \oplus \bbV_1 \oplus \bbV_2$ as above.
\begin{proposition}\label{prop:EM12eig}
  The following hold for all $n$ large enough:
  \begin{align}
    \qproj_n^\perp \E\{\tN_{12}\} \proj_0 &= 0\\
    \norm{\qproj_n^\perp \E\{\tN_{12}\}\proj_1}_2 &\le \sqrt{n}\alpha_2\\
    \qproj_n^\perp \E\{\tN_{12}\}\proj_2 &= 0\\
    \norm{ \qproj_n\E\{\tN_{12}\} \proj_0}_2 &\le n^{3/2}\alpha_3 p^2 + 2\sqrt{n}\alpha_2\\
    \qproj_n\E\{\tN_{12}\} \proj_1 &= 0 \\
    \qproj_n\E\{\tN_{12}\}\proj_2 &= 0.
  \end{align}
\end{proposition}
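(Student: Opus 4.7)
The plan is to combine representation-theoretic equivariance with direct computation. The matrix $\E\{\tN_{12}\}$ commutes with the diagonal action of $S_n$ on $\reals^{[n]}$ and $\reals^{\nCe}$, so by Schur's lemma it carries each isotypic component of the domain into the matching isotypic component of the codomain. Since $\reals^{[n]} = \sspan(\one_n)\oplus\sspan(\one_n)^\perp$ contains only the trivial and standard representations of $S_n$, while $\bbV_0$, $\bbV_1$, $\bbV_2$ carry (respectively) the trivial representation, the standard representation, and the irreducible associated to partition $(n-2,2)$, we immediately obtain $\E\{\tN_{12}\}\proj_2 = 0$ (yielding the third and sixth identities), $\E\{\tN_{12}\}\proj_0\subseteq\sspan(\one_n)$ (yielding the first identity), and $\E\{\tN_{12}\}\proj_1\subseteq\sspan(\one_n)^\perp$ (yielding the fifth identity). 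Only the two norm bounds require explicit computation.

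To compute those, I would first evaluate the entries of $\E\{\tN_{12}\}$ using $\tN_{A,B} = N_{A,B} - \alpha_{|A|}\alpha_{|B|}$: for $A = \{i\}$ and $B = \{k,\ell\}$, the product in $N_{A,B}$ is empty (hence equals $1$) when $i\in\{k,\ell\}$ and equals $\cG_{ik}\cG_{i\ell}$ otherwise, giving
\begin{align}
\E\{\tN_{\{i\},\{k,\ell\}}\} = \begin{cases} \alpha_2 - \alpha_1\alpha_2 & \text{if } i\in\{k,\ell\},\\ \alpha_3 p^2 - \alpha_1\alpha_2 & \text{otherwise.} \end{cases}
\end{align}
For the $\bbV_1$ bound, I parametrize $v_{\{k,\ell\}} = u_k + u_\ell$ with $\<\one_n,u\>=0$. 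Exploiting $\sum_{j\ne i}u_j = -u_i$ and $\sum_{k<\ell;\, k,\ell\ne i}(u_k+u_\ell) = -(n-2)u_i$, the two $-\alpha_1\alpha_2$ contributions cancel exactly, leaving $(\E\{\tN_{12}\}v)_i = (n-2)(\alpha_2 - \alpha_3 p^2)u_i$. Combined with the identity $\|v\|_2^2 = (n-2)\|u\|_2^2$ (which follows from $\<\one_n, u\> = 0$), this gives the operator norm $\sqrt{n-2}\,|\alpha_2 - \alpha_3 p^2|$, at most $\sqrt{n}\,\alpha_2$ in the regime of Theorem \ref{thm:main} where $\alpha_3 p^2\le\alpha_2$. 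For the $\bbV_0$ bound, the scalar $\sum_{k<\ell}\E\{\tN_{\{i\},\{k,\ell\}}\}$ simplifies to $(n-1)\alpha_2 + \binom{n-1}{2}\alpha_3 p^2 - \binom{n}{2}\alpha_1\alpha_2$, independently of $i$, so $\E\{\tN_{12}\}v_0$ is a multiple of $\one_n$; separating the three terms by the triangle inequality and using the normalization $\|v_0\|_2 = 1$ yields the claimed bound $n^{3/2}\alpha_3 p^2 + 2\sqrt{n}\alpha_2$ (the $\alpha_1\alpha_2$ piece is absorbed as a lower-order contribution).

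The argument is essentially algebraic and presents no serious obstacle; the only care needed is in bookkeeping the $\alpha_1\alpha_2$ terms from the Schur-complement definition of $\tN$, which vanish cleanly on $\bbV_1$ precisely because of the constraint $\<\one_n, u\> = 0$. If one prefers to avoid the Schur-lemma shortcut, the identity $\E\{\tN_{12}\}\proj_2 = 0$ can also be checked directly: parametrizing $v_{\{k,\ell\}} = W_{k\ell}$ with $W$ symmetric, zero-diagonal, and having vanishing row sums and total sum, one obtains $(\E\{\tN_{12}\}v)_i = (\alpha_2 - \alpha_1\alpha_2)\sum_{j\ne i}W_{ij} + (\alpha_3 p^2 - \alpha_1\alpha_2)\sum_{k<\ell;\, k,\ell\ne i}W_{k\ell}$, in which both partial sums vanish by the defining row-sum and total-sum conditions on $W$.
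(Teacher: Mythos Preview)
Your proposal is correct and reaches the same conclusions as the paper, but the route to the four vanishing identities is organized differently. The paper writes down an explicit two-term decomposition
\[
\E\{\tN_{12}\} \;=\; \frac{\binom{n-1}{2}(\alpha_3 p^2 - \alpha_1\alpha_2) + (n-1)(\alpha_2 - \alpha_3 p^2)}{\sqrt{\binom{n}{2}}}\,\one_n v_0^{\sT} \;+\; \sqrt{\tfrac{(n-1)(n-2)}{n}}\,(\alpha_2 - \alpha_3 p^2)\,V_1,
\]
where $V_1$ is the matrix with rows $v_1^i$, and then reads off every claim from this formula; in particular the $\bbV_1$ norm is obtained by computing $\lambda_{\max}(V_1 V_1^{\sT}) = n/(n-1)$ via the Gram matrix of the overcomplete system $\{v_1^i\}$. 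You instead invoke Schur's lemma abstractly to dispatch the vanishing claims in one stroke, and then compute the two norms by parametrizing $\bbV_1$ as $v_{\{k,\ell\}} = u_k + u_\ell$ and using the identity $\|v\|_2^2 = (n-2)\|u\|_2^2$. Both paths land on the same exact value $\sqrt{n-2}\,|\alpha_2 - \alpha_3 p^2|$ for the $\bbV_1$ operator norm; your approach is slightly more conceptual for the zero blocks and avoids introducing the overcomplete basis, while the paper's explicit decomposition has the advantage that all six statements are visible simultaneously from a single formula.
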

\begin{proof}
  For $A\in \binom{[n]}{1}$ and $B\in \nCe$:
  \begin{align}
    (\E\{\tN_{12}\})_{A, B} &= \begin{cases}
      \alpha_3 p^2 - \alpha_1 \alpha_2 &\text{ if } \abs{A\cap B} = 0 \\
      \alpha_2 - \alpha_1 \alpha_2 &\text{ if } \abs{A\cap B} = 1.
    \end{cases}
  \end{align}
  Recall from the definition of the space $\bbV_1 = \sspan(\{v^A_1\}_{A\in\binom{[n]}{1}})$. 
  We can write $\E\{\tN_{12}\}$ as:
  \begin{align}
    \E\{\tN_{12}\} &= \frac{\binom{n-1}{2}(\alpha_3 p^2 - \alpha_1\alpha_2) + (n-1) (\alpha_2 - \alpha_3p^2) }
    {\sqrt{\binom{n}{2}}}\one_n v_0^\sT 
    + \sqrt{\frac{(n-1)(n-2)}{n}} (\alpha_2 - \alpha_3 p^2) V_1.
  \end{align}
  This implies all but the second and the fourth claims immediately as 
  $V_1\proj_0 = V_1\proj_2 = 0$, $\qproj_n V_1 = 0$ and $\qproj_n^\perp\one_n = 0$. 
  For the second claim, 
  the above decomposition yields:
\begin{align}
  \norm{\qproj_n^\perp \E\{\tN_{12}\}\proj_1   }_2 &= \max _{x \in\bbV_1: \norm{x}_2 \le 1} \norm{\sqrt{\frac{(n-1)(n-2)}{n}}(\alpha_2 - \alpha_3 p^2) V_1 x}_2 \\
  &= \sqrt\frac{(n-1)(n-2)}{n} (\alpha_2 - \alpha_3 p^2)
    \sqrt{\lambda_{\rm max}
(V_1V_1^\sT)}.
\end{align}
Since $\<v^A_1, v^{A'}_1\> = -1/(n-1)$ when $A\ne A'$ and $1$ otherwise,
we have that:
\begin{align}
  V_1V_1^\sT &= \frac{n }{n-1}\, \id_n - \frac{1}{n-1}\, \one_n(\one_n)^\sT,
\end{align}
hence $\lambda_{\rm max}(V_1V_1^\sT) = n/(n-1)$. This implies
that:
\begin{align}
  \norm{\qproj_n^\perp \E\{\tN_{12}\}\proj_1}_2 &=
  \sqrt{n-2} (\alpha_2 - \alpha_3 p^2)\le \sqrt{n} \alpha_2.
\end{align}
For the fourth claim, the expression for $\E\{\tN_{12}\}$ above yields that:
\begin{align}
  \norm{\qproj_n\E\{\tN_{12}\} \proj_0}_2 &= \frac{\binom{n-1}{2}(\alpha_3 p^2 - \alpha_1 \alpha_2) + (n-1) (\alpha_2-\alpha_3p^2) }
  {\sqrt{\binom{n}{2}}} \sqrt{n} \\
  &\le \frac{\binom{n-1}{2}\alpha_3 p^2}{\sqrt{\frac{n-1}{2}}} +
  \frac{(n-1)\alpha_2}{\sqrt{\frac{n-1}{2}}}\\
  &\le n\sqrt{n}\alpha_3 p^2 + 2\sqrt{n}\alpha_2.
\end{align}
\end{proof}

\subsection{Controlling $\tN_{11} - \E\{\tN_{11}\}$} \label{subsec:deviationM11}

The block $\tN_{11}$ is a linear combination of the identity and the
adjacency matrix of $G$. Hence, its spectral properties are well
understood,
since the seminal work of F\"uredi-Koml\'os
\cite{furedi1981eigenvalues}. While the nest proposition could be
proved using these results, we present an self-contained proof for
pedagogical reasons,  as the same argument will be repeated several
times later for more complex examples.
\begin{proposition} \label{prop:deviationM11}
  Suppose that $\ualpha$ satisfies:
  \begin{align}
    \frac{\alpha_1}{2} - \alpha_2 p &\gs
    \alpha_2\barn^{1/2} \, ,\\
\alpha_2p-\alpha_1^2\ge 0\,,\;\;\;\;\;     \alpha_1 &\ge 0. 
  \end{align}
  Then with probability at least $1-n^{-5}$:
  \begin{align}
    \tN_{11} &\mge 0\, ,\\
    \tN_{11}^{-1} &\mle \frac{1}{n(\alpha_1 p - \alpha_1^2)}\qproj_n +
    \frac{2}{\alpha_1} \qproj_n^\perp
  \end{align}
\end{proposition}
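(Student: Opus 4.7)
The plan is to split $\tN_{11}$ into its expectation plus the random fluctuation, diagonalize the expectation explicitly, and then control the fluctuation in operator norm by the moment method. First I would write out the entries directly from the definition: for $i\neq j$, $\tN_{\{i\},\{j\}} = \alpha_2 \cG_{ij} - \alpha_1^2 = (\alpha_2 p - \alpha_1^2) + \alpha_2 g_{ij}$, while $\tN_{\{i\},\{i\}} = \alpha_1 - \alpha_1^2$. With $g$ the centered off-diagonal matrix (with $g_{ii}=0$) this gives the clean decomposition
\begin{align}
\tN_{11} \;=\; (\alpha_1 - \alpha_2 p)\,\qproj_n^\perp \;+\; \bigl(\alpha_1 - \alpha_2 p + n(\alpha_2 p - \alpha_1^2)\bigr)\,\qproj_n \;+\; \alpha_2\, g.
\end{align}
Thus $\E\{\tN_{11}\}$ is exactly diagonal in the $\qproj_n,\qproj_n^\perp$ basis, with eigenvalues $n(\alpha_2 p - \alpha_1^2) + (\alpha_1 - \alpha_2 p)$ and $(\alpha_1 - \alpha_2 p)$ respectively, both nonnegative under the hypotheses.

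Next I would control $\|g\|_2$ by moments. The matrix $g$ has i.i.d.\ zero-mean entries bounded by $1$, so $\Tr(g^{2r})$ can be written as a sum over closed walks of length $2r$ on $[n]$, and fits the template of Lemma \ref{lem:trexp} with $G_X(r)$ a cycle of length $2r$ and $\beta = 1$. Only walks in which every edge is traversed at least twice contribute; the standard Wigner tree argument gives $v_*(r)\le r+1$ and $v(r) = 2r$, so $c_1 = 1,\ c_2 = 1,\ c_3 = 2,\ c_4 = 0$ satisfy $c_3 \le 2c_1$. Lemma \ref{lem:gentrbnd} then yields
\begin{align}
\|g\|_2 \;\lesssim\; \barn^{1/2}
\end{align}
with probability at least $1 - n^{-5}$.

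On this event, combining the two pieces gives
\begin{align}
\tN_{11} \;\succeq\; \bigl(\alpha_1 - \alpha_2 p - \alpha_2\|g\|_2\bigr)\,\qproj_n^\perp \;+\; n(\alpha_2 p - \alpha_1^2)\,\qproj_n \;\succeq\; \tfrac{\alpha_1}{2}\,\qproj_n^\perp + n(\alpha_2 p - \alpha_1^2)\,\qproj_n,
\end{align}
where the last step uses precisely the hypothesis $\alpha_1/2 - \alpha_2 p \gtrsim \alpha_2 \barn^{1/2}$ to absorb $\alpha_2\|g\|_2$. This proves $\tN_{11}\succeq 0$; inverting the right-hand side (which is block-diagonal in $\qproj_n,\qproj_n^\perp$) gives the stated bound on $\tN_{11}^{-1}$.

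The only nontrivial step is the operator-norm bound on $g$; the rest is bookkeeping about the spectral decomposition. I do not expect any real obstacle here, but this short argument is a useful warm-up for the more elaborate moment computations on $\tN_{12}$ and $\tN_{22}$, where the combinatorics of ``ribbons'' replaces the simple cycle $G_X(r)$ used above.
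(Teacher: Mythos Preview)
Your proof is correct and follows essentially the same approach as the paper: both decompose $\tN_{11}$ as $(\alpha_1-\alpha_2 p)\id_n + n(\alpha_2 p-\alpha_1^2)\qproj_n + \alpha_2 g$, bound $\|g\|_2\lesssim \barn^{1/2}$ via the moment method using the cycle count $v_*\le m+1$, and then invert the resulting diagonal lower bound. The only cosmetic difference is that you write the deterministic part directly in the $\qproj_n,\qproj_n^\perp$ basis from the outset, whereas the paper first keeps it as $(\alpha_1-\alpha_2 p)\id_n + n(\alpha_2 p-\alpha_1^2)\qproj_n$ and rewrites at the end.
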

\begin{proof}
  First, note that:
  \begin{align}
    \E\{\tN_{11}\} &= (\alpha_1 - \alpha_2 p)\id_n + (\alpha_2 p - \alpha_1^2 )\, n \qproj_n\, .
  \end{align}
  Furthermore, for $A, B\in \binom{[n]}{1}$, $A\neq B$, $(\tN_{11}- \E\{\tN_{11}\})_{A, B} = \alpha_2
  g_{AB}$. Here, we identify elements of $\binom{[n]}{1}$ with elements
  of $[n]$ in the natural way. 
  Thus, expanding $\Tr\left\{ ((\tN_{11} -
  \E\tN_{11})^{\sT}(\tN_{11}  - \E\tN_{11} ))^m \right\}$ we obtain:
  \begin{align}
    \Tr\left\{ \left( (\tN_{11} - \E\{\tN_{11}\})^\sT (\tN_{11} -
    \E\{\tN_{11}\}) \right)^m \right\} &=  \alpha_2^{2m} \sum_{A_1 \dots
      A_m, A'_1 \dots A'_m}\prod_{\ell=1}^{m} g_{A_\ell A'_\ell}
    g_{A_{\ell+1}A'_\ell},
  \end{align}
  where we set $A_{m+1}\equiv A_1$. 
  Let $D(m)$ be a cycle of length $2m$, $V_D, E_D$ be its vertex and edge
  sets respectively, and $\ell$ be a labeling 
  that assigns to the vertices labels $A_1, A'_1, A_2, A'_2 \dots A_m, A'_m$ in
  order. Then the summation over indices $A_1 \dots A_m,
  A'_1 \dots A'_m$ can be expressed as a sum over such labelings of the
  cycle $D(m)$, i.e.:
  \begin{align}
    \Tr\left\{ \left( (\tN_{11} - \E\{\tN_{11}\})^\sT (\tN_{11} -
    \E\{\tN_{11}\}) \right)^m \right\} &=  \alpha_2^{2m} \sum_{\ell\in
    \frakL(D)}\prod_{e=\{u, v\}\in E_D } g_{\ell(u)\ell(v)}.
  \end{align}
  Let $\frakL_2(D(m))$ denote the set of contributing labelings of $D(m)$. 
  By Lemma \ref{lem:gentrbnd}, it suffices to show that $\max_{\ell\in
  \frakL_2(D(m))}
  \abs{\range(\ell)} \le m+1$.
  Since for a contributing
  labeling $\ell$ of $D(m)$, every edge must occur at least twice, there are at
  most $m$ unique labelings of the edges of $D(m)$. If we consider the graph
  obtained from $(D, \ell)$ by identifying in $D$ the vertices with the same
  label, we obtain a connected graph with at most $m$ edges, hence at most $m+1$
  unique vertices. This implies that there are at most $m+1$ unique labels in
  the range of a contributing labeling $\ell$. Hence with probability at
  least $1-n^{-5}$:
  \begin{align}
    \norm{\tN_{11} - \E\{\tN_{11}\}}_2 &\ls \alpha_2 \barn^{1/2},
  \end{align}
  Hence with the same probability:
  \begin{align}
    \tN_{11} &\mge (\alpha_1 - \alpha_2 p - C\alpha_2 \barn^{1/2})\id_n +
    (\alpha_2 p-\alpha_1^2) \,n\, \qproj_n,
  \end{align}
  for some constant $C$. 
  Under the condition $\alpha_1/2 - \alpha_2 p \gs \alpha_2 \barn^{1/2}$ (with a
  sufficiently large constant which we suppress) we have that:
\begin{align}
  \tN_{11} &\mge \frac{\alpha_1}{2}\id_n + (\alpha_2 p -
  \alpha_1^2) \,n\, \qproj_n\, ,
\end{align}
or, equivalently,
\begin{align} 
  \tN_{11} &\mge \frac{\alpha_1}{2}\qproj_n^\perp + (\alpha_2 p - \alpha_1^2 ) \,n\, \qproj_n.
\end{align}
Inverting this inequality yields the claim for $\tN_{11}^{-1}$. 
This completes the proof of the proposition.
\end{proof}

\subsection{Controlling $\tN_{22} - \E\{\tN_{22}\}$} \label{subsec:deviationM22}

The following proposition is the key result of this subsection. 
\begin{proposition}\label{prop:deviationM22}
  With probability at least $1-25n^{-5}$
  the following hold:
  \begin{align}
    \text{ For } a \in \{0, 1\}\quad\quad \norm{\proj_a (\tN_{22} -
    \E\{\tN_{22}\})\proj_a}_2 &\ls  \alpha_3 \barn^{1/2} +
    \alpha_4 \barn^{3/2}\, ,\label{eq:ClaimN22_1}\\
    \norm{\proj_2 (\tN_{22} - \E\{\tN_{22}\})\proj_2}_2 &\ls \alpha_3
    \barn^{1/2} + \alpha_4 \barn\, , \label{eq:ClaimN22_2}\\ 
    \text{ For } a\ne b \in\{ 0, 1, 2 \}\quad\quad \norm{\proj_a(\tN_{22} -
    \E\{\tN_{22}\})\proj_b}_2 &\ls \alpha_3 \barn^{1/2} +
    \alpha_4 \barn^{3/2}\, . \label{eq:ClaimN22_3}
  \end{align}
\end{proposition}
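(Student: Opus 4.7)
The plan is to expand $\tN_{22} - \E\{\tN_{22}\}$ entry by entry, split by the cardinality of the index intersection, and then control each resulting building block via the moment method developed in Section~\ref{subsec:prelims}. For pairs $A, B\in\nCe$ with $|A\cap B|=2$ the entry of the centered matrix vanishes; for $|A\cap B|=1$ the entry equals $\alpha_3(\cG_{ij}\cG_{k\ell}-p^2)$ for some $i,j,k,\ell$, giving a sparse matrix of $\ell^2$-contribution at most $\alpha_3\barn^{1/2}$ by a F\"uredi-Koml\'os-type cycle-counting argument identical in spirit to Proposition~\ref{prop:deviationM11}; for $|A\cap B|=0$ the entry admits the $15$-term expansion of Eq.~(\ref{eq:N22approxdecomp}), producing the fifteen matrices $\tJ_{\eta,\nu}$ indexed by ribbon class $\eta\in\{1,2,3,4\}$ and type $\nu\in[\binom{4}{\eta}]$.

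Next, I would control each $\|\proj_a \tJ_{\eta,\nu}\proj_b\|_2$ by the moment method. Expanding $\Tr\{((\proj_a\tJ_{\eta,\nu}\proj_b)^\sT(\proj_a\tJ_{\eta,\nu}\proj_b))^r\}$ and using the overcomplete bases $\{v_a^A\}$ of $\bbV_a$, one rewrites the trace as a sum over valid labelings of a graph $G_X(r)$ which, for $a=b$, is a $(\eta,\nu)$-ribbon of length $2r$, and, for $a\neq b$, is the corresponding $(\eta,\nu)$-star ribbon obtained by identifying one vertex per couple. After taking expectations only contributing labelings survive (each labeled edge appears at least twice), and Lemma~\ref{lem:gentrbnd} then yields $\|\proj_a\tJ_{\eta,\nu}\proj_b\|_2 \ls \alpha_4 \barn^{c/2}$, where $c=c_1$ is the ribbon-specific constant bounding $v_*(G_X(r)) \le c_1 r + c_2$. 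A case-by-case analysis of the fifteen ribbon shapes shows $c_1=2$ for the entries marked $\barn$ in Table~\ref{tab:ribbondef} and $c_1=3$ for the entries marked $\barn^{3/2}$: the underlying combinatorial fact is that in each face only the $\eta$ retained edges participate in the "double-cover" constraint, so the number of distinct labels that a contributing labeling can carry is limited by the connectivity of the multigraph obtained after identifying repeated edges.

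The key improvement for Eq.~(\ref{eq:ClaimN22_2}) comes from an algebraic observation (the content of Lemmas~\ref{lem:normbndeta1} and~\ref{lem:normbndeta2}): for precisely those $(\eta,\nu)$ whose typical norm is $\barn^{3/2}$, certain linear combinations map into $\bbV_0\oplus\bbV_1$. Indeed, for $\eta=1$ one checks that $((\sum_\nu \tJ_{1,\nu})v)_{\{i,j\}} = u_i+u_j$ for suitable $u$, hence $\proj_2(\sum_\nu \tJ_{1,\nu})=0$. Similarly $\proj_2(\tJ_{2,3}+\tJ_{2,5})=0$ and $(\tJ_{2,2}+\tJ_{2,4})\proj_2=0$ by symmetric reasoning. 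Therefore in the $\proj_2(\cdot)\proj_2$ bound only the ribbon types of typical norm $\barn$ (namely $(4,1)$, $(3,\nu)$, $(2,1)$, $(2,6)$) contribute, giving $\alpha_4\barn$ instead of $\alpha_4\barn^{3/2}$. Combined with the $\alpha_3\barn^{1/2}$ contribution from intersecting pairs and a triangle-inequality summation over the finitely many $(\eta,\nu)$, the three claims~(\ref{eq:ClaimN22_1})--(\ref{eq:ClaimN22_3}) follow with the stated high probability after a union bound over the $O(1)$ building blocks.

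The hard part is twofold. First, one must justify reducing the projected trace to a labeling sum over star ribbons: the projectors $\proj_a$ expand in the overcomplete basis $V_a$ and act by averaging/contracting pairs of indices, and this has to be tracked carefully so that the resulting graph is combinatorially explicit. Second, the counting $v_*(G_X(r)) \le c_1 r+c_2$ must be verified ribbon by ribbon; the shapes giving norm $\barn^{3/2}$ are the ones where the retained edges in each face form a "path" rather than surrounding the face, so that contributing labelings may use up to $3r/2$ distinct labels rather than $r$. Once these combinatorial bounds and the vanishing identities of Lemmas~\ref{lem:normbndeta1}-\ref{lem:normbndeta2} are in place, the proposition is a matter of bookkeeping.
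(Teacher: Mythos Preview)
Your high-level plan---split by $|A\cap B|$, expand the disjoint case into the fifteen $(\eta,\nu)$-terms, control each by the moment method, and exploit the algebraic identities $\proj_2(\sum_\nu \tJ_{1,\nu})=0$ and $\proj_2(\tJ_{2,3}+\tJ_{2,5})=0$ to kill the $\barn^{3/2}$ contributions inside $\proj_2(\cdot)\proj_2$---is exactly the paper's strategy. However, your execution of the moment-method step differs from the paper's in a way that introduces a real gap.

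You propose to bound $\|\proj_a\tJ_{\eta,\nu}\proj_b\|_2$ by expanding the projected trace in the overcomplete bases $V_a$ and claiming that, for $a\neq b$, this produces a labeling sum over $(\eta,\nu)$-\emph{star ribbons}. This is not correct: the projectors $\proj_0,\proj_1,\proj_2$ do not act by identifying a vertex in each couple. For instance $\proj_0$ is rank one and replaces every entry by a global average, while $\proj_1$ involves the inverse Gram matrix of the overcomplete system $\{v_1^i\}$; neither yields the simple combinatorial contraction you describe. In the paper, star ribbons arise for an entirely different reason---they encode the trace of the matrix $K$ supported on $|A\cap B|=1$ (Lemma~\ref{lem:normbndK}), where the shared index is what collapses a vertex per face. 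The paper never computes a projected trace: it bounds the \emph{unprojected} norms $\|K\|_2$, $\|J_{\eta,\nu}\|_2$, $\|\tJ_{\eta,\nu}\|_2$, $\|J_{\eta,\nu}-\tJ_{\eta,\nu}\|_2$ directly (Lemmas~\ref{lem:normbndeta24}--\ref{lem:normbndJmtJ}) and then invokes the trivial inequality $\|\proj_a X\proj_b\|_2\le\|X\|_2$. The projections enter only through the exact vanishing identities, not through any refined trace computation.

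A second point you gloss over is the distinction between $J_{\eta,\nu}$ (zeroed on intersecting pairs) and $\tJ_{\eta,\nu}$ (unrestricted). The decomposition of $\tN_{22}-\E\tN_{22}$ naturally produces $J_{\eta,\nu}$, but the algebraic vanishing identities hold only for $\tJ_{\eta,\nu}$. The paper therefore writes $J_{\eta,\nu}=\tJ_{\eta,\nu}+(J_{\eta,\nu}-\tJ_{\eta,\nu})$ for the eight troublesome types and separately bounds $\|J_{\eta,\nu}-\tJ_{\eta,\nu}\|_2\ls\alpha_4\barn$ (Lemma~\ref{lem:normbndJmtJ}); without this step the $\proj_2(\cdot)\proj_2$ bound would not go through. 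Also, a minor slip: for $|A\cap B|=1$ the centered entry is $\alpha_3 g_{ik}$ for a single edge, not $\alpha_3(\cG_{ij}\cG_{k\ell}-p^2)$.
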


Recall that:
\begin{align}
  (\tN_{22})_{A,B} &= \begin{cases}
    -\alpha_2^2 + \alpha_2 &\text{if } A =  B \\
    -\alpha_2^2 + \alpha_3 (p + g_{t(A)t(B)}) &\text{if } h(A) = h(B), A\ne B \\
    -\alpha_2^2 + \alpha_3 (p + g_{h(A)t(B)}) &\text{if } t(A) = h(B), A\ne B \\
    -\alpha_2^2 + \alpha_3 (p + g_{t(A)h(B)}) &\text{if } h(A) = t(B), A\ne B \\
    -\alpha_2^2 + \alpha_3 (p + g_{h(A)h(B)}) &\text{if } t(A) = t(B), A\ne B \\
    -\alpha_2^2 + \alpha_4 (p + g_{h(A)h(B)}) (p + g_{h(A)t(B)}) (p +
    g_{t(A)h(B)})(p + g_{t(A)t(B)}) &\text{if $\abs{A\cap B} = 0$.}
  \end{cases}\label{eq:M22AB}
\end{align}

When $\abs{A\cap B} = 0$ (last case above) we can expand $\tN_{A, B}$
as a sum of sixteen terms:
\begin{align}
  \tN_{A, B} &= \alpha_4  (p+g_{h(A)h(B)})(p+g_{h(A)t(B)})(p+
  g_{t(A)h(B)})(p+g_{t(A)t(B)})-\alpha_2^2 \\
  &= (\alpha_4 p^4-\alpha_2^2) + \alpha_4 p^3 (g_{h(A)h(B)} + g_{h(A)t(B)} + g_{t(A)h(B)} +
  g_{t(A)t(B)}) \nonumber \\
 &\quad + \alpha_4 p^2 ( g_{h(A)h(B)}g_{h(A)t(B)}
 +g_{h(A)h(B)}g_{t(A)h(B)} + g_{h(A)h(B)}g_{t(A)t(B)}\nonumber\\
 &\quad\quad\quad+ g_{h(A)t(B)}g_{t(A)h(B)} + g_{h(A)t(B)}g_{t(A)t(B)} +
 g_{t(A)h(B)}g_{t(A)t(B)} ) \nonumber\quad \\
& \quad + \alpha_4 p (g_{h(A)h(B)}g_{h(A)t(B)}g_{t(A)h(B)} +g_{h(A)h(B)}g_{h(A)t(B)}g_{t(A)t(B)} \nonumber\\
&\quad\quad\quad+ g_{h(A)h(B)}g_{t(A)h(B)}g_{t(A) t(B)} + g_{h(A)t(B)}g_{t(A)h(B)}g_{t(A)t(B)}  ) \nonumber \\
 &\quad + \alpha_4 g_{h(A)h(B)}g_{h(A)t(B)}g_{t(A)h(B)}g_{t(A)t(B)}.
\end{align}

Compactly, we can represent the above summation as follows. Each term above 
is indexed by a pair $(\eta, \nu)$ where $0\le \eta\le 4$ denotes
the number of variables $g_{\cdot,\cdot}$ occurring in the product, and $\nu
\le \binom{4}{\eta}$ determines exactly which $\eta$-tuple of $g$
variables occur. For instance, when $\eta = 1$, we have $\binom{4}{1}$ terms
$\alpha_4 p^3 g_{h(A)h(B)}, \alpha_4 p^3 g_{h(A)t(B)}, \alpha_4 p^3
g_{t(A)h(B)}, \alpha_4 p^3g_{t(A)t(B)}$. 
Equivalently, if $R_{A, B}(\eta, \nu)$ 
is a labeled $(\eta, \nu)$-ribbon with exactly one face
and vertices labeled  $h(A), t(A), h(B), t(B)$  in order, each term corresponds to one
specific class and type of ribbon, i.e. 
\begin{align*}
  \tN_{A, B} &= \sum_{\eta, \nu} \alpha_4 p^{4-\eta} \prod_{e=\{i, j\}\in
  R_{A, B}(\eta, \nu)} g_{ij}.
\end{align*}
The exact mapping of the pair $(\eta, \nu)$ to the choice of edges
in $R_{A, B}(\eta, \nu)$ is given in Table \ref{tab:ribbondef}. 
With a slight abuse of terminology, 
we refer to $\eta$ as the
\emph{class} and $\nu$ the \emph{type} of the term. 
We define the matrices $J_{\eta, \nu}$ (for $\eta = 1, 2, 3, 4$ and $\nu =
\binom{4}{\eta}$) and $K$ as follows.
\begin{align}
  (J_{\eta, \nu})_{A, B} &\equiv \begin{cases}
    \alpha_4 p^{4-\eta}\prod_{\{i, j\}\in R_{A, B}(\eta,\nu)} g_{ij} &\text{ if
    } \abs{A\cap B} =0\, , \\
    0 &\text{ otherwise.}
  \end{cases}\label{eq:Jdef} \\
  K_{A, B} &\equiv \begin{cases}
\alpha_3 g_{t(A)t(B)} &\text{if } h(A) = h(B), A\ne B\, ,\\
\alpha_3 g_{h(A)t(B)} &\text{if } t(A) = h(B), A\ne B\, ,\\
\alpha_3 g_{t(A)h(B)}  &\text{if }h(A) = t(B), A\ne B \, ,\\
\alpha_3 g_{h(A)h(B)} &\text{if } t(A) = t(B), A\ne B\, ,\\
0 &\text{ otherwise.}
  \end{cases}\label{eq:Kdef}
\end{align}

The matrices $J_{\eta, \nu}$ vanish on the set of entries
$A, B$ where $A$ and $B$ have non-zero intersection. This causes
the failure of certain useful spectral properties with respect to the spaces
$\bbV_0, \bbV_1, \bbV_2$. Consequently, for our proof, it is useful to
define the matrices $\tJ_{\eta, \nu}$ that do not have this constraint. 
\begin{align} 
  \label{eq:tJdef}
  (\tJ_{\eta, \nu})_{A, B} &\equiv 
    \alpha_4 p^{4-\eta}\prod_{\{i, j\}\in R_{A, B}(\eta,\nu)} g_{ij}. 
\end{align}
Here we ignore the constraint that $A, B$ do not intersect, and follow
the convention that $g_{ii} =0$ for every $i\in[n]$. 

Thus, with \myeqref{eq:M22AB} we arrive at the following expansion:
\begin{align}
  \tN_{22}- \E\{\tN_{22}\} &= K + \sum_{\eta=1}^{4} \sum_{\nu=1}^{\binom{4}{\eta}}
  J_{\eta, \nu} \\
  &= K + J_{2, 1} + J_{2, 6} + J_{4, 1} + \sum_{\nu=1}^{4} J_{3, \nu}+
    \sum_{\nu =1}^{4} (J_{1, \nu} -
  \tJ_{1, \nu}) + \sum_{ \nu=2}^{5} (J_{2, \nu} - \tJ_{2, \nu})\nonumber \\
  &\quad+ \sum_{\nu=1}^{4}\tJ_{1, \nu} + \sum_{\nu=2}^{5} \tJ_{2, \nu}. \label{eq:M22expansion}
\end{align}

We now prove a sequence of lemmas regarding the spectral
properties of the matrices $K, J_{\eta, \nu}$.
The first one concerns the case
$\eta =2$, $\nu=1, 6$ and $\eta=4$, $\nu=1$. 
\begin{lemma}
  With probability at least $1- 3n^{-5}$, we have that:
  \begin{align}
    \norm{J_{2, 1} + J_{2, 6} + J_{4, 1}   }_2 
    &\ls \alpha_4 \barn
  \end{align}
  \label{lem:normbndeta24}
\end{lemma}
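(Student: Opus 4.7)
The plan is to prove the bound by the moment method, applying Lemma \ref{lem:gentrbnd} to each of the three matrices $J_{4,1}$, $J_{2,1}$, $J_{2,6}$ separately and then concluding via the triangle inequality. Since each matrix has typical norm $\alpha_4 \barn$ (see Table \ref{tab:ribbondef}), adding the three bounds and using a union bound over the three events of probability $1-n^{-5}$ each yields the claimed bound with probability at least $1-3n^{-5}$.

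Fix $(\eta,\nu)\in\{(4,1),(2,1),(2,6)\}$ and set $X = J_{\eta,\nu}$. I would begin by writing
\begin{align*}
\Tr\{(X^\sT X)^r\} = \sum_{A_1,\dots,A_{2r}}\prod_{k=1}^{2r} X_{A_k, A_{k+1}}
\end{align*}
with indices cyclic. The defining property $X_{A,B}=0$ whenever $|A\cap B|>0$ forces every summand to involve $2r$ disjoint consecutive pairs, so the contributing terms correspond exactly to valid labelings $\ell$ of a ribbon $G_X(r)$ of length $2r$, class $\eta$, type $\nu$, with $v(r)=4r$ vertices and $\beta = \alpha_4 p^{4-\eta}$. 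This puts us in the framework of Lemma \ref{lem:trexp}.

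The bulk of the work is then the combinatorial estimate $v_*(r) \le 2r + c_2$ for a constant $c_2$, in each of the three cases. Thinking of the ribbon as a cycle of $2r$ faces glued along $2r$ couples, the idea is that in a contributing labeling every $g$-edge must be repeated, and the rigid "double-edge" or "$K_{2,2}$" structure of faces of type $(4,1)$, $(2,1)$, $(2,6)$ forces the repetitions to pair up entire faces rather than individual edges. After identifying paired faces, the labeled picture becomes a multigraph on the contracted couple-structure with at most $r$ distinct super-edges, and since the ribbon is connected this leaves at most $r+1$ distinct couple-labels, hence at most $2(r+1)$ distinct vertex-labels in $[n]$. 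With $v_*(r)\le 2r+2$ and $v(r)=4r$, Lemma \ref{lem:gentrbnd} applies with $c_1=2$ and $c_3=4$, giving
\begin{align*}
\|J_{\eta,\nu}\|_2 \ls \alpha_4 p^{4-\eta}\,\barn \le \alpha_4 \barn
\end{align*}
with probability at least $1-n^{-5}$. A triangle inequality over the three matrices completes the proof.

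The main obstacle is the combinatorial claim $v_*(r) \le 2r+O(1)$. The naive counting (each labeled edge appears at least twice $\Rightarrow$ at most $2r\eta$ distinct labeled edges $\Rightarrow$ up to $2r\eta+1$ labels by connectedness) is far too weak for $\eta=4$. The gain must come from exploiting the specific geometry of these three face types: in $(4,1)$ each face is a full $K_{2,2}$ between two couples, while $(2,1)$ and $(2,6)$ are the symmetric "parallel edges" configurations (see Table \ref{tab:ribbondef}), and in all three cases coincidence of even one edge-pair across faces propagates to coincidence of entire couples, so repetitions happen at the couple level. Making this argument careful, and controlling the boundary terms arising from the constraint $g_{ii}\equiv 0$, is where the real work lies.
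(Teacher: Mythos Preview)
Your overall framework is exactly the paper's: triangle inequality over the three matrices, moment method via the ribbon trace expansion, and an appeal to Lemma~\ref{lem:gentrbnd} once you have $v_*(r)\le 2r+2$ and $v(r)=4r$. That part is fine.

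The gap is in your combinatorial route to $v_*(r)\le 2r+2$. Your ``face-pairing / couple-propagation'' claim is not correct. For instance, in the $(2,1)$-ribbon the two edges in face $i$ are $\{u_i,u_{i+1}\}$ and $\{v_i,v_{i+1}\}$; the first can repeat as some $\{u_j,u_{j+1}\}$ while the second repeats as $\{v_k,v_{k+1}\}$ with $k\ne j$, so faces need not pair up and an edge coincidence tells you nothing about the partner vertex in the couple. The same obstruction arises for $(2,6)$ and for $(4,1)$. In particular your sentence ``since the ribbon is connected'' is false at the vertex level for $(2,1)$ and $(2,6)$: those $(\eta,\nu)$-ribbons split into \emph{two} disjoint cycles.

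The paper gets $v_*\le 2m+2$ by two different (and simpler) arguments. For $(2,1)$ and $(2,6)$ it just counts: the $(\eta,\nu)$-ribbon has $4m$ edges and two connected components; a contributing labeling has at most $2m$ distinct labeled edges, so the quotient graph has at most $2m$ edges and $2$ components, hence at most $2m+2$ vertices. For $(4,1)$ it inducts on $m$: either every vertex label is repeated (then $\le \tfrac12\cdot 4(m+1)=2(m+1)$ labels), or some vertex $u$ has a unique label; since $u$ has degree $4$ and its four incident edges must repeat among themselves, the two neighboring couples are forced to carry identical labels, so one can contract the ribbon by one face and apply the induction hypothesis. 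If you replace your couple-level sketch by these two arguments, the rest of your proposal goes through verbatim.
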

\begin{proof}
   By the triangle inequality:
  \begin{align}
    \norm{J_{2, 1} + J_{2, 6} + J_{4, 1}}_2  &\le \norm{J_{2, 1}}_2 + \norm{J_{2,
    6}}_2 +  \norm{J_{4, 1}}_2. 
    \end{align}
    We prove that with probability at least $1-n^{-5}$
    \begin{align}
      \norm{J_{\eta, \nu}}_2 &\ls \alpha_4 \barn,
    \end{align}
    for  $(\eta, \nu)=(2, 1), (2, 6), (4, 1)$. The claim then
    follows by a union bound. 

  Let $R(\eta, \nu, m)$ denote a
  $(\eta, \nu)$-ribbon of length $2m$. Then, by expanding the product we have:
  \begin{align}
    \Tr\left\{(J_{\eta, \nu}^\sT J_{\eta, \nu})^{r}\right\} &= 
    \sum_{\ell\in \frakL(R(\eta, \nu,  m)) } (\alpha_4
    p^{4-\eta})^{2m}\left\{\prod_{e\in R(\eta,
    \nu, m)} g_{\ell(e)} \right\}.
  \end{align}
  Here we write $\ell(e)$ in place of the pair $\ell(u), \ell(v)$ when $u, v$
  are the end vertices of $e$. Since $R(\eta, \nu, m)$ has $4m+2$
  vertices, by Lemma \ref{lem:gentrbnd} it
  suffices to prove that $\max_{\ell\in \frakL_2(R(\eta,
  \nu, m))}\range(\ell) =2m+2$. 

  We first prove this for the case $\eta=2$ and $\nu=1, 6$. Let
  $\ell$ be a contributing labeling of the ribbon $R(\eta, \nu, m)$ of length
  $2m$. Let $\bG({\eta, \nu})$ denote the graph obtained by identifying
  in $R(\eta, \nu, m)$ every vertex with the same label according to $\ell$. 
  We have:
  \begin{align}
    \text{ \# connected components in $\bG({\eta,\nu})$} &\le  \text{ \#
      connected components in $R({\eta,\nu}, m)$} = 2\\
      \text{ \# edges in $\bG_{\eta, \nu}$} &\le \frac{\text{ \# edges in
	$R({\eta, \nu}, m)$}}{2} = 2m.
  \end{align}
  It follows that there are at most $2m+2$ unique vertices in $\bG({\eta, \nu},
  m)$ 
  and hence, at most $2m+2$ unique labels in $\range(\ell)$.

  We now prove the condition  $\max_{\ell\in \frakL_2(R({\eta,
  \nu}, m))}\range(\ell) =2m+2$ for $\eta=4, \nu=1$, induction on $m$. 
The base case is $m=1$ (or a ribbon of length 2), wherein it is obvious that a contributing labeling
$\ell$ can have at most $4 = 2m+2$ unique labels.
Now, assume the claim is true for ribbons of length at most $2m>1$ and we 
will prove it for $R({4, 1}, m+1)$ of length $2m+2$. Consider any contributing
labeling $\ell$ of $R({4, 1}, m+1)$. We now have the following  cases
\begin{enumerate}
  \item For every vertex $u\in R({4, 1}, m+1)$, there exists $u'\ne u$ such that
    $\ell(u')=\ell(u)$.
  \item There exists vertex $u\in R({4, 1}, m+1)$ with a unique label
    $i=\ell(u)$ and the degree of $u$ is 4.
\end{enumerate}
For case 1, if every label in the range of $\ell$
occurs at least twice in $R({4, 1}, m)$, the number of unique labels is
bounded by $2(m+1)$, since $R({4, 1}, m)$
has only $4(m+1) $ vertices, hence the claim follows.

For case 2, 
let $(u_1, v_1)$ and $(u_2, v_2)$
be the neighboring couples of $u$. If $u$ is connected to all of $u_1, v_1, u_2,
v_2$, since the edges connected to $u$ must occur twice, it must hold that
$\ell(u_1) = \ell(u_2)$ and $\ell(v_1)=\ell(v_2)$ (recall indeed that
$\ell(u_1)<\ell(v_1)$, $\ell(u_2)<\ell(v_2)$ by definition of a valid labeling). Hence, we can contract the
ribbon removing the couple containing $u$ and all edges and identifying the
couples $(u_1, v_1)$ with $(u_2, v_2)$. We obtain now a ribbon
$\tilde{R}{4, 1}, m)$ of length $2m$ and an induced labeling 
$\tilde{\ell}$ thereof which is contributing. By induction hypothesis, 
$\range(\tilde{\ell})\le 2m+2$,  hence 
$\range(\ell) = \range(\tilde{\ell}) + 2\le  2(m+1)+2$. This completes the
proof. 
\end{proof}

\begin{lemma}\label{lem:normbndeta3}
  With probability at least $1- 8n^{-5}$, we have 
  \begin{align}
    \norm{ \sum_{\nu=1}^{4} J_{3, \nu}  }_2 
    &\ls \alpha_4 p  \barn\, .
  \end{align}
\end{lemma}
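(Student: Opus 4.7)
The plan is to bound each $\norm{J_{3,\nu}}_2$ separately for $\nu = 1, 2, 3, 4$ and then combine the four bounds via the triangle inequality, since the sum has only four terms. For each fixed $\nu$, I would apply the moment method exactly as in Lemma \ref{lem:normbndeta24}: expanding the trace of an even power yields
\begin{align*}
\Tr\left\{(J_{3,\nu}^{\sT} J_{3,\nu})^r\right\} = (\alpha_4 p)^{2r} \sum_{\ell \in \frakL(R(3,\nu,r))} \prod_{e \in R(3,\nu,r)} g_{\ell(e)},
\end{align*}
where $R(3,\nu,r)$ denotes a $(3,\nu)$-ribbon of length $2r$, which has $4r$ vertices and $6r$ edges. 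In view of Lemma \ref{lem:gentrbnd} applied with $\beta = \alpha_4 p$, the target bound $\norm{J_{3,\nu}}_2 \ls \alpha_4 p\,\barn$ will follow provided I can show $v_*(R(3,\nu,r)) \le 2r + O(1)$, i.e.\ that the lemma's parameter $c_1$ may be taken equal to $2$.

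The naive edge-counting bound is too weak: a contributing labeling repeats each edge at least twice, hence the quotient graph carries at most $3r$ distinct edges, and since $R(3,\nu,r)$ is connected this only yields $v_*(R(3,\nu,r)) \le 3r + 1$. This corresponds to $c_1 = 3$ and gives the norm estimate $\ls \alpha_4 p\,\barn^{3/2}$, which is too large by a factor of $\barn^{1/2}$. Sharpening the vertex count from $3r$ to $2r$ is the main difficulty of the proof.

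To obtain the sharp count I would adapt the inductive argument used for the $(\eta,\nu) = (4,1)$ case in Lemma \ref{lem:normbndeta24}. Induction proceeds on $r$, and the key dichotomy is whether or not every vertex of $R(3,\nu,r)$ carries a label that appears at least twice under $\ell$. If it does, then $v_* \le 4r/2 = 2r$ at once. Otherwise there exists a vertex $u$ whose label is unique; since each of the two $(3,\nu)$-faces incident to $u$ retains three of its four edges, $u$ has degree at least two in each incident face, and consequently degree at least four in the ribbon. The contributing condition forces each edge at $u$ to be matched by a distinct edge elsewhere in $R(3,\nu,r)$, and the bipartite rigidity of the $(3,\nu)$-face together with the head/tail ordering of couples then forces the labels of the two couples adjacent to $u$ to coincide pairwise, exactly as in the $\eta = 4$ case. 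This permits contracting the ribbon by removing the couple containing $u$ and identifying the two neighboring couples, producing a shorter $(3,\nu)$-ribbon of length $2(r-1)$ together with an induced contributing labeling of the same range. The induction hypothesis then yields $v_*(R(3,\nu,r)) \le 2r + O(1)$, as desired.

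With this vertex bound in hand, Lemma \ref{lem:gentrbnd} delivers $\norm{J_{3,\nu}}_2 \ls \alpha_4 p\,\barn$ for each $\nu$ with probability at least $1 - 2n^{-5}$, and a union bound over $\nu \in \{1,2,3,4\}$ together with the triangle inequality produces the stated bound with probability at least $1 - 8n^{-5}$.
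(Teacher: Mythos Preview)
Your proposal has a real gap. The key claim---that a uniquely-labeled vertex $u$ in a $(3,\nu)$-ribbon has degree at least two in each incident face, hence degree at least four overall---is false. Each face of a class-$3$ ribbon is $K_{2,2}$ with one edge deleted; the two endpoints of the deleted edge have degree $1$ in that face. Depending on $\nu$ and on the parity of the couple, certain vertices of the ribbon have degree $1$ in \emph{both} incident faces (total degree $2$) or degree $1$ in one and $2$ in the other (total degree $3$). If such a vertex carries a unique label, your contraction step does not go through: at a degree-$2$ vertex the contributing constraint only forces its two neighbors to share a label, which identifies a single pair of vertices rather than the two full couples needed to shorten the ribbon by one unit. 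Without a separate treatment of the degree-$2$ and degree-$3$ cases the induction does not close, and the asserted bound $v_*(R(3,\nu,r))\le 2r+O(1)$ is not established.

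The paper sidesteps this combinatorial difficulty entirely and uses a different argument. It first passes to $\tJ_{3,\nu}$ (dropping the disjointness constraint), writes $\tJ_{3,\nu}=\alpha_4 p\,\proj_{\nCe}Q\proj_{\nCe}^{\sT}$ for an explicit $Q\in\reals^{n^2\times n^2}$, and shows by regrouping the factors in the trace expansion that $\Tr((Q^{\sT}Q)^m)=\Tr((UD)^m)$ for matrices $U,D$ that factor as $U=T\,(g^2\otimes\id_n)$ with $\|T\|_2\le 1$. This yields $\|Q\|_2\le\|g\|_2^2\lesssim\barn$ directly from the standard Wigner bound on $g$, with no ribbon vertex count needed. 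The remainder $\tJ_{3,\nu}-J_{3,\nu}$ is then handled by a separate (and much easier) moment computation in which adjacent couples are forced to share a vertex, so that $v_*\le m+2$ and Lemma~\ref{lem:gentrbnd} applies with $c_1=1$.
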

\begin{proof}
  By the triangle inequality, it suffices to show that for $\nu\in\{1,\dots,4\}$, with 
  probability $1-n^{-(\Gamma-2)/2}$:
  \begin{align}
    \norm{J_{3, \nu}}_2 &\le \alpha_4 p \barn\, .
  \end{align}
  
  We prove the above for the case $\nu = 2$. The other case follow 
  from analogous arguments. Firstly, define the matrices $\tJ_{3, 2}\in
  \reals^{\nCe \times \nCe}$
  and $Q \in \reals^{n^2 \times n^2}$ as follows:
  \begin{align}
    (\tJ_{3, 2})_{\{i, j\}, \{k, l\}} &= 
      \alpha_4 p g_{ik}g_{il}g_{jl}, \\
    Q_{(i, j), (k, l)} &= g_{ik}g_{il}g_{jl}.
  \end{align}
  Note also that $\tJ_{3, 2}$ differs from $J_{3,2}$ only
  in the entries $\{i, j\}, \{k, \ell\}$ where $j=k$. 
  The rows (columns) of $Q$ above are indexed by \emph{ordered} pairs
  $(i, j)\in[n]\times[n]$.
  Now we define the projector $\proj_{\nCe} :
  \reals^{n^2}\to\reals^{\nCe}$ by letting,
for all $i,j\in [n]$,
  \begin{align}
    (\proj_{\nCe}(x))_{\{i, j\}} &= x_{(i,j)}\, .
  \end{align}
  Then we have $\tJ_{3, 2} = \alpha_4 p\proj_{\nCe} Q \proj_{\nCe}^{\sT}$ and, consequently,
   $\|\tJ_{3, 2}\|_2 \le \alpha_4 p \norm{Q}_2$. Therefore it  suffices
  to bound the latter, which we do again by the moment method. 
  Firstly we define:
\begin{align}
U_{(i,j),(k,l)} &= \sum_{q\in [n]}g_{iq}g_{qk}g_{ij}\ind(j=l) \, ,\\
D_{(i,j),(k,l)} &= \sum_{q\in [n]}g_{jq}g_{ql}g_{ij}\ind(i=k) \,.
\end{align}
  
Then we have, for any integer $m\ge 1$, 
\begin{align*}
&  \Tr((Q^\sT Q)^{m})  
= \sum_{i_1,i_2,\dots,i_{2m}\in[n]}
  \sum_{j_1,j_2,\dots,j_{m}\in[n]}
  Q^{\sT}_{(i_1,j_1),(i_2,j_2)}Q_{(i_2,j_2),(i_3,j_3)}Q^{\sT}_{(i_3,j_3),(i_4,j_4)}\cdots
Q_{(i_{2m},j_{2m}),(i_1,j_1)}\\
&= \sum_{i_1,i_2,\dots,i_{2m}\in[n]}
  \sum_{j_1,j_2,\dots,j_{m}\in[n]} 
\big(g_{i_1i_2}g_{j_1j_2}g_{i_1j_2}\big)\cdot
\big(g_{i_2i_3}g_{j_2j_3}g_{j_2i_3}\big)\cdot
\big(g_{i_3i_4}g_{j_3j_4}g_{i_3j_4}\big)\cdots
\big(g_{i_{2m}i_1}g_{j_{2m}j_1}g_{j_{2m}i_1}\big)\\
& = \sum_{i_1,i_2,\dots,i_{2m}\in[n]}
  \sum_{j_1,j_2,\dots,j_{2m}\in[n]} 
  \big(g_{i_1i_2}g_{i_2i_3}g_{i_3i_4}\cdots g_{i_{2m}i_1}\big)
  \big(g_{j_1j_2}g_{j_2j_3}g_{j_3j_4}\cdots g_{j_{2m}j_1}\big)
  \big(g_{i_1j_2}g_{j_2i_3}g_{i_3j_4}\cdots g_{j_{2m}i_1}\big)\\
& = \sum_{i_1,i_2,\dots,i_{2m}\in[n]}
  \sum_{j_1,j_2,\dots,j_{2m}\in[n]} 
\big(g_{i_1i_2}g_{i_2i_3}g_{i_1j_2}\big)
\big(g_{j_2j_3}g_{j_3j_4}g_{j_2i_3}\big)
\big(g_{i_3i_4}g_{i_4i_5}g_{i_3j_4}\big)
\cdots
\big(g_{j_{2m}j_1}g_{j_1j_2}g_{j_{2m}i_1}\big)\,.
\end{align*}
Then we have
\begin{align}
\Tr( (Q^\sT Q)^{m})  &= \Tr((UD)^{m})\, .
\end{align}
Hence
\begin{align}
\|Q\|_2\le \Tr( (Q^\sT Q)^{m})^{1/2m}  \le \Tr((UD)^{m})^{1/2m}
\le \big(n^2\|U\|_2^{m}\|D\|_2^{m}\big)^{1/2m}
\le n^{1/m} \|U\|_2
\, ,
\end{align}
where in the last step we used the fact that $\|U\|_2=\|D\|_2$ by
symmetry. Since $m$ can be taken arbitrarily large, we conclude that
$\|Q\|_2\le\|U\|_2$ and we proceed to bound the latter.

Now let $T\in\reals^{n^2\times n^2}$ be the element-wise multiplication
by $g$, i.e.
\begin{align}
T_{(i,j),(k,l)} = g_{ij}\ind(i=k)\ind(j=l)\, .
\end{align}
Then we have
\begin{align}
U = T\cdot \big(g^2\otimes \id_n\big)
\end{align}
Here $g \in \reals^{n\times n}$ is the matrix with $i, j$ entry
being $g_{ij}$. 
Since $|g_{ij}|\le 1$, we have $\|T\|_2\le 1$ and therefore
\begin{align}
\|Q\|_2\le \|U\|_2\le \|T\|_2 \|g^2\otimes \id\|_2\le
\|g^2\otimes \id\|_2\le \|g^2\|_2\le \|g\|_2^2\,.
\end{align}
Finally, similar to Proposition \ref{prop:deviationM11} we 
have that $\norm{g} \ls \barn^{1/2}$ with probability
at least $1-n^{-5}$, hence with the same probability:
\begin{align}
  \norm{\tJ_{3, 2}}_2 &\ls \alpha_4 p \barn\, .\label{eq:tJ32}
\end{align}
By triangle inequality $\norm{J_{3,2}}_2 \le \|\tJ_{3, 2}\|_2 + \|\tJ_{3,
2}-J_{3, 2}\|_2$, 
hence to complete the proof we now bound $\|\tJ_{3, 2} - J_{3, 2}\|_2$ using
the moment method. Recall that $\tJ_{3, 2}$ and $J_{3, 2}$ differ in the entry $\{i, j\}, \{k, \ell\}$
only if $j = k$. Hence: 
\begin{align}
  \Tr\left\{\big( (\tJ_{3, 2} - J_{3, 2})^\sT (\tJ_{3, 2} - J_{3, 2})\big)^{m}\right\} 
  &= (\alpha_4 p)^{2m} \sum_{i_1\dots i_{2m}, j_1 \dots j_{2m}, \forall q\, i_q < j_q } 
  \prod_{q=1}^{m}\bigg(g_{i_q i_{q+1}}g_{i_q j_{q+1}}g_{j_q j_{q+1}}
 \nonumber\\  &\quad\quad\quad g_{i_{q+1} i_{q+2}} g_{j_{q+1}i_{q+2}}g_{j_{q+1} j_{q+2}}
  \ind(j_1 = i_2 = j_3 = i_4 =\dots= i_{2m})\bigg) \\
  &= (\alpha_4 p)^{2m} \sum_{\ell\in \widetilde{\frakL}(R(3, 2, m)} \prod_{e=\{u, v\}\in R(3, 2, m)} g_{\ell(u)\ell(v)}.
\end{align}
Here, $R(3, 2, m)$ is a $(3, 2)$-ribbon of length $2m$ and $\widetilde{\frakL}(R(3, 2, m)$ is a 
collection of labelings of $R(3, 2, m)$ satisfying the following criteria
\begin{enumerate}
  \item For every couple $(u, v) \in R(3, 2, m)$, $\ell(u) < \ell(v)$.
  \item Let $(u_1, v_1), (u_2, v_2) \dots (u_{2m}, v_{2m})$ denote the couples
    in $R(3, 2, m)$. Then $\ell(v_1) = \ell(u_2) = \ell(v_3) = \ell(u_4) \dots$.
\end{enumerate}
Let $\widetilde{\frakL}_2(R(3, 2, m))$ denote the subset of contributing labelings, 
i.e. those that
satisfy the additional criterion that every labeled edge is repeated
twice. 
By Lemma \ref{lem:gentrbnd} it suffices to show that 
$v_*(R(3, 2, m)) = \max_{\ell\in \widetilde{\frakL}(R(3, 2, m))}\abs{\range(\ell)} \le m+2$. 
We prove this by induction. For the base case of $m=1$, since
every edge is repeated twice under a contributing labeling, it
is easy to see that there are at most $3$ unique labels. Assume
the induction hypothesis that
$v_*{R(3, 2, m-1)}\le m+1$. 
Let $\ell$ be a contributing labeling of $R(3, 2, m)$. Then one of the following
must happen:
\begin{enumerate}
  \item No vertex in $R(3, 2, m)$ has a unique label under $\ell$. 
  \item There exists a vertex $w$ of degree 4 with a unique label under $\ell$.
\end{enumerate}
The second condition follows because the vertices of degree smaller than 4 
already have non-unique labels due to condition 2 of the labeling set
$\widetilde{\frakL}(R(3, 2, m))$. 

In case 1, $R(3, 2, m)$ can have at most
$2m/2 + 1 = m+1 < m+2$ unique labels under $\ell$. 
In case 2, since $w$ has a unique label and degree 4
the neighboring $(u, v), (u', v')$ have the same labels
under $\ell$ i.e. $\ell(u)=\ell(u')$ and $\ell(v)=\ell(v')$. Hence
we can identify the couples $(u, v), (u', v')$,  delete $w$ and
its incident edges to obtain a ribbon $\tilde{R}(3, 2, m-1)$ of
length $2m-2$ and an induced labeling $\tilde{\ell}$ thereof. 
By the induction hypothesis $\range(\tilde{\ell})\le m+1$ hence
$\range(\tilde{\ell}) = \range(\tilde{\ell})+1 \le m+2$, as 
required. By Lemma \ref{lem:gentrbnd} we obtain that
$\norm{\tJ_{3, 2} - J_{3, 2}}_2 \ls \alpha_4 p \barn$ with
probability at least $1-n^{-5}$.

By Eq.~(\ref{eq:tJ32}), it follows that with probability at least $1-2n^{-5}$,
$\|J_{3, 2}\|_2 \ls \alpha_4 p \barn \ls \alpha_4 \barn$. This completes the proof of the lemma.
\end{proof}

For the case $\eta=1$ we prove the following
\begin{lemma}
  Recall that $\proj_2:\reals^{\nCe}\to \reals^{\nCe}$ is the
  orthogonal projector onto the space $\bbV_2\subseteq
  \reals^{\nCe}$ (defined in Section \ref{subsec:expectation}). 
  Firstly, we have that $\proj_2(\sum_{\nu=1}^4 \tJ_{1, \nu})\proj_2 = 0$
  Further, with probability at least $1-4n^{-5}$, we have that:
  \begin{align}
    \norm{\sum_{\nu=1}^{4} \tJ_{1, \nu}}_2 &\ls \alpha_4 \barn^{3/2}\label{eq:tJ1}
  \end{align}
  \label{lem:normbndeta1}
\end{lemma}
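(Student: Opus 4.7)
The plan is to derive both claims from a single closed-form expression for the action of $J \equiv \sum_{\nu=1}^4 \tJ_{1,\nu}$ on vectors $v \in \reals^{\nCe}$. Unpacking the definitions gives $J_{\{i,j\},\{k,l\}} = \alpha_4 p^3(g_{ik}+g_{il}+g_{jk}+g_{jl})$, from which $(Jv)_{\{i,j\}}$ immediately decomposes as $\alpha_4 p^3(u_i+u_j)$ where
\begin{align*}
u_i \;\equiv\; \sum_{k<l}(g_{ik}+g_{il})\,v_{\{k,l\}}
\end{align*}
depends only on the single index $i$ (and on $v,g$). Since vectors of the form $w_{\{i,j\}}=u_i+u_j$ span exactly $\bbV_0\oplus\bbV_1$, this already gives the stronger algebraic identity $\proj_2 J=0$, hence $\proj_2 J\proj_2=0$.

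For the operator norm I would re-express the vector $u$ above as a matrix-vector product. Introducing the symmetric matrix $V\in\reals^{n\times n}$ with $V_{kl}=v_{\{k,l\}}$ off the diagonal and $V_{kk}=0$, a routine manipulation (using $V=V^\sT$ together with $g_{ii}=0$) identifies $u$ with $gV\one_n$. Feeding this into the elementary identity $\sum_{i<j}(u_i+u_j)^2 \le 2n\|u\|_2^2$, together with $\|V\one_n\|_2\le\sqrt{n}\,\|V\|_2 \le \sqrt{n}\,\|V\|_F = \sqrt{2n}\,\|v\|_2$, would yield
\begin{align*}
\|J\|_2 \;\ls\; \alpha_4\, p^3\, n\, \|g\|_2 .
\end{align*}
No moment method on $J$ itself is needed: the collapse of its range onto $\bbV_0\oplus\bbV_1$ has already squeezed out the factor of $\sqrt{n}$ that a naive entrywise bound on a $\binom{n}{2}\times\binom{n}{2}$ matrix would lose.

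The main obstacle, and the only random input, is a spectral bound on the symmetric matrix $g\in\reals^{n\times n}$ with centered, bounded, independent upper-triangular entries. This is precisely the content of the cycle-moment calculation already carried out in the proof of Proposition \ref{prop:deviationM11}: writing $\tN_{11}-\E\{\tN_{11}\}=\alpha_2\, g$, that argument produces $\|g\|_2 \ls \barn^{1/2}$ with probability at least $1-n^{-5}$. Combining with $p\le 1$ and the trivial inequality $n^{3/2}(\log n)^{1/2}\le\barn^{3/2}$ delivers $\|J\|_2 \ls \alpha_4\,\barn^{3/2}$ on the same event, proving \myeqref{eq:tJ1}. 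The more generous failure probability $4n^{-5}$ in the statement of the lemma leaves ample room to instead invoke a union bound over the four individual $\tJ_{1,\nu}$ if one prefers to bound each summand by its own moment calculation, but the closed-form derivation above renders this step unnecessary.
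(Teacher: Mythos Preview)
Your proof is correct, and for the first claim it matches the paper's argument exactly. For the norm bound, however, you take a genuinely different route. The paper bounds each $\tJ_{1,\nu}$ separately via the moment method: it expands $\Tr\{(\tJ_{1,\nu}^\sT\tJ_{1,\nu})^m\}$ as a sum over labelings of a $(1,\nu)$-ribbon of length $2m$, observes that such a ribbon decomposes into a cycle of length $2m$ plus isolated vertices, and counts contributing labelings to get $v_*\le 3m+2$, hence $\|\tJ_{1,\nu}\|_2\ls \alpha_4 p^3\barn^{3/2}$ by Lemma~\ref{lem:gentrbnd}. You instead exploit the very algebraic identity used for the projection claim: once $Jv$ is recognized as $\alpha_4 p^3(u_i+u_j)$ with $u=gV\one_n$, the norm bound reduces by two elementary inequalities to $\|J\|_2\ls \alpha_4 p^3 n\|g\|_2$, and the only random input is the cycle bound $\|g\|_2\ls\barn^{1/2}$ already established in Proposition~\ref{prop:deviationM11}. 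Your approach is shorter, avoids a new moment computation, and even gives the marginally sharper bound $\alpha_4 p^3 n\barn^{1/2}$; the paper's approach has the virtue of fitting into the uniform moment-method framework used for all the other $J_{\eta,\nu}$ matrices.
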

\begin{proof}
  Recall from the definition of $\tJ_{1, \nu}$ that
  \begin{align}
    \sum_{\nu=1}^{4}(\tJ_{1, \nu})_{\{i, j\}, \{k, \ell\}} &= p^{3}(g_{ik} + g_{i\ell} + g_{jk}+ g_{j\ell}).
  \end{align}
  Now, for any $v\in \reals^{\nCe}$:
  \begin{align*}
    \left( \sum_{\nu=1}^4\tJ_{1, \nu}v \right)_{\{i,j\}} &= \sum_{k<\ell} p^3(g_{ik} + g_{i\ell} + g_{jk} + g_{j\ell})v_{\{k,\ell\}}\\
    &= u_i + u_j,
  \end{align*}
  where we define $u_i\equiv \sum_{k\le \ell} p^3( g_{ik} + g_{i\ell})v_{\{k,\ell\}}$. 
  It follows that $\sum_{\nu=1}^4\tJ_{\eta, \nu}v \in \bbV_2^\perp = \bbV_0\oplus\bbV_1$, and
  hence $\proj_2\sum_{\nu=1}^4\tJ_{\eta, \nu} = 0$. Since $\sum_{\nu=1}^4\tJ_{1, \nu}$ is symmetric
  we obtain the first claim.

  We prove the second claim --cf. Eq. (\ref{eq:tJ1})-- by the moment method, 
  similar to Lemma \ref{lem:normbndeta24}. 
  Let $R({1, \nu}, m)$ be a $(1, \nu)$-ribbon of length $2m$. Then:
  \begin{align}
    \Tr\left\{(\tJ_{1, \nu}^\sT \tJ_{1, \nu})^{r}\right\} &= 
   (\alpha_4 p^3)   \sum_{\ell\in \frakL(R({1, \nu}, m)) } \left\{
      \prod_{e=\{u, v\}\in R({1,
    \nu}, m)} g_{\ell(u)\ell(v)} \right\}.
  \end{align}
  By Lemma \ref{lem:gentrbnd} it suffices to prove that
  $v_*(R({1, \nu}, m)) = 3m+2$. The claim then follows, using Lemma
  \ref{lem:gentrbnd} and the union bound.

  Let $\ell\in \frakL_2(R({1, \nu}, m))$ be a contributing labeling
  of a ribbon $R({1, \nu}, m)$ of  length $2m$. Let $\bG({1, \nu}, m)$ be
  the graph obtained by identifying vertices in $R({1, \nu}, m)$ with 
  the same label. Notice that $R({1, \nu}, m)$ is a union of a cycle
  $D(m)$ of length $2m$ and $2m+1$ isolated vertices. The isolated vertices
  can have arbitrary labels, hence $v_*(R({1, \nu}, m)) = 2m+1 + v_*(D(2m)) =
  3m+2$ as proved in Proposition \ref{prop:deviationM11}.
\end{proof}

In a similar fashion, we bound the norm of the terms $\tJ_{2, 2}$,  $\tJ_{2, 3}$, 
$\tJ_{2, 4}$,   $\tJ_{2, 5}$:

\begin{lemma} 
  \label{lem:normbndeta2}
  We have that:
  \begin{align}
    (\tJ_{2, 2} + \tJ_{2, 4})\proj_2 &= 0,\label{eq:projeta21} \\
    \proj_2(\tJ_{2, 3} + \tJ_{2, 5}) &= 0.\label{eq:projeta22}
  \end{align}
  Further with probability at least $1- 2n^{-4}$
  \begin{align}
    \norm{\tJ_{2, 2}}_2 &\ls (\alpha_4 p^2)  \barn^{3/2},\\
    \norm{\tJ_{2, 4}}_2 &\ls (\alpha_4 p^2)  \barn^{3/2}.
  \end{align}
\end{lemma}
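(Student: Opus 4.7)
The plan breaks naturally into two parts. For the kernel identities (\ref{eq:projeta21}) and (\ref{eq:projeta22}), I would exploit the structural observation already flagged in the proof outline of Section \ref{sec:Strategy}. Writing $A = \{i,j\}$, $B = \{k,\ell\}$ with $i<j$, $k<\ell$, the definition \eqref{eq:tJdef} gives $(\tJ_{2,3})_{A,B} = \alpha_4 p^2 g_{ik}g_{i\ell}$ and $(\tJ_{2,5})_{A,B} = \alpha_4 p^2 g_{jk}g_{j\ell}$. For any $v\in\reals^{\nCe}$ one then computes
\begin{align}
\big((\tJ_{2,3}+\tJ_{2,5})v\big)_{\{i,j\}} = u_i + u_j, \qquad u_a \equiv \alpha_4 p^2\sum_{k<\ell} g_{ak}g_{a\ell}\, v_{\{k,\ell\}},
\end{align}
and any vector with entries of the form $u_i+u_j$ lies in $\bbV_0\oplus\bbV_1 = \bbV_2^\perp$. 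Hence $\proj_2(\tJ_{2,3}+\tJ_{2,5})=0$. A direct check using the symmetry $g_{ab}=g_{ba}$ gives $\tJ_{2,3}^{\sT}=\tJ_{2,2}$ and $\tJ_{2,5}^{\sT}=\tJ_{2,4}$, so transposing the previous identity yields $(\tJ_{2,2}+\tJ_{2,4})\proj_2=0$.

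For the operator-norm bounds I plan a direct factorization that bypasses the moment method by exploiting the rank deficiency of $\tJ_{2,2}$ and $\tJ_{2,4}$. Observe that $(\tJ_{2,2})_{\{i,j\},\{k,\ell\}} = \alpha_4 p^2 g_{ik}g_{jk}$ depends on the second index only through $h(B)=k$, so one can write $\tJ_{2,2} = \alpha_4 p^2 \, M\, S^{\sT}$, where $M\in\reals^{\nCe\times n}$ has $M_{\{i,j\},k}=g_{ik}g_{jk}$ and $(S^{\sT}v)_k=\sum_{\ell>k}v_{\{k,\ell\}}$ is the adjoint of the head-indicator map. A direct Cauchy--Schwarz computation gives $\|S^{\sT}\|_2\le \sqrt{n-1}$. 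The key identity for $M$ is
\begin{align}
(Mw)_{\{i,j\}} = \sum_k g_{ik}g_{jk}\,w_k = \big(G\,\diag(w)\, G^{\sT}\big)_{ij},
\end{align}
with $G\in\reals^{n\times n}$ the symmetric centered adjacency matrix $G_{ij}=g_{ij}$. Bounding the sum of squared strictly-upper-triangular entries of a symmetric matrix by half of its squared Frobenius norm, together with $\|G\diag(w)G^{\sT}\|_F\le \|G\|_2^2\|w\|_2$, gives $\|M\|_2\le \|G\|_2^2/\sqrt 2$. The random input is the spectral bound $\|G\|_2\ls\barn^{1/2}$, which is exactly the trace/moment estimate established inside the proof of Proposition \ref{prop:deviationM11}. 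Combining these pieces,
\begin{align}
\|\tJ_{2,2}\|_2 \le \alpha_4 p^2\,\|M\|_2\,\|S^{\sT}\|_2 \ls \alpha_4 p^2\,\|G\|_2^2\,\sqrt n \ls \alpha_4 p^2\,\barn\,\sqrt n \le \alpha_4 p^2\,\barn^{3/2},
\end{align}
with the stated probability. The argument for $\tJ_{2,4}$ is identical after replacing $S$ by the analogous tail-indicator matrix, since $(\tJ_{2,4})_{\{i,j\},\{k,\ell\}} = \alpha_4 p^2 g_{i\ell}g_{j\ell}$ depends on $\{k,\ell\}$ only through $t(B)=\ell$.

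The main obstacle is recognizing the right factorization. A naive moment-method attack on $\tJ_{2,2}$ (analogous to Lemma \ref{lem:normbndeta24}) would force one to count contributing labelings of a ribbon that is a cycle on the $u$-vertices with $2m$ pendants, which is notationally fiddly and obscures the rank deficiency. Isolating the purely deterministic spreading factor $S^{\sT}$ (contributing $\sqrt n$) from the genuinely random factor $M$ (bounded by $\|G\|_2^2$) lets the required inequality fall out immediately from the F\"uredi--Koml\'os-type estimate on the Wigner-like matrix $G$ already available in the paper.
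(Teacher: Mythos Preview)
Your treatment of the kernel identities is identical to the paper's: compute $(\tJ_{2,3}+\tJ_{2,5})v$, observe it has the form $u_i+u_j\in\bbV_0\oplus\bbV_1$, and then transpose using $\tJ_{2,2}=\tJ_{2,3}^{\sT}$, $\tJ_{2,4}=\tJ_{2,5}^{\sT}$.

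For the operator-norm bounds your route is genuinely different from the paper's and is correct. The paper argues directly via the moment method: it expands $\Tr\{(\tJ_{2,2}^{\sT}\tJ_{2,2})^m\}$ as a sum over labelings of a $(2,2)$-ribbon of length $2m$, observes this ribbon is the disjoint union of $m+1$ isolated vertices and a bridge $B(m)$ of length $2m$, and then proves by a three-case induction on $m$ that $v_*(B(m))\le 2m+1$, yielding $v_*(R(2,2,m))\le 3m+2$ and hence the bound via Lemma~\ref{lem:gentrbnd}. Your factorization $\tJ_{2,2}=\alpha_4 p^2\,M S^{\sT}$ exploits the rank deficiency (the entry depends on $B$ only through $h(B)$) to split off a deterministic factor $\|S^{\sT}\|_2\le\sqrt{n-1}$ and reduce the random part to $\|M\|_2\le\|G\|_2^2/\sqrt{2}$, which then rides on the F\"uredi--Koml\'os-type bound $\|G\|_2\ls\barn^{1/2}$ already proved inside Proposition~\ref{prop:deviationM11}. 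This is shorter and more transparent, and in particular avoids the inductive bridge-counting argument entirely. The paper's moment computation for the bridge is not wasted, however: essentially the same bridge estimate $v_*(B(m))\le 2m+1$ is reused later in Lemma~\ref{lem:normbndL21} to control $L_{2,1}$, so within the overall architecture of Section~\ref{sec:Proof} the combinatorics has to be done somewhere regardless.
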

\begin{proof}
  It is easy to check that $\tJ_{2, 2} = \tJ_{2, 3}^\sT$ 
  and $\tJ_{2, 4} = \tJ_{2, 5}^{\sT}$. 
  We prove \myeqref{eq:projeta22}, from which 
  \myeqref{eq:projeta21} follows by taking transposes
  of each side. From the definition of $\tJ_{2, \nu}$ 
  we have for any $v\in \reals^{\nCe}$
  \begin{align}
    (\tJ_{2, 3}v + \tJ_{2, 5}v)_{\{i, j\}} &= \sum_{k<\ell} p^2(g_{ik}g_{i\ell} + g_{jk}g_{j\ell})v_{\{k, \ell\}}\\
    &= u_i + u_j,
  \end{align}
  where we let $u_i \equiv = \sum_{k<\ell}p^2(g_{ik}g_{i\ell})v_{\{k, \ell\}}$. It follows that
  $(\tJ_{2, 3}v + \tJ_{2, 5})v \in \bbV_0\oplus\bbV_1$ hence $\proj_2 (\tJ_{2, 3} + \tJ_{2, 5}) = 0$.

  We prove the claim on the spectral norm for $\tJ_{2, 2}$. The claim for $\tJ_{2, 4}$ holds in an 
  analogous fashion. Let $R({2, 2}, m)$ be a $(2,2)$-ribbon of length $m$.
  Then:
  \begin{align}
    \Tr\left\{(\tJ_{2, 2}^\sT \tJ_{2, 2})^{m}\right\} &= 
    \sum_{\ell\in \frakL(R({2, 2}, m)) } (\alpha_4 p^{2})^{2m}\prod_{e=\{u,
    v\}\in R({2,
    2}, m))} g_{\ell(u)\ell(v)}.
  \end{align}
  By Lemma \ref{lem:gentrbnd}, it suffices to show that
  $v_*(R({2, 2}, m)) = 3m+2$. 
i.e a contributing labeling $\ell$ maps to at
most $3m + 2$ unique labels. Notice that $R({2, 2}, m)$ is the union 
of $m+1$ isolated vertices and a bridge $B(m)$ of length $2m$. The isolated vertices
are unconstrained and hence contribute at most $m+1$ new labels. It suffices,
hence, to prove that $B(m)$ has at most $2m +1$ unique labels under its labeling
$\ell_{B(m)}$ induced by $\ell$. Since, $\ell_{B(m)}$ is contributing for
$B(m)$, it suffices that $v_*(B(m))= 2m+1$.   We prove this by induction on $m$. 
In the base case of $m=1$, this implies it has at most
$3=(2\cdot 1+1)$ unique labels. Assuming that the claim is true for bridges
of length at most  $2m$ for $m>1$, we show
that it holds for a bridge $B(m+1)$ of length $2m+2$. $B(m+1)$ contains $3m+4$ vertices
hence there are 3 cases:
\begin{enumerate}
  \item For every vertex $u\in B$ there exists a different vertex $u'\in B$ such
    that $\ell_{B}(u) = \ell_{B}(u')$. 
  \item There exists a vertex $u\in B$ which has a unique label under
    $\ell_{B}$ and $u$ has degree 4. 
  \item There exists a vertex $u\in B$ which has a unique label under $\ell_B$
    with degree 2. 
\end{enumerate}
In the first case, $\abs{\range(\ell)}\le (3m+4))/2\le 2(m+1)+1$ hence the claim
holds. 

In the second case, we have that
the neighboring couples are $(u_1, v_1)$, $(u_2, v_2)$ then
$\ell_{B(m+1)}(u_1)=\ell_{B(m+1)}(u_2)$ and
$\ell_{B(m+1)}(v_1)=\ell_{B(m+1)}(v_2)$. We can then contract the neighbors of $u$ and delete
$u$ and incident edges to obtain a bridge $\tilde{B}(m)$ (and induced labeling
$\ell_{\tilde{B}(m)}$ of length $2m$). By induction $\ell_{\tilde{B}(m)}$ maps to at
most $2m+1$ labels, hence $\ell_{B(m)}$ to at most $2m+1+1 \le 2(m+1) +1$ labels. 

In the third case, if $u$ has neighbors $u_1,u_2$ then
$\ell_{B(m+1)}(u_1)=\ell_{B(m+1)}(u_2)$. 
If we now identify the neighbors of $u$ with the same
label, and delete $u$ and the edges incident on it, we obtain a bridge
$\tilde{B}(m)$ of length $2m$, and an induced labeling
$\ell_{\tilde{B}(m)}$ which is
contributing. By induction, $\tilde{B}(m)$ has at most $2m+1$ unique labels, 
hence $B(m+1)$ has at most $2m+1+2 = 2(m+1) + 1$ unique labels. This
completes the induction.
\end{proof}

Finally, we have to deal with the remainder terms (recall that matrix
$K$ is defined in Eq.~(\ref{eq:Kdef})).
\begin{lemma} 
  \label{lem:normbndK}
  We have with probability at least $1-n^{-5}$ that:
  \begin{align}
    \norm{K}_2 &\ls \alpha_3 \barn^{1/2}
  \end{align}
\end{lemma}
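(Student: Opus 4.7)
My plan is to reduce the operator-norm bound on $K$ to the classical Wigner-type estimate $\norm{G}_2\ls\barn^{1/2}$, where $G\in\reals^{n\times n}$ is the symmetric zero-diagonal matrix with entries $G_{ij} = g_{ij}$. The key observation is that, although the index set of $K$ is $\nCe$, the action of $K$ on $v\in\reals^{\nCe}$ can be rewritten as a simple matrix expression in terms of $G$ once one identifies $\reals^{\nCe}$ with the space of symmetric $n\times n$ matrices with zero diagonal.

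Concretely, for $v\in\reals^{\nCe}$ let $V$ be the symmetric matrix defined by $V_{ab} = v_{\{a,b\}}$ for $a\ne b$ and $V_{aa}=0$, so that $\norm{V}_F^2 = 2\norm{v}_2^2$. Inspecting the four non-zero cases in the definition \eqref{eq:Kdef} of $K_{A,B}$ (one per choice of which endpoint of $A$ coincides with which endpoint of $B$), a direct case-check shows that for every $a<b$
\begin{align*}
(Kv)_{\{a,b\}} \;=\; \alpha_3\sum_{c}g_{bc}V_{ac} + \alpha_3\sum_{c}g_{ac}V_{bc}
\;=\; \alpha_3\bigl[(VG)_{ab} + (GV)_{ab}\bigr],
\end{align*}
where the conditions $V_{aa}=G_{aa}=0$ absorb the excluded terms $c\in\{a,b\}$. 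Setting $W := VG + GV$ (which is symmetric) then yields $\norm{Kv}_2^2 \le (\alpha_3^2/2)\norm{W}_F^2$, and two applications of submultiplicativity give $\norm{W}_F \le 2\norm{G}_2\norm{V}_F = 2\sqrt{2}\norm{G}_2\norm{v}_2$. Together these produce the deterministic bound $\norm{K}_2 \le 2\alpha_3\norm{G}_2$.

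It remains to establish $\norm{G}_2\ls\barn^{1/2}$ with probability at least $1-n^{-5}$. This is immediate from Proposition \ref{prop:deviationM11}, since $\tN_{11}-\E\{\tN_{11}\} = \alpha_2 G$, so the cycle-counting moment calculation already carried out there delivers the desired Wigner-type estimate (this is also the estimate invoked for $\norm{g}_2$ in the proof of Lemma \ref{lem:normbndeta3}). Combining the two bounds completes the proof.

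The main subtlety I anticipate is verifying the algebraic identity $Kv \leftrightarrow \alpha_3(VG+GV)$ through the case analysis -- one has to check that the four cases in \eqref{eq:Kdef} (distinguished by the location of the shared vertex inside the head/tail labeling) reassemble precisely into the two sums $\sum_c g_{bc}V_{ac}$ and $\sum_c g_{ac}V_{bc}$. Once this identity is in hand, the remainder is standard. A purely moment-method attack directly on $\Tr(K^{2m})$ is possible but requires bounding $v_*$ for closed edge-walks in the Johnson-type scheme of pairs sharing one vertex, which is more intricate than the plain cycle count available for $G$; the reduction above sidesteps this by transferring the combinatorics to the simpler matrix $G$.
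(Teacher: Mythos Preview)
Your proof is correct and genuinely different from the paper's argument. The identity $(Kv)_{\{a,b\}} = \alpha_3[(VG)_{ab}+(GV)_{ab}]$ does hold after the four-case check (the terms $c=a$ and $c=b$ vanish exactly because $V_{aa}=0$ and $g_{bb}=0$), and the resulting deterministic bound $\norm{K}_2\le 2\alpha_3\norm{G}_2$ combined with the Wigner estimate from Proposition~\ref{prop:deviationM11} finishes the job.

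The paper instead attacks $K$ head-on with the moment method: it expands $\Tr\{(K^{\sT}K)^m\}$ as a sum over labelings of ``star ribbons'' $S(2,1,m)$ (ribbons where adjacent couples have been merged along one vertex), and shows $v_*(S(2,1,m))\le m+2$ by observing that each such star ribbon is a union of two paths with $2m$ edges in total, hence any contributing labeling has at most $m+2$ distinct labels. Lemma~\ref{lem:gentrbnd} then gives the bound. Your reduction is cleaner and more conceptual for this particular matrix --- it explains \emph{why} $K$ inherits the $\barn^{1/2}$ scaling directly from the adjacency deviation $G$, and it sidesteps the star-ribbon combinatorics entirely. The paper's approach has the virtue of uniformity: every deviation matrix in Sections~\ref{subsec:deviationM11}--\ref{subsec:deviationM12} is handled by the same graph-labeling machinery, which is what the authors need for the harder blocks $J_{\eta,\nu}$ where no such clean factorization is available.
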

\begin{proof}
  We compute $\Tr\left\{
    (K^\sT K)^{m}
  \right\}$. Note that:
  \begin{align}
    \Tr\left\{ (K^{\sT}K)^m \right\} &= \sum_{A_1, B_1 \dots A_{m}
    B_{m}} \prod_{l=1}^r(K_{A_l B_{l}} K_{A_{l+1}B_{l}})\\
    &= \sum_{A_1, B_1 \dots A_m B_m} \prod_{l=1}^rK_{A_l B_{l}} K_{A_{l+1}B_{l}}
    \ind(\abs{A_l\cap B_l} = 1)\ind(\abs{A_{l+1}\cap B_l}=1) .
  \end{align}
  Here we set $A_{m+1}\equiv A_1$. The second equality follows since $K$ is
  supported on entries $A, B$ such that $A, B$ share exactly one vertex. 
  Recalling the definition of 
  star ribbons, each term that does not vanish in the summation above
  corresponds a labeling of a star ribbon $S({2, 1}, m)\in \cS_{2, 1}^{m}$ formed from a $(2,
  1)$-ribbon of length $2m$, i.e. we have:
  \begin{align}
    \Tr\left\{ (K^{\sT}K)^m \right\} &= \alpha_3^{2m}\sum_{S({2, 1}, m)\in \cS_{2, 1}^m}\sum_{\ell\in \frakL(S({2, 1}, m))}
    \prod_{e=\{u, v\}\in S({2, 1}, m)} g_{\ell(u), \ell(v)}.
  \end{align}
  Since there are at most $2^{2m} = 4^m$ star ribbons of length $2m$, it suffices by a simple extension
  of 
  Lemma \ref{lem:gentrbnd}, to show that $v_*(S({2, 1}, m)) =
  m+2$. Note that every $S({2, 1}, m)$ is a union of 2 paths, one
  of length $m'$ and the other of length $2m-m'$ for some $m' \in [2m]$,
  hence has at most $2$ connected components. Let $\ell$ be a contributing
  labeling of $S({2, 1}, m)$ and $\bG_{S({2, 1}, m)}$ be the graph obtained
  by identifying vertices in $S({2, 1}, m)$ with the same label. Since $S({2,
  1}, m)$
  is a union of two paths, $\bG_{S({2, 1}, m)}$ has at most 2 connected components.
  Furthermore, since $\ell$ is a contributing labeling, every labeled edge in $S({2, 1}, m)$
  repeats at least twice, hence $\bG_{2, 1}(m)$ has at most $2m/2 = m$ edges. 
  Consequently, it has at most $m+2$ vertices, implying that $v_*(S({2, 1}, m)) \le m+2$. 
\end{proof}

Finally, we deal with the differences $J_{\eta, \nu} - \tJ_{\eta,
  \nu}$. 
(Recall that $J_{\eta, \nu}$ and $\tJ_{\eta, \nu}$ are defined in
Eqs. (\ref{eq:Jdef}) and  (\ref{eq:tJdef}).)
\begin{lemma} \label{lem:normbndJmtJ}
  With probability at least $1- 6n^{-5}$, 
  for each $\eta\le 2$ and $ \nu \le \binom{4}{\eta}$:
  \begin{align}
    \norm{J_{\eta, \nu} - \tJ_{\eta, \nu}}_2 &\ls \alpha_4  \barn
  \end{align}
\end{lemma}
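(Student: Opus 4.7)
The plan is to apply the moment method, analogous to Lemmas \ref{lem:normbndeta1}, \ref{lem:normbndeta2} and \ref{lem:normbndeta24}. The key structural fact is that $J_{\eta,\nu} - \tJ_{\eta,\nu}$ is supported precisely on entries $(A,B) \in \nCe\times \nCe$ with $A\cap B \neq \emptyset$, where it equals $-\tJ_{\eta,\nu}$. On such entries, $\tJ_{\eta,\nu}$ is still $\alpha_4 p^{4-\eta}$ times a product of $\eta$ variables $g_{ij}$, with the convention $g_{ii}=0$ (so the entry may automatically vanish when the intersection forces a self-loop to appear among the face-edges).

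First, I would expand $\Tr\{((J_{\eta,\nu} - \tJ_{\eta,\nu})^{\sT} (J_{\eta,\nu} - \tJ_{\eta,\nu}))^m\}$ as a sum over labelings of a $(\eta,\nu)$-ribbon $R(\eta,\nu,m)$ of length $2m$, but restricted to a subset of labelings $\widetilde{\frakL}(R(\eta,\nu,m))$ satisfying the additional constraint that in each face $(u_i,v_i,u_{i+1},v_{i+1})$ the two couples $\{u_i,v_i\}$ and $\{u_{i+1},v_{i+1}\}$ share at least one label. Contributing labelings $\widetilde{\frakL}_2(R(\eta,\nu,m))$ are as usual those for which each labeled edge is moreover repeated at least twice.

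Next, I would show that $v_*(R(\eta,\nu,m)) \le 2m+O(1)$ under this enhanced labeling set. Without the intersection constraint, Lemmas \ref{lem:normbndeta1} and \ref{lem:normbndeta2} give $v_*$ of order $3m$, with the extra $m$ coming from the $2m+1$ vertices that are disconnected in the face graphs for these values of $\eta, \nu$. The pairwise-intersection constraint forces, at each face, one of the two vertices of $(u_{i+1},v_{i+1})$ to be identified with one of the two vertices of $(u_i,v_i)$; chaining over the $2m$ faces removes $\Omega(m)$ degrees of freedom, cutting the exponent from $3m$ to $2m$. Given this bound, Lemma \ref{lem:gentrbnd} (with $c_1=2$, $c_5=\alpha_4 p^{4-\eta}\le \alpha_4$) yields $\norm{J_{\eta,\nu}-\tJ_{\eta,\nu}}_2 \ls \alpha_4 \barn$ with probability at least $1-n^{-5}$, and a union bound over the at most six relevant pairs $(\eta,\nu)$ with $\eta\le 2$ produces the claimed $1-6n^{-5}$.

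The main obstacle is the careful induction needed to establish $v_*(R(\eta,\nu,m))\le 2m+O(1)$ across all four relevant ribbon types, since the precise identification imposed at each face (which of the two shared vertex choices is made) varies independently from face to face, and some face graphs already contain a self-loop that trivializes the face. I would handle this by an induction on $m$ mirroring the contraction argument in Lemma \ref{lem:normbndeta2}: in the base case $m=1$ a direct enumeration of the at most constantly many labeling patterns suffices; in the inductive step, either every vertex already has a repeated label (giving at most $v(R(\eta,\nu,m))/2$ distinct labels automatically), or some vertex $w$ of maximum degree carries a unique label, in which case the intersection and edge-doubling constraints force the two neighboring couples of $w$ to be pairwise identified, allowing contraction to a ribbon of length $2m-2$ with an induced contributing labeling whose range is at most two larger than that of the original.
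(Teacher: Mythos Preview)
Your overall strategy matches the paper's exactly: expand the trace as a sum over labelings of $R(\eta,\nu,m)$ restricted to those where every pair of adjacent couples shares a label, then show that the effective $v_*$ drops from $3m+O(1)$ to $2m+O(1)$ and invoke Lemma \ref{lem:gentrbnd}. The difference, and the gap, lies in how you propose to establish the $v_*$ bound.

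Your inductive contraction argument does not go through as written. The dichotomy ``either every vertex has a repeated label, or some vertex of maximum degree has a unique label'' is not exhaustive: the missing case is that all positive-degree vertices have repeated labels while some isolated (degree-$0$) vertex carries a unique label, and for $\eta\le 2$ such isolated vertices always exist. Moreover, even in the case you do treat, the claim that edge-doubling plus intersection forces the two neighboring couples of $w$ to be \emph{pairwise} identified fails for $\eta=1$: if $w=u_i$ is the unique-label vertex (degree $2$), edge-doubling gives only $\ell(u_{i-1})=\ell(u_{i+1})$, while the intersection constraints on faces $i-1$ and $i$ do not force $\ell(v_{i-1})=\ell(v_{i+1})$. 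So you cannot contract to a shorter ribbon with a valid induced labeling of the same type.

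The paper avoids induction altogether with a direct counting observation: for the relevant $(\eta,\nu)$ every face contains a degree-$0$ vertex, and the intersection constraint forces that vertex's label to coincide with one of the (at most two) labels in the adjacent couple. Hence, once the labels of the connected component are fixed (at most $2m+2$ of them, by the earlier lemmas), the isolated vertices contribute only a multiplicative factor $2^{O(m)}$ rather than $n^{O(m)}$, giving $|\widetilde{\frakL}_2|\le \binom{n}{2m+2}\,2^{2m}\,(2m+2)^{3m+2}$. This feeds directly into Lemma \ref{lem:trpow} with $c_1=2$ and yields the $\alpha_4\barn$ bound. That observation is the missing ingredient in your sketch; once you use it, no induction on $m$ is needed.
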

\begin{proof}
  We first consider $\Tr\left\{ ((\tJ_{\eta, \nu} - J_{\eta, \nu})^\sT
  (\tJ_{\eta, \nu} - J_{\eta, \nu}))^{m} \right\}$. Let $R({\eta,
  \nu}, m)$ be a
  $(\eta, \nu)$-ribbon of length $2m$. As in the previous lemmas, we can write
  $\Tr\left\{( (\tJ_{\eta, \nu} - J_{\eta, \nu})^\sT
  (\tJ_{\eta, \nu} - J_{\eta, \nu}) )^m\right\}$ as a sum over labelings of
  $R({\eta, \nu}, m)$ as follows:
  \begin{align}
    \Tr\left\{ ((\tJ_{\eta, \nu} - J_{\eta, \nu})^\sT
    (\tJ_{\eta, \nu} - J_{\eta, \nu}))^{m} \right\} &= (\alpha_4
    p^{4-\eta})^{2m}\sum_{\ell\in \widetilde{\frakL}(R({\eta, \nu}, m))}
    \prod_{e=\{u, v\}\in R({\eta, \nu}, m)} g_{\ell(u), \ell(v)}.
  \end{align}
  Here we restrict the labelings $\ell$ to the subset
  $\widetilde{\frakL}(R({\eta, \nu}, m)$
  that satisfy the criteria:
  \begin{enumerate}
	\item For every couple $(u, v)$, $\ell(u)<\ell(v)$.
	\item Consider any adjacent pair of couples $(u_1, v_1), (u_2, v_2)$ in
	  $R({\eta, \nu}, m)$,
	  at least one of $u_1, v_1, u_2, v_2$ has degree 0. Assume this is $u_1$ (without loss of generality), 
	  then either $\ell(u_1) = \ell(u_2)$ or $\ell(u_1) = \ell(v_2)$. 
  \end{enumerate}
  On taking expectations the only labelings that do not vanish satisfy the additional 
  criterion that every labeled edge is repeated at least twice in $R({\eta,
  \nu}, m)$. We 
  call this set of labelings $\widetilde{\frakL}_2(R({\eta, \nu}, m))$. As in Lemma \ref{lem:normbndeta2}
  it suffices to show that $\abs{\widetilde{\frakL}_2(R({\eta, \nu}, m))} \le \binom{n}{2m+2}(2^{2m} (2m+2)^{3m+2})$. 
  This follows from the same arguments as in Lemmas \ref{lem:normbndeta2},
  \ref{lem:normbndeta1}
  (for $\eta=1, 2$ respectively), with the additional caveat that the isolated vertices in 
  $R({\eta, \nu}, m)$ are not unconstrained as before. Indeed, once the labels of the connected component
  of $R({\eta, \nu}, m)$ are decided, there are only $2^m$ possible ways of choosing the labels
  for the isolated vertices. Consequently, we have the bound:
  \begin{align}
\E\Tr\left\{ ((\tJ_{\eta, \nu} - J_{\eta, \nu})^\sT
(\tJ_{\eta, \nu} - J_{\eta, \nu}))^m \right\} &\le (\alpha_4
p^{4-\eta})^{2m}\abs{\widetilde{\frakL}_2(R({\eta, \nu}, m))}\\
&\le \binom{n}{2m+2} (2\alpha_4 p^{4-\eta})^{2m}(2m+2)^{3m+2}.
  \end{align}
  Applying Lemma \ref{lem:trpow}, union bound and the triangle inequality yields the final result.
\end{proof}

We can now prove Proposition \ref{prop:deviationM22}. 
\begin{proof}[Proof of Proposition \ref{prop:deviationM22}]
  The intersection of high probability events of Lemmas \ref{lem:normbndeta24}, \ref{lem:normbndeta3}, \ref{lem:normbndeta1}, 
  \ref{lem:normbndeta2}, \ref{lem:normbndK}
  and \ref{lem:normbndJmtJ} holds with probability at least $1-25n^{-5}$. 
  We will condition on this event for the proof of the proposition. 

  We  bound each of the projections $\proj_a (\tN_{22} -
  \E\{\tN_{22}\})\proj_b$ for $a, b \in\{ 0, 1, 2\}$ using the
  decomposition (\ref{eq:M22expansion}).
\begin{itemize}
\item Let us first consider $a=b$, $a,b\in \{0, 1\}$,
  cf. Eq.~(\ref{eq:ClaimN22_1}).
 By application 
  of above lemmas, triangle inequality,  the fact that
  $\norm{\proj_a X \proj_b}_2 \le \norm{\proj_a}_2\norm{X}_2\norm{\proj_b}_2 \le
  \norm{X}_2$ for any $X \in \reals^{\nCe}$ in the decomposition 
  \myeqref{eq:M22expansion}, we get 
  \begin{align}
    \norm{\proj_a(\tN_{22} - \E\{\tN_{22}\})\proj_a}_2 &\ls \alpha_3
    \barn^{1/2} + \alpha_4\bar + \alpha_4\barn^{3/2} \big)\\
    &\ls \alpha_3 \barn^{1/2} +  \alpha_4\barn^{3/2},
  \end{align}
This proves Eq. ~(\ref{eq:ClaimN22_1}).
\item The case $a=b=2$ is treated in the same manner,  with the only
  difference that, when bounding $\norm{\proj_2 (\tN_{22}-
  \E\{\tN_{22}\})\proj_2}$, the terms of the type $\alpha_4
  \barn^{3/2}$ do not appear (see Lemmas
  \ref{lem:normbndeta1}, \ref{lem:normbndeta2}). Hence:
  \begin{align}
    \norm{\proj_2(\tN_{22} - \E\{\tN_{22}\})\proj_2}_2 &\ls \alpha_3
    \barn^{1/2} + \alpha_4\barn \\
    &\le \alpha_3 \barn^{1/2} +  \alpha_4\barn.
  \end{align}
This proves Eq. ~(\ref{eq:ClaimN22_2}).
\item  The bound for the cross terms $\norm{\proj_a (\tN_{22} -
  \E\{\tN_{22}\})\proj_b}_2$ for $a\ne b$ is identical to that for 
  the case $a=b=0$ above. 

This proves Eq. ~(\ref{eq:ClaimN22_3}) and hence finishes our proof 
of Proposition \ref{prop:deviationM22}.
\end{itemize}
\end{proof}

\subsection{Controlling $\tN_{12} - \E\{\tN_{12}\}$} \label{subsec:deviationM12}

We prove the following proposition for the deviation
$\tN_{12} - \E\{\tN_{12}\}$
\begin{proposition}\label{prop:deviationM12}
  With probability at least $1- 5n^{-5}$ the following
  are true. 
  \begin{align}
    \norm{\tN_{12} - \E\{\tN_{12}\}}_2 & \ls  \alpha_3 \barn.
  \end{align}
\end{proposition}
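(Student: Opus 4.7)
The plan is to decompose $\tN_{12}-\E\{\tN_{12}\}$ entry-wise according to $|A\cap B|$ and then expand each resulting piece into monomials in the centered Bernoullis $g_{\cdot,\cdot}$, exactly as was done for $\tN_{22}$ in Section~\ref{subsec:deviationM22}. Using $(\tN_{12})_{\{i\},\{k,\ell\}} = \alpha_3(p+g_{ik})(p+g_{i\ell})-\alpha_1\alpha_2$ when $i\notin\{k,\ell\}$ and $\alpha_2(p+g_{ik})-\alpha_1\alpha_2$ when $i\in\{k,\ell\}$, one obtains
\[
\tN_{12}-\E\{\tN_{12}\} \;=\; Y_{1}+Y_{2}+Y_{3}+Z,
\]
where $Y_1,Y_2$ are the ``linear'' off-diagonal pieces carrying a single factor $\alpha_3 p\,g_{ik}$ or $\alpha_3 p\,g_{i\ell}$ (for $i\notin\{k,\ell\}$), $Y_3$ carries the ``quadratic'' off-diagonal term $\alpha_3 g_{ik}g_{i\ell}$, and $Z$ collects the remaining entries where $i\in\{k,\ell\}$ (of magnitude $\alpha_2$ times one $g$-variable). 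By the triangle inequality it suffices to bound each of $\|Y_1\|_2,\|Y_2\|_2,\|Y_3\|_2,\|Z\|_2$ separately.

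For each piece, I would expand $\Tr\{(Y^\sT Y)^m\}$ into a sum of products of $g$'s and reinterpret this sum as a sum over valid labelings of an explicit auxiliary graph, following the template of Lemmas \ref{lem:trexp} and \ref{lem:gentrbnd}. Concretely, a trace power of $Y_1$ (or $Y_2$) of order $2m$ corresponds to labelings of a graph obtained from a cycle of length $2m$ on the $\{k,\ell\}$-indices with each edge of the cycle decorated by an auxiliary $\{i\}$-vertex carrying one $g$-edge; $Y_3$ corresponds to a similar decorated cycle but with two $g$-edges meeting at the auxiliary vertex, giving a structure analogous to the $(\eta,\nu)$-ribbons of Section \ref{subsec:deviationM22}; and $Z$ is supported on pairs with $|A\cap B|=1$, so its trace expansion produces a ``star-ribbon'' family as in Lemma \ref{lem:normbndK}. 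For each family I would count the maximum number of distinct labels $v_*$ in any contributing labeling and then apply Lemma \ref{lem:gentrbnd}.

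The main obstacle is the same as for the $(1,\nu)$-ribbon case of Lemma \ref{lem:normbndeta1}: the linear pieces $Y_1,Y_2$ carry many free ``pendant'' $\{i\}$-vertices whose labels are only weakly constrained by the edge-doubling requirement, so a naive count gives an operator norm of order $\alpha_3 p\, \barn^{3/2}$ rather than the required $\alpha_3\barn$. I would handle this by the same Kronecker reduction used in Lemma \ref{lem:normbndeta3}: writing $Y_1$ (up to the $i\in\{k,\ell\}$ correction and a reshaping operator of norm one) as $\alpha_3 p\,(g\otimes \one_n^\sT)$ restricted to upper-triangular pair-indices, and using $\|g\|_2\ls \barn^{1/2}$ (proved in Proposition \ref{prop:deviationM11}) together with $\|\one_n\one_n^\sT\|_2=n$ to conclude $\|Y_1\|_2\ls \alpha_3 p \cdot \barn^{1/2}\cdot \sqrt{n}\ls \alpha_3 \barn$, and analogously for $Y_2$. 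The cubic piece $Y_3$ admits the same moment calculation as for $\tJ_{3,\nu}$ in Lemma \ref{lem:normbndeta3}, giving $\|Y_3\|_2\ls\alpha_3\barn$; and $Z$ is handled verbatim by the star-ribbon argument of Lemma \ref{lem:normbndK}, contributing only $\alpha_2\barn^{1/2}$, which is absorbed.

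Collecting the four bounds via a union bound over the (at most five) high-probability events produced above, and discarding subleading terms using $p\le 1$, yields $\|\tN_{12}-\E\{\tN_{12}\}\|_2\ls \alpha_3\barn$ with probability at least $1-5n^{-5}$, as required.
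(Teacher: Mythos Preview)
Your decomposition into $Y_1+Y_2+Y_3+Z$ is essentially the paper's $L_{1,1}+L_{1,2}+L_{2,1}$, but you have one factual error and one unnecessary detour.

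\textbf{The $Z$ term does not exist.} When $A=\{i\}\subseteq B=\{k,\ell\}$, we have $A\setminus B=\emptyset$, so by the definition $N_{A,B}=\alpha_{|A\cup B|}\prod_{a\in A\setminus B,\,b\in B\setminus A}\cG_{ab}=\alpha_2$ (the empty product is $1$). Hence $(\tN_{12})_{A,B}=\alpha_2-\alpha_1\alpha_2$ is deterministic, not $\alpha_2(p+g_{ik})-\alpha_1\alpha_2$ as you wrote. Your $Z$ is identically zero, so the star-ribbon step is wasted effort (harmless, but based on a misreading of the matrix).

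\textbf{The moment method already gives $\barn$ for $Y_1,Y_2$.} Your worry that a ``naive count gives $\alpha_3 p\,\barn^{3/2}$'' is misplaced. In the trace expansion of $(Y_1^\sT Y_1)^m$ the graph is a cycle of length $2m$ on the alternating vertices $h(B_1),A_1,h(B_2),A_2,\dots$ together with $m$ isolated $t(B_l)$ vertices. Thus $v(m)=3m$ and $v_*(m)\le (m+1)+m=2m+1$, so Lemma~\ref{lem:gentrbnd} with $c_1=2$ gives $\|Y_1\|_2\ls\alpha_3 p\,\barn\le\alpha_3\barn$ directly---this is exactly what the paper does in Lemma~\ref{lem:normbndL11}. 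You seem to have drawn an analogy with the $(1,\nu)$-ribbons of Lemma~\ref{lem:normbndeta1}, but those are $\binom{n}{2}\times\binom{n}{2}$ matrices with $4m$ vertices and $2m$ isolated ones, which is why they give $\barn^{3/2}$; here the row index is a singleton, so the count is smaller. Your Kronecker reduction is correct and gives the same bound, but it is a detour.

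For $Y_3$ you point to Lemma~\ref{lem:normbndeta3}, but the relevant structure is the \emph{bridge} of Lemma~\ref{lem:normbndeta2} (this is precisely the paper's Lemma~\ref{lem:normbndL21}). Once these are fixed, your argument and the paper's coincide.
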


Recall that an entry of $\tN_{12}\in\reals^{\binom{[n]}{1}\times \nCe}$ can be written as:
\begin{align}
  (\tN_{12})_{A, B} &= \begin{cases}
    \alpha_2 - \alpha_1 \alpha_2 &\text{ if } \abs{A\cap B} =1 \\
    \alpha_3 (p + g_{A, h(B)})(p + g_{A, t(B)}) - \alpha_1\alpha_2
    &\text{ otherwise.}
  \end{cases}
\end{align}
Define the matrices $L_{\eta, \nu} \in\reals^{\binom{[n]}{1} \times \nCe}$ for $\eta = 1, 2$, $\nu \le
\binom{\eta}{\nu}$ and $\tL_{1, \nu}$ for $\nu = 1, 2$:
\begin{align}
  (L_{2, 1})_{A, B} &\equiv \begin{cases}
    \alpha_3 g_{A, h(B)}g_{A, t(B)}  &\text{ if } \abs{A \cap B} = 0\\
    0 &\text{ otherwise.}
  \end{cases} \\
  (L_{1, 1})_{A, B} &\equiv \begin{cases}
    \alpha_3 p g_{A, h(B)} &\text{ if } \abs{A \cap B} = 0 \\
    0&\text{ otherwise.}
  \end{cases}\\
  (L_{1, 2})_{A, B} &\equiv \begin{cases}
    \alpha_3 p g_{A, t(B)} &\text{ if } \abs{A\cap B} = 0 \\
    0&\text{ otherwise.}
  \end{cases}
%  (\tL_{1, 1})_{A, B} &\equiv \alpha_3 p g_{A, h(B)} \\
%  (\tL_{1, 2})_{A, B} &\equiv \alpha_3 p g_{A, t(B)}.
\end{align}

It thus follows that:
\begin{align} \label{eq:M12expansion}
  \tN_{12} - \E\{\tN_{12}\} &= L_{1, 1} + L_{1, 2} + L_{2, 2}.
\end{align}

We first prove two Lemmas on the spectral
properties of the matrices $L_{\eta, \nu}$

\begin{lemma}           
  \label{lem:normbndL21}
  With probability at least $1 - n^{-5}$, we have that
  \begin{align}
    \norm{L_{2, 1}}_2 &\ls \alpha_3 \barn.
  \end{align}
\end{lemma}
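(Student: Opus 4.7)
The plan is to bound $\|L_{2,1}\|_2$ via the moment method, following the same pattern used in Lemmas \ref{lem:normbndeta24}, \ref{lem:normbndeta1}, and \ref{lem:normbndeta2}. Because $L_{2,1}$ is rectangular with rows indexed by $\binom{[n]}{1}$ and columns by $\binom{[n]}{2}$, I will compute $\E\,\Tr\bigl\{(L_{2,1}L_{2,1}^{\sT})^{m}\bigr\}$ rather than the reverse order; this keeps the indexing simple.

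First I would expand
\begin{align*}
\Tr\bigl\{(L_{2,1}L_{2,1}^{\sT})^{m}\bigr\} &= \alpha_3^{2m}\!\!\sum_{i_1,\dots,i_m}\sum_{\{j_1,k_1\},\dots,\{j_m,k_m\}}\prod_{\ell=1}^{m} g_{i_\ell j_\ell}g_{i_\ell k_\ell}g_{i_{\ell+1}j_\ell}g_{i_{\ell+1}k_\ell},
\end{align*}
with the convention $i_{m+1}\equiv i_1$, where the pairs satisfy $j_\ell<k_\ell$ and where the indicator $|\{i_\ell\}\cap\{j_\ell,k_\ell\}|=0$ is enforced automatically as soon as we restrict to \emph{valid} labelings. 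The key observation is that if we identify $u_\ell\leftrightarrow i_\ell$, $v_\ell\leftrightarrow j_\ell$, $w_\ell\leftrightarrow k_\ell$, the edge product above matches exactly the edges of a bridge $B(m)$ of length $2m$, with the pairs $(v_\ell,w_\ell)$ playing the role of couples. Hence
\begin{align*}
\Tr\bigl\{(L_{2,1}L_{2,1}^{\sT})^{m}\bigr\} &= \alpha_3^{2m}\sum_{\ell\in \frakL(B(m))}\prod_{e=\{u,v\}\in B(m)}g_{\ell(u)\ell(v)}.
\end{align*}

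After taking expectations only contributing labelings $\frakL_2(B(m))$ survive, so by Lemma \ref{lem:trexp} the claim reduces to controlling $v_*(B(m))$ and $v(B(m))$. The bridge $B(m)$ has $v(B(m))=3m$ vertices, and in the proof of Lemma \ref{lem:normbndeta2} the bound $v_*(B(m))\le 2m+1$ was already established by the induction on removing a unique-label degree-$2$ or degree-$4$ vertex from a couple. I would simply quote that intermediate conclusion rather than redo it.

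Applying Lemma \ref{lem:gentrbnd} with $c_1=2$, $c_2=1$, $c_3=3$, $c_4=0$ (so indeed $c_3\le 2c_1$), and $\beta = \alpha_3$, yields
\begin{align*}
\|L_{2,1}\|_2 \;\lesssim\; \alpha_3\,\barn^{\,c_1/2} \;=\; \alpha_3\,\barn
\end{align*}
with probability at least $1-n^{-5}$, which is exactly the claim. The only thing to sanity-check during write-up is the bookkeeping that the constraint $|A\cap B|=0$ in the definition of $L_{2,1}$ merely forces a valid labeling at the $u$--$v$ and $u$--$w$ edges of each face of the bridge, and no extra labelings are permitted or excluded beyond those already counted in $\frakL_2(B(m))$; I do not anticipate any serious obstacle here, as this is the same reduction to bridge-labeling counts that powered the $\tJ_{2,2}$ bound.
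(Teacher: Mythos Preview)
Your proposal is correct and follows essentially the same route as the paper: expand $\Tr\{(L_{2,1}L_{2,1}^{\sT})^m\}$ as a sum over labelings of a bridge $B(m)$ of length $2m$, invoke the bound $v_*(B(m))\le 2m+1$ already established in Lemma~\ref{lem:normbndeta2}, and conclude via Lemma~\ref{lem:gentrbnd}. The only difference is that you spell out the constants $c_1,c_2,c_3,c_4$ explicitly, which the paper leaves implicit.
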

\begin{proof}
  Note that:
  \begin{align}
    \Tr\left\{ (L_{2, 1}L_{2, 1}^\sT) \right\} &= \sum_{A_1 \dots A_{m+1}, B_1
  \dots B_m} \prod_{l=1}^r g_{A_l h(B_l)} g_{A_l
  t(B_l)}g_{A_{l+1} h(B_l)}g_{A_{l+1} t(B_l)}
  \end{align}
  Equivalently, letting $B(m)$ be a bridge of length $2m$ we have:
  \begin{align}
    \Tr\left\{ (L_{2, 1}L_{2, 1}^\sT)^m \right\} &= \sum_{\ell\in \frakL(B)}
   \prod_{e=\{u, v\}\in B} g_{\ell(u)\ell(v)}.
  \end{align}
  By Lemma \ref{lem:gentrbnd} it suffices to show that $v_*(B(m)) \le 2m+1$. 
  This argument is already covered in Lemma \ref{lem:normbndeta2} and the
  claim hence follows. 
\end{proof}

\begin{lemma}
    \label{lem:normbndL11}
    With probability exceeding $1 - 2n^{-5}$ the following holds:
    \begin{align}
      \max_{\nu =1, 2}\norm{L_{1, \nu}}_2 &\ls \alpha_3 \barn.
    \end{align}
\end{lemma}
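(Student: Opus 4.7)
The plan is to argue by the moment method, mirroring the treatment of $L_{2,1}$ in Lemma \ref{lem:normbndL21} and of the isolated-vertex ribbons $\tJ_{1,\nu}$ in Lemma \ref{lem:normbndeta1}. By a union bound it suffices to prove $\norm{L_{1,\nu}}_2 \ls \alpha_3 p\, \barn$ with probability at least $1-n^{-5}$ for each $\nu\in\{1,2\}$, and the final bound follows since $p\le 1$. We focus on $\nu=1$; the case $\nu=2$ is symmetric (exchanging the role of head and tail in each couple).

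Expanding directly,
\begin{align*}
  \Tr\bigl((L_{1,1}L_{1,1}^\sT)^m\bigr) = (\alpha_3 p)^{2m}\!\!\sum_{\substack{A_1,\dots,A_m\in[n]\\ B_1,\dots,B_m\in\nCe}}\prod_{l=1}^{m} g_{A_l, h(B_l)}\,g_{A_{l+1}, h(B_l)}\,\ind(A_l\cap B_l=\emptyset)\,\ind(A_{l+1}\cap B_l=\emptyset),
\end{align*}
with $A_{m+1}\equiv A_1$. I introduce an auxiliary graph $F(m)$ with vertices $\{a_1,\dots,a_m,\,h_1,t_1,\dots,h_m,t_m\}$, couples $(h_l,t_l)$, and edge set $\{\{a_l,h_l\},\{a_{l+1},h_l\}: l\in[m]\}$ (indices mod $m$). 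Writing $B_l=\{h_l,t_l\}$ and letting $\ell$ denote the induced labeling, the trace becomes a sum over valid labelings of $F(m)$ of the product of $g_{\ell(e)}$ over edges $e\in F(m)$. After taking expectation, only contributing labelings (those in which every labeled edge occurs at least twice) survive, and each nonvanishing term is bounded in absolute value by $1$.

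The main counting step is to show $v_*(F(m))\le 2m+1$. Observe that $F(m)$ is the disjoint union of (i) a cycle $C(m)$ of length $2m$ on the vertex set $\{a_1,\dots,a_m,h_1,\dots,h_m\}$ (alternating $a_l,h_l,a_{l+1},h_l,\dots$) and (ii) $m$ isolated vertices $t_1,\dots,t_m$. For a contributing $\ell$, restricting to $C(m)$ and identifying vertices with equal labels yields a connected graph whose $2m$ edges are identified in pairs, giving at most $m$ distinct edges and hence at most $m+1$ distinct vertex labels (the same counting already used for $\tN_{11}-\E\{\tN_{11}\}$ in Proposition \ref{prop:deviationM11}). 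The remaining $m$ isolated vertices $t_l$ contribute at most $m$ further labels. Summing gives $v_*(F(m))\le 2m+1$, while $v(F(m))=3m$.

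Applying Lemma \ref{lem:gentrbnd} with $\beta=\alpha_3 p$, $c_1=2$, $c_2=1$, $c_3=3$, $c_4=0$ (note $c_3=3\le 2c_1=4$) yields $\norm{L_{1,1}}_2\ls \alpha_3 p\,\barn^{c_1/2}=\alpha_3 p\,\barn$ with probability at least $1-n^{-5}$. The same argument applied to $F(m)$ with $h_l$ and $t_l$ swapped handles $L_{1,2}$. A union bound over $\nu\in\{1,2\}$ yields the claim with probability at least $1-2n^{-5}$. I do not anticipate a serious obstacle: the combinatorial structure is strictly simpler than the ribbon case of Lemma \ref{lem:normbndeta1}, the only subtlety being the bookkeeping for the isolated-vertex contribution and a careful check that the intersection indicators restrict rather than enlarge the label count.
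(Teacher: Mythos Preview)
Your proposal is correct and follows essentially the same approach as the paper: both expand the trace, identify the underlying graph as the disjoint union of a cycle of length $2m$ and $m$ isolated vertices, bound $v_*$ by $(m+1)+m=2m+1$ via the cycle argument from Proposition~\ref{prop:deviationM11}, and then invoke Lemma~\ref{lem:gentrbnd}. You are slightly more explicit about the intersection indicators and the constants $c_1,\dots,c_4$, but there is no substantive difference.
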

\begin{proof}
  We prove the claim for $L_{1, 1}$. The same argument applies
  for $L_{1, 2}$ with minor modifications. 
  \begin{align}
    \Tr\left\{ (L_{1, 2} L_{1, 2}^\sT)^m \right\} &= \sum_{A_1 \dots
      A_{m+1}, B_1 \dots B_{m}} (\alpha_3 p)^{2m} \prod_{l=1}^{m}g_{A_l
      h(B_\ell)}g_{A_{l+1}h(B_l)}\, .
    \end{align}
    The above a sum over labelings of a bridge $B(m)$ of type 1 and
    class 1, of length $2m$. This is union of a cycle $D(m)$ of length
    $2m$, and $m$ isolated vertices. The lemma follows from Lemma
    \ref{lem:gentrbnd} if $v_*(B(m))\le 2m+1$. But by the above decomposition
    $v_*(B(m))\le v_*(D(m)) + m  = m+1 +m =2m+1$, as in Proposition
    \ref{prop:deviationM11}. This completes the proof.
\end{proof}

We can now prove Proposition \ref{prop:deviationM12}.
\begin{proof}[Proof of Proposition \ref{prop:deviationM12}]
  The intersection of favorable events of lemmas \ref{lem:normbndL21}, 
  \ref{lem:normbndL11} probability at least $1- 5n^{(\Gamma-4)/2}$. 
  The required claim then follows from Lemmas \ref{lem:normbndL21}, \ref{lem:normbndL11}  and triangle
  inequality.
\end{proof}

\subsection{Proof of Proposition \ref{prop:psddecomp}}
\label{subsec:proofpsddecomp}

The intersection of high probability favorable events
of Propositions \ref{prop:deviationM11}, \ref{prop:deviationM22}
and \ref{prop:deviationM12} holds with probability at least
$1-30n^{-5} \ge 1-n^{-4}$ for large enough $n$. 
By Proposition \ref{prop:deviationM11} we already have the required
bounds on $\tN_{11}$ and $\tN_{11}^{-1}$, cf. Eqs.~(\ref{eq:11PSD}) and (\ref{eq:11Bound}). It remains to show that
on the same event:
\begin{align}
  \tN_{22} &\mge \frac{2}{\alpha_1}\tN_{12}^\sT\qproj_n\tN_{12} +
  \frac{1}{n(\alpha_2p - \alpha_1^2)} \tN_{12}^\sT \qproj_n^\perp
  \tN_{12} \, ,
\end{align}
or, equivalently,
\begin{align}
  \text{ Or } \quad \E\{\tN_{22}\} &\mge \E\{\tN_{22}\} - \tN_{22} +  \frac{2}{\alpha_1}\tN_{12}^\sT\qproj_n\tN_{12} +
  \frac{1}{n(\alpha_2 p - \alpha_1^2)} \tN_{12}^\sT \qproj_n^\perp \tN_{12}.
  \label{eq:psdcondition}
\end{align}

Let $\barW, W \in \reals^{3\times 3}$ be two matrices that satisfy,
for $a,b\in \{0,1,2\}$:
\begin{align}
  \barW_{ab} &= \norm{\proj_a \E\{\tN_{22}\}\proj_b}_2 \\
  W_{ab} &\ge \norm{\proj_a (\tN_{22} - \E\{\tN_{22}\}) \proj_b }_2 +
  \frac{2 }{\alpha_1} \norm{\qproj_n^\perp \tN_{12}\proj_a}_2
  \norm{\qproj_n^\perp \tN_{12}\proj_b}_2 \nonumber \\&\quad+
  \frac{1}{n(\alpha_2 p - \alpha_1^2)}
  \norm{\qproj_n \tN_{12}\proj_a}_2\norm{\qproj_n \tN_{12}\proj_b}_2.
\end{align}
By expanding the Rayleigh quotient of each term in
\myeqref{eq:psdcondition},  and noting that $\barW_{ab}=0$ for $a\neq b$,
it is straightforward to see that \myeqref{eq:psdcondition} holds if
\begin{align}
  \alpha_2 p - \alpha_1^2 &\ge 0\, ,\\
  \barW&\mge W\, .
\end{align}
The first condition correspond to assumption
(\ref{eq:Assumption2BigPropo}).
For the second one, we develop explicit expressions of $\barW$, $W$ as
follows.
For $\barW$, we use Proposition \ref{prop:EM22eig}, that yields
immediately $\barW_{a,b} = 0$ for $a\neq b$ as claimed, and $\barW_{0,0}$,
$\barW_{1,1}$, $\barW_{2,2}$ as in Eqs.~(\ref{eq:barW00}),
(\ref{eq:barW11}),  (\ref{eq:barW22}).

In order to develop expressions for  $W$ we note that it is sufficient
to guarantee
\begin{align}
  W_{ab} &\ge \norm{\proj_a (\tN_{22}- \E\{\tN_{22}\})\proj_b}_2 \nonumber \\
  &\quad+ \frac{2}{\alpha_1} \left( \norm{\qproj_n^\perp\E\{\tN_{12}\}\proj_a}_2
  + \norm{\tN_{12} - \E\{\tN_{12}\}}_2\right) 
\left( \norm{\qproj_n^\perp\E\{\tN_{12}\}\proj_b}_2
+ \norm{\tN_{12} - \E\{\tN_{12}\}}_2\right) \nonumber \\
  &\quad+ \frac{1}{n(\alpha_2 p - \alpha_1^2)} \left(
  \norm{\qproj_n\E\{\tN_{12}\}\proj_a}_2
  + \norm{\tN_{12} - \E\{\tN_{12}\}}_2\right) 
\left( \norm{\qproj_n\E\{\tN_{12}\}\proj_b}_2
+ \norm{\tN_{12} - \E\{\tN_{12}\}}_2\right).
\end{align}
Using  the upper bounds 
in Propositions \ref{prop:EM12eig}, \ref{prop:deviationM22}, \ref{prop:deviationM12}
we obtain the expressions in Eqs.~(\ref{eq:W00}) to (\ref{eq:W22}). This
completes the proof.

\bibliographystyle{amsalpha}
\bibliography{all-bibliography}

\newcommand{\etalchar}[1]{$^{#1}$}
\providecommand{\bysame}{\leavevmode\hbox to3em{\hrulefill}\thinspace}
\providecommand{\MR}{\relax\ifhmode\unskip\space\fi MR }
% \MRhref is called by the amsart/book/proc definition of \MR.
\providecommand{\MRhref}[2]{%
  \href{http://www.ams.org/mathscinet-getitem?mr=#1}{#2}
}
\providecommand{\href}[2]{#2}
\begin{thebibliography}{SWPN09}

\bibitem[AKS98]{alon1998finding}
Noga Alon, Michael Krivelevich, and Benny Sudakov, \emph{Finding a large hidden
  clique in a random graph}, Proceedings of the ninth annual ACM-SIAM symposium
  on Discrete algorithms, Society for Industrial and Applied Mathematics, 1998,
  pp.~594--598.

\bibitem[AV11]{ames2011nuclear}
Brendan~P.W. Ames and Stephen~A. Vavasis, \emph{Nuclear norm minimization for
  the planted clique and biclique problems}, Mathematical programming
  \textbf{129} (2011), no.~1, 69--89.

\bibitem[Bar14]{BarakLectureNotes}
Boaz Barak, \emph{{Sums of Squares upper bounds, lower bounds, and open
  questions (Lecture notes, Fall 2014)}}, {\sf http://www.boazbarak.org/sos/},
  2014.

\bibitem[BR13]{berthet2013complexity}
Quentin Berthet and Philippe Rigollet, \emph{Complexity theoretic lower bounds
  for sparse principal component detection}, Conference on Learning Theory,
  2013, pp.~1046--1066.

\bibitem[BS14]{barak2014sum}
Boaz Barak and David Steurer, \emph{Sum-of-squares proofs and the quest toward
  optimal algorithms}, {\sf arXiv:1404.5236} (2014).

\bibitem[CLR15]{cai2015computational}
T~Tony Cai, Tengyuan Liang, and Alexander Rakhlin, \emph{Computational and
  statistical boundaries for submatrix localization in a large noisy matrix},
  {\sf arXiv:1502.01988} (2015).

\bibitem[CX14]{chen2014statistical}
Yudong Chen and Jiaming Xu, \emph{Statistical-computational tradeoffs in
  planted problems and submatrix localization with a growing number of clusters
  and submatrices}, {\sf arXiv:1402.1267} (2014).

\bibitem[DGGP11]{dekel2011finding}
Yael Dekel, Ori Gurel-Gurevich, and Yuval Peres, \emph{Finding hidden cliques
  in linear time with high probability.}, ANALCO, SIAM, 2011, pp.~67--75.

\bibitem[DM14]{deshpande2014finding}
Yash Deshpande and Andrea Montanari, \emph{Finding hidden cliques of size
  $\sqrt{N/e}$ in nearly linear time}, Foundations of Computational Mathematics
  (2014), 1--60.

\bibitem[FGR{\etalchar{+}}12]{feldman2012statistical}
Vitaly Feldman, Elena Grigorescu, Lev Reyzin, Santosh Vempala, and Ying Xiao,
  \emph{Statistical algorithms and a lower bound for planted clique}, {\sf
  arXiv:1201.1214} (2012).

\bibitem[FK81]{furedi1981eigenvalues}
Zolt{\'a}n F{\"u}redi and J{\'a}nos Koml{\'o}s, \emph{The eigenvalues of random
  symmetric matrices}, Combinatorica \textbf{1} (1981), no.~3, 233--241.

\bibitem[FK00]{feige2000finding}
Uriel Feige and Robert Krauthgamer, \emph{Finding and certifying a large hidden
  clique in a semirandom graph}, Random Structures and Algorithms \textbf{16}
  (2000), no.~2, 195--208.

\bibitem[FR10]{feige2010finding}
Uriel Feige and Dorit Ron, \emph{Finding hidden cliques in linear time}, DMTCS
  Proceedings (2010), no.~01, 189--204.

\bibitem[GM75]{grimmett1975colouring}
Geoffrey~R Grimmett and Colin~JH McDiarmid, \emph{On colouring random graphs},
  Mathematical Proceedings of the Cambridge Philosophical Society, vol.~77,
  Cambridge Univ Press, 1975, pp.~313--324.

\bibitem[Has96]{hastad1996clique}
Johan Hastad, \emph{{Clique is hard to approximate within $n^{1-\epsilon}$}},
  Foundations of Computer Science, 1996. Proceedings., 37th Annual Symposium
  on, IEEE, 1996, pp.~627--636.

\bibitem[HWX14]{hajek2014computational}
Bruce Hajek, Yihong Wu, and Jiaming Xu, \emph{Computational lower bounds for
  community detection on random graphs}, arXiv preprint arXiv:1406.6625 (2014).

\bibitem[Jer92]{jerrum1992large}
Mark Jerrum, \emph{{Large cliques elude the Metropolis process}}, Random
  Structures \& Algorithms \textbf{3} (1992), no.~4, 347--359.

\bibitem[JL09]{johnstone2009consistency}
Iain~M Johnstone and Arthur~Yu Lu, \emph{On consistency and sparsity for
  principal components analysis in high dimensions}, Journal of the American
  Statistical Association \textbf{104} (2009), no.~486.

\bibitem[Kar72]{karp1972reducibility}
Richard~M. Karp, \emph{Reducibility among combinatorial problems}, Complexity
  of Computer Computations (R.~E. Miller and J.~W. Thatcher, eds.), Plenum,
  1972.

\bibitem[Kho01]{khot2001improved}
Subhash Khot, \emph{Improved inapproximability results for maxclique, chromatic
  number and approximate graph coloring}, Foundations of Computer Science,
  2001. Proceedings. 42nd IEEE Symposium on, IEEE, 2001, pp.~600--609.

\bibitem[Las01]{lasserre2001global}
Jean~B. Lasserre, \emph{Global optimization with polynomials and the problem of
  moments}, SIAM Journal on Optimization \textbf{11} (2001), no.~3, 796--817.

\bibitem[MW13a]{ma2013computational}
Zongming Ma and Yihong Wu, \emph{Computational barriers in minimax submatrix
  detection}, {\sf arXiv:1309.5914} (2013).

\bibitem[MW13b]{meka2013association}
Raghu Meka and Avi Wigderson, \emph{Association schemes, non-commutative
  polynomial concentration, and sum-of-squares lower bounds for planted
  clique.}, Electronic Colloquium on Computational Complexity (ECCC), vol.~20,
  2013, p.~105.

\bibitem[OJF{\etalchar{+}}12]{oymak2012simultaneously}
Samet Oymak, Amin Jalali, Maryam Fazel, Yonina~C Eldar, and Babak Hassibi,
  \emph{Simultaneously structured models with application to sparse and
  low-rank matrices}, {\sf arXiv:1212.3753} (2012).

\bibitem[Par03]{parrilo2003semidefinite}
Pablo~A. Parrilo, \emph{Semidefinite programming relaxations for semialgebraic
  problems}, Mathematical programming \textbf{96} (2003), no.~2, 293--320.

\bibitem[Ser77]{serre1977linear}
Jean-Pierre Serre, \emph{Linear representations of finite groups}, Graduate
  Texts in Mathematics \textbf{42} (1977).

\bibitem[Sho87]{shor1987class}
NZ~Shor, \emph{Class of global minimum bounds of polynomial functions},
  Cybernetics and Systems Analysis \textbf{23} (1987), no.~6, 731--734.

\bibitem[SWPN09]{shabalin2009finding}
Andrey~A Shabalin, Victor~J Weigman, Charles~M Perou, and Andrew~B Nobel,
  \emph{Finding large average submatrices in high dimensional data}, The Annals
  of Applied Statistics (2009), 985--1012.

\bibitem[Tul09]{tulsiani2009csp}
Madhur Tulsiani, \emph{{CSP gaps and reductions in the Lasserre hierarchy}},
  Proceedings of the forty-first annual ACM symposium on Theory of computing,
  ACM, 2009, pp.~303--312.

\end{thebibliography}

\end{document}